\title{Hardness of embedding simplicial complexes in $\R^d$
}

\newcommand{\cmt}[1]{
}
\def\kamsymb{{\rm a}}
\def\itisymb{{\rm b}}
\def\ethsymb{{\rm c}}
\def\snfsymb{{\rm d}}

\author{
{\sc Ji\v{r}\'{\i} Matou\v{s}ek}$^{\kamsymb,\itisymb, \ethsymb}$
\and 
{\sc Martin Tancer}$^{\kamsymb,\itisymb,\ethsymb}$
\and
{\sc Uli Wagner}$^{\ethsymb,\, \snfsymb}$
}

\documentclass[11pt,a4paper]{article}
\usepackage{a4wide}

\usepackage{latexsym}
\usepackage{amssymb,amsmath,amsthm}
\usepackage{bbm}
\usepackage{epsfig,graphicx}
\usepackage{url}

\newtheorem{theorem}{Theorem}[section]  %
\newtheorem{obs}[theorem]{Observation}
\newtheorem{lemma}[theorem]{Lemma}

\newtheorem{corol}[theorem]{Corollary}

\newcommand{\heading}[1]{\vspace{1ex}\par\noindent{\bf #1}}

\long\def\onefigure#1#2{
\begin{figure*}[tbp]
\begin{center}
#1
\end{center}
\caption{#2}
\end{figure*}
} 

\newcommand{\lipefig}[2]  
{\onefigure{\mbox{\includegraphics{#1}}}{\label{f:#1} #2} }

\makeatletter
\newcommand{\ProofEndBox}{{\ifhmode\unskip\nobreak\hfil\penalty50 \else
          \leavevmode\fi\quad\vadjust{}\nobreak\hfill$\Box$
            \finalhyphendemerits=0 \par}}%
\makeatother

\newcommand{\proofend}{\ProofEndBox\smallskip}

\newcommand{\NP}{\textup{NP}}

\newcommand{\R}{\mathbbm{R}}
\newcommand{\Z}{\mathbbm{Z}}
\newcommand{\N}{\mathbbm{N}}

\newcommand\eps{\varepsilon}
\def\:{\colon}

\newcommand\bd{\partial}
\newcommand{\polyh}[1]{|#1|}
\newcommand{\alterdef}[1]{\left\{ \begin{array}{ll}
#1 \end{array}  \right. }

\newcommand{\vko}{\mathfrak{o}}

\DeclareMathOperator{\dist}{dist}

\DeclareMathOperator{\interior}{int}

\newcommand\EMBED[2]{{\rm EMBED$_{#1\to#2}$}}
\newcommand\CG{G}
\newcommand\TG{X}

\begin{document}
\maketitle
{\renewcommand\thefootnote{\kamsymb}
\footnotetext{Department of Applied Mathematics,
Charles University, Malostransk\'{e} n\'{a}m. 25,
118~00~~Praha~1,  Czech Republic}
}
{\renewcommand\thefootnote{\itisymb}
\footnotetext{Institute of Theoretical Computer Science (ITI),
Charles University, Malostransk\'{e} n\'{a}m. 25,
118~00~~Praha~1,  Czech Republic}
}
{\renewcommand\thefootnote{\ethsymb}
\footnotetext{Institute of  Theoretical Computer Science,
ETH Zurich, 8092 Zurich, Switzerland}
}
{\renewcommand\thefootnote{\snfsymb}
\footnotetext{Research supported by the Swiss National Science Foundation (SNF Project 200021-116741)}
}
\begin{abstract}
 Let \EMBED kd be the following algorithmic
problem: Given a finite simplicial complex $K$ of dimension
at most $k$,
does there exist a (piecewise linear) embedding of $K$ into $\R^d$?
Known results easily imply polynomiality of \EMBED k2
($k=1,2$; the case $k=1$, $d=2$ is graph planarity)
and of \EMBED k{2k} for all $k\ge 3$. 

We show that
the celebrated result of Novikov on the algorithmic
unsolvability of recognizing the $5$-sphere
implies that \EMBED{d}{d} and \EMBED{(d-1)}{d} are undecidable for 
each $d\ge 5$.
Our main result is \NP-hardness of \EMBED{2}{4} and, more generally,
of \EMBED{k}{d}
for all $k,d$ with $d\ge 4$ and $d\ge k\ge (2d-2)/3$.
These dimensions fall
outside the \emph{metastable range} of 
a theorem of Haefliger and Weber, which characterizes
embeddability using the \emph{deleted product obstruction}. Our reductions are based on examples, due to
Segal, Spie\.z, Freedman, Krushkal, Teichner, and Skopenkov,
showing that outside the metastable range
the deleted product obstruction is not sufficient to 
characterize embeddability.
\end{abstract}

\section{Introduction}\label{s:intro}

Does a given (finite) simplicial complex\footnote{
We assume that the reader is somewhat familiar with
basic notions of combinatorial
topology (introductory chapters of books like
\cite{Munkres,Hatcher,Mat-top} should provide a sufficient
background). Terminology and basic facts 
concerning simplicial complexes will be recalled
in Section~\ref{s:pl-prelim}.
Later on, in some of the proofs, we will need other, slightly more
advanced topological notions and results, which would take
too much space to
define properly. We hope that the main ideas can be
followed even when such things are skipped.}
 $K$ of dimension at most $k$
admit an embedding into $\R^d$? We consider the computational
complexity\footnote{For basic definitions from computational complexity,
such as polynomial-time algorithm,
NP-hardness, reduction, or 3-SAT the reader can refer
to any introductory textbook on algorithms.
Very recent textbooks were written 
 by Arora and Barak \cite{AroraBarak}
and by Goldreich \cite{Goldreich}.
We can also highly recommend Wigderson's essay
\cite{Wigderson}.
}
 of this question, regarding $k$ and $d$ as fixed integers. To our surprise, 
this question has apparently not been explicitly addressed before (with the exception of $k=1$, $d=2$
which is graph planarity), as far we could find.
Besides its intrinsic interest for the theory of computing, an algorithmic view of a 
classical subject such as embeddability may lead to new questions and also to a 
better understanding of known results. For example, computation complexity can be 
seen as a concrete ``measuring rod'' that allows one to compare
the ``relative strength'' of various embeddability criteria, respectively of 
examples showing the necessity of dimension restrictions in the
criteria. Moreover, hardness results provide concrete  
evidence that for a certain range of the parameters (outside the so-called 
metastable range), no simple structural characterization of embeddability (
such as Kuratowski's forbidden minor criterion for graph planarity) is to be 
expected.

For algorithmic embeddability problems, we consider
\emph{piecewise linear} (PL) embeddings. Let us remark that
there are at least two other natural notions of embeddings
of simplicial complexes in $\R^d$: 
\emph{linear embeddings} (also called \emph{geometric realizations}),
which are more restricted than PL embeddings,
and \emph{topological embeddings}, which give us more freedom
than PL embeddings. We will recall the definitions in
Section~\ref{s:pl-prelim}; here we quickly illustrate the differences
with a familiar example: embeddings of $1$-dimensional simplicial
 complexes, a.k.a. simple graphs, into $\R^2$.
For a topological embedding,  the image of each edge can
be an arbitrary (curved) Jordan arc, for a PL embedding it
has to be a polygonal arc (made of finitely many straight
segments), and for a linear embedding, it must be a single
straight segment. For this particular case ($k=1$, $d=2$),
all three notions happen to give the same
class of embeddable complexes, namely, all planar graphs
(by F\'ary's theorem). For higher dimensions there
are significant differences, though, which we also
discuss in Section~\ref{s:pl-prelim}.

Here we are interested mainly in embeddability 
in the topological sense (as
 opposed to linear embeddability,
which is a much more geometric problem and one with
a very different flavor), but since it seems problematic
to deal with arbitrary topological embeddings effectively,
we stick to PL embeddings, which can easily be
represented in a computer.

We thus introduce the decision problem
\EMBED kd, whose input is a simplicial complex
$K$ of dimension at most $k$, and where the output
should be YES or NO depending on whether $K$
admits a PL embedding into $\R^d$.

We assume $k\le d$, since a $k$-simplex cannot be 
embedded in $\R^{k-1}$.
For $d\ge 2k+1$ the problem becomes trivial, 
since it is well known that 
\emph{every} finite $k$-dimensional simplicial
complex embeds in $\R^{2k+1}$, even linearly
(this result goes back to Menger).
In all other cases, i.e., $k\le d\le 2k$, there
are both YES and NO instances;
for the NO instances one can use, e.g.,
examples of $k$-dimensional complexes not embeddable
in $\R^{2k}$ due to
Van Kampen \cite{vanKampen} and Flores \cite{Flores}.

Let us also note that the complexity of this
 problem is monotone in $k$
by definition, since an algorithm for \EMBED kd also
solves \EMBED {k'}d for all $k'\le k$.

\heading{Tractable cases. } It is well known that
\EMBED 12 (graph planarity) is linear-time solvable \cite{HopcroftTarjan:EfficientPlanarityTesting-1974}.
Based on planarity algorithms and on a characterization
of complexes embeddable in $\R^2$ due to Halin and Jung
\cite{HalinJung}, it is not hard to come up with
a linear-time decision algorithm for \EMBED22.
Since we do not know of a reference, we
outline such an algorithm in Appendix~\ref{s:22e}. 
Daniel Kr\'al' (personal communication) has independently 
devised a linear-time algorithm that actually produces an embedding.

There are many problems in computational topology
that are easy for low dimensions (say up to dimension $2$
or $3$) and become intractable from some dimension on
(say $4$ or $5$); we will mention some of them later.
For the embeddability problem, the situation is subtler,
since there are tractable cases in arbitrarily high dimensions,
namely, \EMBED k{2k} for every $k\ge 3$.

The algorithm is based on ideas of Van Kampen \cite{vanKampen},
which were made precise by Shapiro \cite{Shapiro-realiz2d} and
independently by Wu \cite{Wu-realiz2d}. 
Since we are
not aware of any treatment in an algorithmic context,
and since some of the published descriptions
have a small flaw (a sign error) and others use
a somewhat advanced language of algebraic topology,
we give a self-contained elementary
presentation of the algorithm (but
not a proof of correctness) in Appendix~\ref{s:vkamp}.

\heading{Hardness. }
According to a celebrated result of Novikov (\cite{volodin-al};
also see, e.g., \cite{nabutovsky-einstein} for an exposition),
the following problem is algorithmically unsolvable: 
Given a $d$-dimensional simplicial complex, $d\ge 5$, 
decide whether it is homeomorphic to $S^d$, the
$d$-dimensional sphere. By a simple reduction we
obtain the following result:

\begin{theorem}\label{t:novikov}
\EMBED{(d-1)}d \textup{(}and hence also \EMBED{d}{d}\textup{)} is algorithmically undecidable
for every $d\geq 5$.
\end{theorem}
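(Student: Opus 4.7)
The plan is to reduce Novikov's $S^d$-recognition problem to \EMBED{(d-1)}{d}. I take Novikov's theorem in the form: for $d\ge 5$, it is algorithmically undecidable whether a given combinatorial closed PL $d$-manifold $K$ is PL-homeomorphic to $S^d$. The reduction will take such a $K$ and output its $(d-1)$-skeleton $L:=K^{(d-1)}$, a $(d-1)$-dimensional simplicial complex computable in polynomial time from $K$. I claim that $L$ admits a PL embedding into $\R^d$ if and only if $K\cong S^d$, which settles the theorem.

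For the easy direction, if $K\cong S^d$, fix any $d$-simplex $\sigma\in K$; then $K\setminus\mathrm{int}(\sigma)\cong B^d$ embeds in $\R^d$ as the standard ball. Since $L\subset K\setminus\mathrm{int}(\sigma)$ as a subcomplex, the embedding restricts to give $L\hookrightarrow\R^d$.

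For the converse, suppose $f\:L\hookrightarrow\R^d\subset S^d$ is a PL embedding. For each $d$-simplex $\tau\in K$, $f(\partial\tau)$ is a PL $(d-1)$-sphere in $S^d$ and, by the PL Schoenflies theorem (valid for $d\ge 5$), bounds a PL $d$-ball on either side. Because $K$ is a closed $d$-manifold, every $(d-1)$-simplex of $L$ is a face of exactly two $d$-simplices of $K$, giving a coherent local 2-sided structure along $f(L)$: near each $(d-1)$-simplex $\alpha$ of $L$, the two sides of $f(\alpha)$ in $\R^d$ are labeled by the pair of adjacent $d$-simplices of $K$. An Alexander-duality calculation $\tilde H_i(S^d\setminus f(L))\cong\tilde H^{d-1-i}(L)$, combined with this local side data, identifies the components of $S^d\setminus f(L)$ bijectively with the $d$-simplices of $K$ and shows each is a PL open $d$-ball. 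Filling these balls in extends $f$ to a PL-homeomorphism $\tilde f\:K\to S^d$, so $K\cong S^d$.

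The principal technical obstacle is the converse direction, specifically the global assembly step: PL Schoenflies provides a ball for each individual sphere $f(\partial\tau)$, but one has to verify that the sides chosen from the local data glue consistently along shared $(d-1)$-faces and together tile $S^d$ without overlap. Local 2-sidedness handles the gluing, Alexander duality controls the count of complementary components, and the PL Poincar\'e theorem in dimensions $d\ge 5$ rules out the potentially bad case of a homology-sphere $K$ not equal to the sphere (since in those dimensions a PL homology $d$-sphere is automatically PL-homeomorphic to $S^d$). Finally, undecidability of \EMBED{d}{d} follows from that of \EMBED{(d-1)}{d} by the monotonicity of the embeddability problem in the input dimension $k$ noted in the introduction.
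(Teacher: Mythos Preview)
Your reduction---pass to the $(d-1)$-skeleton $L$ of the candidate $K$ and show $L\hookrightarrow\R^d$ iff $K$ is the sphere---is exactly the paper's strategy, and your forward direction is fine. The converse direction as written, however, has two real problems.

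The more serious one is the parenthetical claim that ``in those dimensions a PL homology $d$-sphere is automatically PL-homeomorphic to $S^d$.'' This is false: in every dimension $d\ge 3$ there exist PL homology $d$-spheres with nontrivial fundamental group. The PL Poincar\'e theorem applies to \emph{homotopy} spheres, not homology spheres. Indeed, if your claim held, sphere recognition would be decidable simply by computing homology, and there would be nothing left to prove. Since the non-sphere instances in Novikov's construction are precisely homology spheres with $\pi_1\ne 1$, this is not a side issue you can wave away.

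The second issue is the appeal to ``the PL Schoenflies theorem (valid for $d\ge 5$)'' producing a \emph{PL} $d$-ball on each side of $f(\partial\tau)$. The paper explicitly flags (just after Theorem~\ref{t:weakPLScho}) that the strong PL Schoenflies statement is, to the authors' knowledge, open for every $d\ge 4$; only the \emph{weak} form---the two complementary pieces are \emph{topological} balls---is available and used. Consequently the paper concludes only that $\Sigma$ is \emph{topologically} homeomorphic to $S^d$, not PL-homeomorphic. This suffices because the paper invokes Novikov's result in the sharper form of Theorem~\ref{t:novik-cite}: each $\Sigma_i$ is either a PL sphere or has $\pi_1\ne 1$, so exhibiting any homeomorphism to $S^d$ already decides the dichotomy.

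Finally, your ``global assembly'' via Alexander duality and local side data is left as a sketch, and for a general closed PL manifold the component count you need does not match the number of $d$-simplices. The paper avoids this entirely by working one $d$-simplex $\sigma$ at a time: Lemma~\ref{lem:skeleton-homology-sphere-minus-facet} (which uses the homology-sphere hypothesis) shows $\polyh{L}\setminus\partial\sigma$ is path-connected, so its $f$-image lies in one of the two topological balls cut out by $f(\partial\sigma)$, and the \emph{other} ball is used to extend $f$ over $\sigma$. No global component enumeration is required.
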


This has an interesting consequence, which in some sense
strengthens results of Brehm and Sarkaria 
\cite{BrehmSarkaria}:

\begin{corol}\label{c:subdiv}
 For every computable (recursive)
function $f\:\N\to\N$ and for every 
$d\ge 5$ there exist $n$ and a finite $(d-1)$-dimensional
simplicial complex $K$ with $n$ simplices
that PL-embeds in $\R^d$ but such that
no subdivision of $K$ with 
at most $f(n)$ simplices embeds linearly in~$\R^d$.
\end{corol}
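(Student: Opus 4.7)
The plan is to derive the corollary from Theorem~\ref{t:novikov} by contradiction. Suppose the conclusion fails for some computable $f\:\N\to\N$ and some $d\ge 5$, meaning that for every $n$ and every $(d-1)$-dimensional complex $K$ with $n$ simplices, if $K$ PL-embeds in $\R^d$ then some subdivision of $K$ with at most $f(n)$ simplices embeds linearly in $\R^d$. I would leverage this hypothesis to build a decision algorithm for \EMBED{(d-1)}d, contradicting the undecidability asserted by Theorem~\ref{t:novikov}.

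The algorithm, on input $K$ with $n$ simplices, first computes the bound $M:=f(n)$ and then enumerates all combinatorial types of subdivision patterns of $K$ of size at most $M$. Here a pattern consists of an abstract simplicial complex $K'$ with at most $M$ simplices together with a carrier map $K'\to K$ specifying for each simplex of $K'$ a simplex of $K$ that is to contain it; there are only finitely many such patterns and they can be listed mechanically. For each pattern, the algorithm asks the following question: do there exist real coordinates for the vertices of $K'$ that realize $K'$ simultaneously (i)~as a geometric subdivision of $K$ (each simplex of $K'$ mapped affinely into its carrier in $K$, with the images tiling $|K|$) and (ii)~as a linear embedding of $K'$ in $\R^d$ (interiors of distinct open simplices disjoint)? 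Both conditions are finite Boolean combinations of polynomial equalities and strict inequalities in the coordinates, so the question is a sentence in the existential theory of the real closed field $\R$ and hence decidable by Tarski--Seidenberg. The algorithm outputs YES as soon as some pattern succeeds, since any linearly embedded subdivision of $K$ is in particular a PL embedding of $K$; and NO if every pattern fails, since under the contradictory hypothesis a PL-embeddable $K$ would necessarily produce a successful pattern. This yields a decision procedure for \EMBED{(d-1)}d, a contradiction.

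The main obstacle is to set up the subdivision-plus-linear-embedding test as a decidable semialgebraic problem over $\R$. Once the combinatorial data of a pattern $(K',\text{carrier})$ is fixed, writing down simplex-containment, interior-disjointness, and tiling conditions as polynomial (in)equalities in the $d\cdot|\text{vert}(K')|$ coordinate unknowns is routine, and the number of patterns of size $\le M$ is a finite computable function of $M$ and $n$. Efficiency plays no role here---mere decidability suffices to clash with Theorem~\ref{t:novikov} and thus prove the corollary.
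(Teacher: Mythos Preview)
Your argument is correct and follows essentially the same route as the paper's proof: assume such a computable $f$ exists, enumerate all subdivisions of $K$ with at most $f(n)$ simplices, test each for linear embeddability in $\R^d$ via the decidability of the first-order theory of the reals, and thereby decide \EMBED{(d-1)}d in contradiction to Theorem~\ref{t:novikov}. The only cosmetic differences are that the paper cites Acquistapace et~al.\ for the effective enumeration of bounded-size subdivisions and separates that step from the linear-embeddability test, whereas you fold both the subdivision-realizability and the linear-embeddability conditions into a single semialgebraic decision problem; one small caution is that the ``tiling'' part of your condition~(i) is most naturally a universal statement, so you are really invoking Tarski--Seidenberg for the full first-order theory (or, equivalently, expressing coverage via a volume identity), not just the existential fragment---this does not affect correctness, only the phrasing.
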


Our main result is hardness for cases where $d\geq 4$ and
$k$ is larger than roughly $\frac{2}{3} d$.

\begin{theorem}\label{t:4hard} 
\EMBED kd is \NP-hard for every
pair $(k,d)$ with $d\ge 4$ and $d\ge k\ge \frac{2d-2}{3}$.
\end{theorem}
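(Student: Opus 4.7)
The plan is to reduce from 3-SAT. Given a 3-CNF formula $\varphi$ with variables $x_1,\dots,x_n$ and clauses $c_1,\dots,c_m$, I will build, in polynomial time, a $k$-dimensional simplicial complex $K_\varphi$ that PL-embeds in $\R^d$ if and only if $\varphi$ is satisfiable. It suffices to treat the base case $(k,d)=(2,4)$ in detail and then derive the remaining pairs by dimension-shifting.

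For the base case, the crucial input is the family of examples of Segal--Spie\.z, Freedman--Krushkal--Teichner, and Skopenkov cited in the abstract: 2-complexes whose failure to embed in $\R^4$ is \emph{not} detected by the van Kampen (deleted product) obstruction but is instead governed by a finer, integer-valued linking/intersection invariant. The strategy is to exploit this finer invariant as a ``topological bit.'' For each variable $x_i$ I would build a \emph{variable gadget} $V_i$ whose PL embeddings into $\R^4$, relative to a small distinguished interface, fall into exactly two ambient-isotopy classes corresponding to the truth values of $x_i$: the two classes are distinguished by the sign of an integer linking number, and auxiliary ``sign-enforcing'' subcomplexes rule out all other values. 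For each clause $c_j=\ell_{j,1}\vee\ell_{j,2}\vee\ell_{j,3}$ I would build a \emph{clause gadget} $C_j$, attached along shared interfaces to the three variable gadgets of its literals, so arranged that extending the already-chosen embeddings of the $V_i$'s to an embedding of $C_j$ is possible precisely when at least one literal is true, i.e.\ at least one of the three relevant linking numbers has the ``correct'' sign. Taking $K_\varphi$ to be the union of the $V_i$'s and $C_j$'s along the interfaces, embeddability of $K_\varphi$ in $\R^4$ becomes equivalent to satisfiability of $\varphi$.

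To extend to general $(k,d)$, I would invoke two standard operations. First, \EMBED{k}{d} is monotone in $k$, as observed in the introduction. Second, joining with the boundary of a simplex shifts both parameters equally: $K\ast\partial\Delta^{r+1}$ is $(k+r+1)$-dimensional and PL-embeds in $\R^{d+r+1}$ iff $K$ PL-embeds in $\R^d$. Iterating the join from the base case $(2,4)$ and combining with monotonicity in $k$ yields hardness of \EMBED{k}{d} for all pairs $(k,d)$ with $d\ge 4$ and $k\ge d-2$. Any remaining pairs with $(2d-2)/3\le k<d-2$ (first appearing at $d=7$) would be handled analogously, using higher-dimensional base gadgets from the same Segal--Spie\.z/Skopenkov framework, which supplies examples outside the metastable range in each relevant codimension.

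The principal obstacle is the gadget design in the base case. One has to turn a continuous (integer-valued) linking invariant into a genuinely Boolean one via rigid ``sign-enforcing'' attachments, and one has to ensure that independently designed gadgets glued along a common interface create no hidden global obstruction and admit no ``cheating'' embedding realizing an inconsistent combination of linking values. Formalizing this---essentially a semantics-preserving separation lemma for the gadgets---is the technical heart of the argument and rests on a careful analysis of the Skopenkov-type examples and their embedding classifications. A secondary subtlety is verifying that the ``iff'' direction of the join shift holds in the precise dimensional ranges needed; for the parameters in the theorem this stays within the range where classification results for joins make the equivalence tractable.
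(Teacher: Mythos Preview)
Your plan has the right overall shape (3-SAT reduction via topological gadgets drawn from the Segal--Spie\.z/FKT circle of examples), but both the gadget architecture and the dimension-shifting step differ from the paper's, and the former has a real gap.

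The paper uses no variable gadgets. For each \emph{clause} it takes a copy of a fixed complex $G$: the 2-skeleton of a 6-simplex with three small triangular holes (``openings''), one per literal. A Van Kampen-type parity argument (Lemma~\ref{l:cgadget}(i)) shows that in \emph{every} generic PL embedding of $G$ in $\R^4$, at least one opening has \emph{odd} mod-2 linking number with its complementary 2-sphere; this is the ``at least one literal is satisfied'' constraint. Consistency across clauses is enforced not by variables but by \emph{conflict gadgets}: for each pair of conflicting literals (same variable, opposite signs) one attaches a squeezed-torus complex $X$ from \cite{FKT} joining the two openings, and Lemma~\ref{l:confl-gadget}(i) shows $X$ cannot be embedded with \emph{both} endpoints oddly linked to their complementary spheres. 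Unsatisfiability forces two conflicting openings to be simultaneously ``occupied,'' contradicting the conflict lemma; satisfiability is witnessed by an explicit construction. The key point is that the paper works entirely with mod-2 linking (odd versus even) and never needs to pin a linking number to a specific signed value. Your plan to build a variable gadget whose embeddings fall into exactly two classes distinguished by the \emph{sign} of an integer invariant, with auxiliary pieces ``ruling out all other values,'' is precisely the step you flag as unresolved---and the paper's design avoids it altogether.

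For higher $(k,d)$, the paper does \emph{not} use your join operation $K\mapsto K*\partial\Delta^{r+1}$. It constructs higher-dimensional analogues of both gadgets directly: the clause gadget $G(k,\ell)$ is the Segal--Spie\.z complex, and the conflict gadget $X(\ell)$ is built by attaching $B^{2\ell}$ to a double $\ell$-lollipop via (a squeezed version of) the Whitehead product $S^{2\ell-1}\to S^\ell\vee S^\ell$; the non-embeddability half comes from \cite{segal-skopenkov-spiez}. Monotonicity in $k$ is then used exactly as you say, to reduce to $k=\lceil(2d-2)/3\rceil$. Your join claim---that $K*S^r$ PL-embeds in $\R^{d+r+1}$ iff $K$ PL-embeds in $\R^d$---is immediate in the forward direction, but the converse is not established and is not in general routine; the paper's direct construction sidesteps this issue.
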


We prove a special case of this theorem, \NP-hardness of \EMBED24,
in Section~\ref{s:hard24}; the proof is somewhat more intuitive than
for the general case and it contains most of the ideas.
All the remaining cases are proved in Section~\ref{s:nphard}.

Let us briefly mention where the 
dimension restriction $k\ge (2d-2)/3$ comes from.
There is a certain necessary condition for embeddability
of a simplicial complex into $\R^d$, called
the \emph{deleted product obstruction}.
A celebrated theorem of Haefliger and Weber, which is
a far-reaching generalization of the ideas of Van Kampen mentioned above,
asserts that this condition is also \emph{sufficient}
provided that $k\le \frac23d-1$ (these $k$ are said to
lie in the \emph{metastable range}). The condition
on $k$ in Theorem~\ref{t:4hard} is exactly that $k$ must be
outside of the metastable range 
(we refer to Appendix~\ref{s:ha-we} for a brief discussion
and references).

There are examples showing
that the restriction to the metastable range 
in the Haefliger--Weber theorem is indeed necessary,
in the sense that whenever $d\ge 3$ and $d\ge k>(2d-3)/3$,
there are $k$-dimensional complexes that cannot be embedded
into $\R^d$ but the deleted product obstruction fails to detect this.
We use constructions of this kind,  namely, examples due to
Segal and Spie{\.z} \cite{segal-spiez},
Freedman, Krushkal, and Teichner \cite{FKT}, and
 Segal, Skopenkov, and Spie{\.z} \cite{segal-skopenkov-spiez},
as the main ingredient in our  proof of Theorem~\ref{t:4hard}.

\newcommand\unk{{\bf?}}
\newcommand\nee{~~~}
\newcommand\ja{{$+$}}
\newcommand\und{{UND}}
\newcommand\NPh{{NPh}}
\newcommand\pol{{P}}
\begin{table*}
\begin{center}
\begin{tabular}{c|ccccccccccccc}
&\multicolumn{10}{|c}{$d=$} \\
$k=$&~2~&~3~&4    &5    &6   &7   &8   &9   &10    &~11   &~12  &~13  &~14   \\
\hline
1 & \pol&\ja  &\ja  &\ja  &\ja &\ja &\ja &\ja &\ja   &\ja  &\ja &\ja &\ja   \\
2 & \pol&\unk &\NPh &\ja  &\ja &\ja &\ja &\ja &\ja   &\ja  &\ja &\ja &\ja   \\
3 &\nee&\unk &\NPh &\NPh &\pol &\ja &\ja &\ja &\ja   &\ja  &\ja &\ja &\ja   \\
4 &\nee&\nee &\NPh &\und &\NPh&\NPh&\pol &\ja &\ja   &\ja  &\ja &\ja &\ja   \\
5 &\nee&\nee &\nee &\und &\und&\NPh&\NPh&\unk&\pol   &\ja  &\ja &\ja &\ja   \\
6 &\nee&\nee &\nee &\nee &\und&\und&\NPh&\NPh&\NPh   &\unk &\pol&\ja &\ja   \\
7 &\nee&\nee &\nee &\nee &\nee&\und&\und&\NPh&\NPh   &\NPh &\unk&\unk&\pol
\end{tabular}
\end{center}
\caption{\label{kdtable}
The complexity of \EMBED kd (\pol\ $=$ polynomial-time solvable,
\und\ $=$ algorithmically undecidable,
\NPh\ $=$ \NP-hard, \ja\ $=$ always embeddable,
\unk\ $=$ no result known).}
\end{table*}

\heading{Discussion. } The current complexity
status of \EMBED kd is summarized in Table~\ref{kdtable}.
In our opinion,
the most interesting currently
open cases are $(k,d)=(2,3)$ and $(3,3)$.

These are outside the metastable range,
and it took the longest to find an example showing
that they are not characterized by the deleted product obstruction;
see \cite{Go-Sko}. That example does not seem to lend itself easily to a hardness reduction, though.

A variation on the proof of our undecidability
result (Theorem~\ref{t:novikov}) shows that
both \EMBED23 and \EMBED33 are at least as hard
as the problem of recognizing the $3$-sphere  (that is,
given a simplicial complex, decide whether it
is homeomorphic to $S^3$). 
The latter problem is in NP \cite{Ivanov:ComputationalComplexityDecisionProblems3DimensionalTopology-2008,S3inNP},
but no hardness result seems to be known.

For the remaining questionmarks in the table (with $d\ge 9$), which
all lie in the metastable range, it seems that existing
tools of algebraic topology, such as Postnikov towers
and/or suitable spectral sequences, could lead at least to decision
algorithms, or even to polynomial-time
algorithms in some cases. Here the methods of
``constructive algebraic topology'' mentioned below,
which imply, e.g., the computability of higher homotopy
groups, should be relevant. However, as is discussed,
e.g., in \cite{RomeroRubioSergeraert}, computability
issues in this area are often subtle, even for
questions considered well understood in classical
algebraic topology.
We hope to clarify these things in a future work.

Our NP-hardness results are probably not the final
word on the computational complexity of the corresponding
embeddability problems; for example, some or all of these
might turn out to be undecidable.


\heading{Related work.} Among the most important computational problems in topology are the \emph{homeomorphism problem} for manifolds, and the \emph{equivalence problem} for knots.
The first one asks if two given manifolds $M_1$ and $M_2$ (given as simplicial complexes, say) are homeomorphic. The second one asks if two given knots, i.e., \textup{PL} embeddings $f,g\: S^1 \to \R^3$, are equivalent, i.e., 
if there is a PL homeomorphism $h\:\R^3\to\R^3$ such that $f=h\circ g$. 
An important special case of the latter in the \emph{knot triviality} problem: 
Is a given knot equivalent to the trivial knot (i.e., the standard geometric
circle placed in $\R^3$)?

There is a vast amount of literature on computational problems for $3$-manifolds and knots. For instance, it is 
algorithmically decidable whether a given $3$-manifold is homeomorphic to $S^3$ \cite{Rubinstein:AlgorithmRecongnizing3Sphere-1995,Thompson:ThinPositionrecognitionProblemS3-1994}, or whether a given polygonal knot in $\R^3$ is trivial \cite{Haken:TheorieNormalflaechen-1961}. 
Indeed, both problems have recently shown to lie in \NP\ \cite{Ivanov:ComputationalComplexityDecisionProblems3DimensionalTopology-2008,S3inNP},
\cite{hass-lagarias-pippenger}. The knot equivalence problem is also algorithmically decidable
\cite{Haken:TheorieNormalflaechen-1961,Hemion:ClassificationHomeomorphisms2Manifolds-1979,Matveev:ClassificationSufficientlyLarge3Manifolds}, but nothing seems to be known about its complexity status.
We refer the reader to the above-mentioned sources and to \cite{agol-hass-thurston} for further results, background and references.

In higher dimensions, all of these problems are undecidable. Markov~\cite{Markov:InsolubilityHomeomorphy-1958} showed that the homeomorphism problem 
for $d$-manifolds is algorithmically undecidable for every $d\geq 4$.
 For $d\geq 5$, this was strengthened by Novikov 
to the undecidability of recognizing $S^d$ (or any other fixed
$d$-manifold), as was mentioned above.
Nabutovsky and Weinberger \cite{nabu-weing-knot} showed that for $d\geq 5$,
 it is algorithmically undecidable whether a given PL embedding $f\:S^{d-2} \to
 \R^d$ is equivalent to the standard embedding 
(placing $S^{d-2}$ as the ``equator'' of the unit sphere $S^{d-1}$, say).
For further undecidability results, see, 
e.g., 
\cite{NabutovskyWeinberger:AlgorithmicAspectsHomeomorphismProblem-1999} and
the survey by Soare~\cite{Soare:ComputabilityDifferentialGeometry-2004}.

Another direction of algorithmic research in topology is the
computability of \emph{homotopy groups}. While the fundamental
group $\pi_1(X)$ is well-known to be uncomputable
\cite{Markov:InsolubilityHomeomorphy-1958}, all
higher homotopy groups of a given finite simply connected simplicial
(or CW) complex are computable (Brown \cite{Brown-comput-homotopy}).
There is also a $\#P$-hardness result of Anick 
\cite{Anick-homotopyhard} for the computation of higher homotopy,
but it involves CW complexes presented in a highly compact
manner, and thus it doesn't seem to have any direct consequences
for simplicial complexes.
More recently, there appeared
several works (Sch\"on \cite{Schoen-effectivetop},
Smith \cite{smith-mstructures},
 and Rubio, Sergeraert, Dousson, and Romero, e.g.
 \cite{RomeroRubioSergeraert}) 
aiming at
making methods of algebraic topology, such
as spectral sequences, ``constructive''; 
the last of these has also resulted
in an impressive software called KENZO. 

A different line of research 
relevant to the embedding problem concerns linkless embeddings of graphs. 
Most notably, results of Robertson, Seymour, and Thomas
\cite{RobertsonSeymourThomas:SachsLinklessEmbeddingConjecture-1995}
on linkless embeddings provide an interesting \emph{sufficient}
condition for embeddability of a $2$-dimensional complex
in $\R^3$, and they can thus be regarded as one of the known
few positive
results concerning \EMBED23. We will briefly discuss this
in Section~\ref{s:linkless}.

\section{Preliminaries on PL topology}\label{s:pl-prelim}

Here we review definitions and facts related to piecewise linear
(PL) embeddings. We begin with very standard things
but later on we discuss notions and results which we found
quite subtle (although they might be standard
for specialists), in an area where
 it is sometimes tempting to consider as ``obvious'' something
that is unknown or even false. Some more examples
and open problems, which are not strictly necessary
for the purposes of the present paper, but which helped
us to appreciate some of the subtleties of the embeddability
problem, are mentioned in Appendix~\ref{a:plvstop}.
For more information on PL topology, and for facts mentioned below
without proofs, we refer to Rourke and Sanderson
\cite{RourkeSanderson}, Bryant \cite{Bryant-handbook},  or
Buoncristiano \cite{fragmentsoftop}.

\heading{Simplicial complexes. } 
We formally regard a simplicial complex
as a geometric object, i.e.,
 a collection $K$ (finite in our case)
of closed simplices in some Euclidean space $\R^n$
such that if $\sigma\in K$ and $\sigma'$ is a face of $\sigma$,
then $\sigma'\in K$ as well, and if $\sigma,\tau\in K$, then
$\sigma\cap\tau\in K$, too. We write
$V(K)$ for the \emph{vertex set}, i.e.,
the set of all $0$-dimensional simplices of $K$,
and $\polyh K$ for 
the \emph{polyhedron} of $K$, i.e.,
the union of all simplices in~$K$. 
Often we do not strictly distinguish between a simplicial
complex and its polyhedron; for example, by an embedding
of $K$ in $\R^d$ we really mean an embedding of $\polyh K$
into $\R^d$.

The \emph{$k$-skeleton} of $K$ consists of all simplices
of $K$ of dimension at most $K$. A \emph{subcomplex} of $K$
is a subset $L\subseteq K$ that is a simplicial complex.
A simplicial complex $K'$ 
is a \emph{subdivision} of $K$ if $\polyh{K'}=\polyh{K}$ and
 each simplex of $K'$ is contained in some simplex of $K$.

Two simplicial complexes $K$ and $L$ are \emph{isomorphic}
if there is a face-preserving bijection $\varphi\:V(K)\to V(L)$ 
of the vertex sets (that is, $F\subseteq V(K)$ is the vertex set
of a simplex of $K$ iff $\varphi(F)$ is the vertex set
of a simplex of $L$). Isomorphic complexes have homeomorphic
polyhedra. Up to isomorphism, a simplicial complex $K$
can be described purely combinatorially, by specifying
which subsets of $V(K)$ form vertex sets of simplices of $K$.

We assume that the input to the embeddability problem is given
in this form, i.e., as an abstract finite set system.



\heading{Linear and PL mappings of simplicial complexes. }
A \emph{linear} mapping of a simplicial complex $K$ into $\R^d$
is a mapping $f\:\polyh K\to\R^d$ that is linear
on each simplex.  More explicitly, each point $x\in\polyh K$ is a convex
combination $t_0v_0+t_1v_1+\cdots+t_s v_s$,
where $\{v_0,v_1,\ldots,v_s\}$ is the vertex set of some
simplex $\sigma\in K$ and $t_0,\ldots,t_s$ are nonnegative
reals adding up to $1$. Then we have
$f(x)=t_0f(v_0)+t_1f(v_1)+\cdots+t_s f(v_s)$.

A \emph{PL mapping} of $K$ into $\R^d$ is a linear mapping of
some subdivision $K'$ of $K$ into $\R^d$.

\heading{Embeddings.}
A general \emph{topological embedding} of $K$ into $\R^d$
is any continuous mapping $f\:\polyh K\to\R^d$ that is a homeomorphism
of $\polyh K$ with $f(\polyh K)$. Since we only consider finite simplicial 
complexes, this is equivalent to requiring that $f$ be injective.

By contrast, for a \emph{PL embedding} we require additionally that $f$ be PL,
and for a \emph{linear embedding} we are even more restrictive and insist that 
$f$ be (simplexwise) linear.

\heading{PL embeddings versus linear embeddings.}
In contrast to planarity of graphs, linear and PL embeddability 
do not always coincide in higher dimensions. Brehm~\cite{Brehm:NonpolyhedralMoebiusStrip-1983} constructed a triangulation of the M\"obius
strip that does not admit a linear embedding into $\R^3$. Using methods from the theory of oriented matroids, Bokowski and
Guedes de Oliveira~\cite{BokowskiGuedesDeOliveira-GenerationOrientedMatoids-2000} showed that for any $g\geq 6$,
there is a triangulation of the orientable surface of genus $g$ that does not admit a linear embedding into $\R^3$.
In higher dimensions, Brehm and Sarkaria \cite{BrehmSarkaria}
showed that for every $k\geq 2$, 
and every $d$, $k+1\le d\le 2k$,
there is a $k$-dimensional simplicial complex $K$
that PL embeds into $\R^{d}$ but does not admit a linear embedding. Moreover, for any given $r\geq 0$, there is such a $K$ such that even the $r$-fold barycentric subdivision $K^{(r)}$ is not linearly embeddable into~$\R^d$. Our Corollary~\ref{c:subdiv}
is another result of this kind.

On the algorithmic side, the problem of \emph{linear} embeddability of a given finite simplicial complex into $\R^d$ is at least
algorithmically decidable, and for $k$ and $d$ fixed,
it even belongs to \textup{PSPACE}
(since the problem can easily be formulated as the 
solvability over the reals of a system of polynomial inequalities with
integer coefficients, which lies in \textup{PSPACE}
\cite{Renegar:ComplexityFirstOrderTheoryReals-all3}).

\heading{PL structures. } Two simplicial complexes $K$ and $L$
are \emph{PL homeomorphic} if there are a subdivision $K'$ of $K$
and a subdivision $L'$ of $L$ such that $K'$ and $L'$ are
isomorphic.

Let $\Delta^d$ denote the simplicial complex consisting of
all faces of a $d$-dimensional simplex (including the simplex
itself), and let $\bd\Delta^d$ consist
of all faces of $\Delta^d$ of dimension at most $d{-}1$.
Thus, $\polyh{\Delta^d}$ is topologically $B^d$, the $d$-dimensional
ball, and $\polyh{\bd\Delta^d}$ is topologically $S^{d-1}$.

A $d$-dimensional \emph{PL ball} is a simplicial complex 
PL homeomorphic to $\Delta^d$, and a $d$-dimensional
\emph{PL sphere} is a simplicial complex PL homeomorphic to $\bd\Delta^{d+1}$.
Let us mention that a (finite) simplicial complex $K$ is PL embeddable
in $\R^d$ iff it is PL homeomorphic to a subcomplex of 
a $d$-dimensional PL ball (and similarly, $K$
is PL embeddable in $\polyh{\bd\Delta^{d+1}}$ iff
it is PL homeomorphic to a subcomplex of a $d$-dimensional
PL sphere).

One of the great surprises in higher-dimensional topology
was the discovery that simplicial complexes with homeomorphic
polyhedra need not be PL homeomorphic (the failure of
the ``Hauptvermutung''). In particular, there exist
\emph{non-PL spheres}, i.e., simplicial complexes
homeomorphic to a sphere that fail to be PL spheres.
More precisely, every simplicial complex homeomorphic
to $S^1$, $S^2$, $S^3$, and $S^4$ is a PL sphere,\footnote{The proof 
for $S^4$ relies on the recent solution of the Poincar\'e conjecture
by Perelman.} but there
are examples of non-PL spheres of dimensions $5$ and higher
(e.g., the double suspension of the 
\emph{Poincar\'e homology $3$-sphere}).


\heading{A weak PL Schoenflies theorem. }
The well-known Jordan curve theorem states that if  $S^1$
is embedded (topologically) in $\R^2$, the complement of the image
has exactly two components. Equivalently, but slightly
more conveniently, if  $S^1$ is embedded in $S^2$,
the complement has two components. The \emph{Schoenflies theorem}
asserts that in the latter setting, the closure of each of the components
is homeomorphic to the disk $B^2$.

While the Jordan curve theorem generalizes to an arbitrary dimension
(if $S^{d-1}$ is topologically embedded in $S^{d}$, the 
complement has exactly two components), the Schoenflies theorem
does not. There are embeddings $h\:S^2\to S^3$ such that
the closure of one of the components of $S^3\setminus h(S^2)$
is not a ball; a well known example is the 
\emph{Alexander horned sphere}.

The Alexander horned sphere is an infinitary construction;
one needs to grow infinitely many ``horns'' from
the embedded $S^2$ to make the example work.
In higher dimensions, there are strictly finite examples,
e.g., a $5$-dimensional subcomplex $K$ of a $6$-dimensional
PL sphere $S$ such that $\polyh K$ is topologically an $S^5$
(and $K$ is a non-PL sphere), but the closure of
a component of $\polyh S\setminus \polyh K$ is not
a topological ball (see Curtis and Zeeman \cite{curtis-zeeman-schoenflies}).

Thus, one needs to put some additional conditions on
the embedding to make a ``higher-dimensional Schoenflies theorem''
work. We will need the following version, in which we
assume a $(d-1)$-dimensional PL sphere sitting in
a $d$-dimensional PL sphere.
  
\begin{theorem}[Weak \textup{PL} Schoenflies Theorem]\label{t:weakPLScho}
Let $f$ be a \textup{PL} embedding of $\partial \Delta^d$ into 
$\partial \Delta^{d+1}$.
Then the complement $\polyh{\partial \Delta^{d+1}} \setminus 
f(\polyh{\partial \Delta^d})$ 
has two components, whose closures are topological $d$-balls.
\end{theorem}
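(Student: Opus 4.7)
The plan is to reduce the statement to M.~Brown's topological Schoenflies theorem, which asserts that any \emph{locally flat} topological embedding of $S^{d-1}$ into $S^d$ separates $S^d$ into exactly two open sets whose closures are both topological $d$-balls. Granting this, it suffices to verify two things: (i)~the complement $\polyh{\partial \Delta^{d+1}} \setminus f(\polyh{\partial \Delta^d})$ has exactly two connected components, and (ii)~the image $f(\polyh{\partial \Delta^d})$ is locally flat as a topological submanifold of $\polyh{\partial \Delta^{d+1}}$.

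For (i), I would invoke the Jordan--Brouwer separation theorem, a standard consequence of Alexander duality: any topological embedding of $S^{d-1}$ into $S^d$ has complement with exactly two connected components. Since $\polyh{\partial \Delta^{d+1}}$ is homeomorphic to $S^d$ and $f(\polyh{\partial \Delta^d})$ is homeomorphic to $S^{d-1}$, this applies directly.

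For (ii), the key input is PL regular neighborhood theory. After passing to suitable common subdivisions we may assume that $f$ is simplicial and that its image $L_0$ is a full subcomplex of a triangulation $L$ of $\polyh{\partial \Delta^{d+1}}$. Since $L$ is a PL $d$-sphere and $L_0$ is a PL $(d-1)$-sphere sitting in $L$ with codimension one, standard results in PL topology (see, e.g., Rourke--Sanderson~\cite{RourkeSanderson}) furnish a regular neighborhood of $\polyh{L_0}$ in $\polyh{L}$ that is PL homeomorphic to a product $\polyh{L_0} \times [-1,1]$ (a \emph{PL bicollar}), under which $\polyh{L_0}$ corresponds to $\polyh{L_0} \times \{0\}$. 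Two-sidedness, which is what allows a bicollar rather than just a one-sided collar, is automatic because $\polyh{\partial \Delta^{d+1}}$ is simply connected for $d \geq 2$ (the case $d = 1$ being trivial: two points in a triangle clearly split it into two arcs). The existence of such a bicollar is precisely a certificate of local flatness, and Brown's theorem then applies.

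The principal obstacle I foresee lies in step~(ii): verifying the bicollar rigorously, which at each vertex of $L_0$ reduces via links to the same statement one dimension lower and thus invites a slick inductive argument, but one has to be careful to keep the two halves of the bicollar consistent across simplices. Note that one cannot hope for the \emph{strong} PL Schoenflies conclusion (that the complement closures are PL $d$-balls) in this generality, since the PL Schoenflies conjecture remains open in dimension $d=4$; this is why the theorem is stated in the weak form with only topological balls.
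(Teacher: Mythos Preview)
The paper does not prove this theorem itself; it only cites Newman and Glaser, and remarks that a ``simple, inductive proof'' will appear in Buoncristiano--Rourke. Your overall strategy---reduce to Brown's generalized Schoenflies theorem by producing a bicollar---is the standard one and is essentially what those references do.

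There is, however, a genuine subtlety in your step~(ii). The assertion that regular-neighborhood theory directly furnishes a \emph{PL} bicollar $|L_0|\times[-1,1]$ is precisely what is \emph{not} known in general: a PL bicollar would give PL local flatness of $|L_0|$ in $|L|$, and at a vertex $v\in L_0$ this says that the link pair $(\link_L(v),\link_{L_0}(v))$---a PL $(d-2)$-sphere inside a PL $(d-1)$-sphere---is PL standard. That is exactly the strong PL Schoenflies statement one dimension down, which (as the paper itself remarks immediately after the theorem) is open for every $d-1\ge 4$. So you cannot simply invoke a PL product neighborhood from Rourke--Sanderson.

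What rescues the argument is the inductive route you yourself flag as the ``principal obstacle.'' Assuming the weak PL Schoenflies theorem in dimension $d-1$, each link pair splits into two \emph{topological} $(d-1)$-balls; hence the closure of each side of $|L|\setminus|L_0|$ is a \emph{topological} $d$-manifold with boundary $|L_0|$. Brown's topological collaring theorem then supplies a topological collar on each side, whence a topological bicollar, and Brown's Schoenflies theorem finishes the induction step (the base case $d\le 2$ being classical). This is presumably the inductive argument the paper alludes to; your sketch is on the right track, but the collar you obtain is only topological, not PL, and that is all Brown's theorem needs.
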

 
For a proof of this theorem, see, e.g., 
\cite{Newman:DivisionEuclideanSpaceSpheres-1960} or \cite{Glaser:Schoenflies}. 
A simple, inductive proof is to appear in the upcoming revised edition 
of the book \cite{fragmentsoftop} by Buoncristiano and Rourke.

Let us remark that a ``strong'' PL Schoenflies theorem would claim
that under the conditions of Theorem~\ref{t:weakPLScho},
the closure of each of the components is a PL ball,
but the validity of this stronger statement is known
only for $d\le 3$, while for each $d\ge 4$ it is
(to our knowledge) an open problem.

\heading{Genericity. } First let us consider a linear mapping
$f$ of a simplicial complex $K$ into $\R^d$.
We say that $f$ is \emph{generic} if $f(V(K))$ 
is a set of distinct points in $\R^d$
in general position.

If $\sigma,\tau\in K$ are
disjoint simplices, then the intersection 
 $f(\sigma)\cap f(\tau)$  is empty for $\dim\sigma+\dim\tau<d$
and it has at most one point for $\dim\sigma+\dim\tau=d$.

A PL mapping of $K$ into $\R^d$ is generic if the corresponding linear
mapping of the subdivision $K'$ of $K$ is generic.

A PL embedding  can always
be made generic (by an arbitrarily small perturbation).

\heading{Linking and linking numbers. } 
Let $k,\ell$ be integers, and let $f\:S^k\to\R^{k+\ell+1}$
and $g\:S^\ell\to \R^{k+\ell+1}$ be PL embeddings with
$f(S^k)\cap g(S^\ell)=\emptyset$ (so here we regard $S^k$
and $S^\ell$ as PL spheres).
We will need two notions capturing how the images of $f$
and $g$ are ``linked'' (the basic example is $k=\ell=1$,
where we deal with two disjoint simple closed curves in $\R^3$).
For our purposes, we may assume that $f$ and $g$ are 
mutually generic (i.e. $f\sqcup g$, regarded
as a PL embedding of the disjoint union 
$S^k\sqcup S^\ell$ into $\R^{k+\ell+1}$, is generic).

The images $f(S^k)$ and $g(S^\ell)$ are \emph{unlinked} if
$f$ can be extended to a PL mapping $\bar f\:B^{k+1}\to \R^{k+\ell+1}$
of the $(k+1)$-dimensional ball such that $\bar f(B^{k+1})\cap
g(S^\ell)=\emptyset$. 

To define the \emph{modulo $2$ linking
number} of $f(S^k)$ and $g(S^\ell)$, we again extend
$f$ to a PL mapping $\bar f\:B^{k+1}\to \R^{k+\ell+1}$
so that $\bar f$ and $g$ are still mutually generic
(but otherwise arbitrarily).
Then the modulo $2$ linking number is the number of
intersections between $\bar f(B^{k+1})$ and
$g(S^\ell)$ modulo $2$ (it turns out that it does not
depend on the choice of $\bar f$). In the sequel,
we will use the phrase ``odd linking number''
instead of the more cumbersome ``nonzero linking number modulo $2$''
(although ``linking number'' in itself has not been properly defined).

These geometric definitions are quite intuitive.
However, alternative (equivalent in our setting
but more generally applicable) definitions
are often used, phrased in terms of homology
or mapping degree, which are in some respects easier to work
with (e.g., they show that linking is symmetric, i.e.,
$f(S^k)$ and $g(S^\ell)$ are unlinked iff
$g(S^\ell)$ and $f(S^k)$ are unlinked).

\section{Undecidability: Proof of Theorem~\ref{t:novikov}}
\label{sec:undecidability}

We begin with a statement of Novikov's result mentioned
in the introduction (undecidability
of $S^d$ recognition for $d\ge 5$) in a form convenient
for our purposes. 

\begin{theorem}[\textbf{Novikov}]\label{t:novik-cite}
 Fix $d\geq 5$. There is an effectively 
constructible sequence of simplicial complexes
$\Sigma_i$, $i\in \N$, with the following properties:
\begin{enumerate}
\item[\rm (1)] Each $\polyh{\Sigma_i}$ is a homology $d$-sphere.
\item[\rm (2)] For each $i$, either $\Sigma_i$ is a \textup{PL} $d$-sphere, or the fundamental group of $\Sigma_i$ is nontrivial (in particular, $\Sigma_i$ is not homeomorphic to the $d$-sphere).
\item[\rm (3)] There is no algorithm that decides for every given $\Sigma_i$ 
which of the two cases holds.
\end{enumerate}
\end{theorem}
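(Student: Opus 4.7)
The plan is to reduce an undecidable problem in combinatorial group theory to $d$-sphere recognition, via the classical construction of homology spheres with prescribed fundamental group.

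The starting point is a variant of the Adian--Rabin theorem (built on the Novikov--Boone unsolvability of the word problem): there is an effectively enumerable sequence of finite group presentations $P_i$ for which it is undecidable whether $P_i$ presents the trivial group, and such that each presented group $G_i$ is additionally \emph{perfect} with vanishing Schur multiplier ($H_1(G_i) = H_2(G_i) = 0$). This refinement is obtained by augmenting a generic Adian--Rabin sequence with a fixed block of auxiliary generators and relators that forces perfectness and $H_2 = 0$ without affecting whether the group collapses to the trivial group.

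Given such $P_i$, effectively build a simplicial complex $\Sigma_i$ that is a PL homology $d$-sphere with $\pi_1(\Sigma_i) \cong G_i$. This is the standard Kervaire construction: take the presentation $2$-complex $K_i$ of $P_i$, PL-embed it in $\R^{d+1}$ (possible by Menger's theorem since $d+1 \ge 6 > 2 \cdot 2 + 1$), let $N_i$ be a regular neighborhood of its image, and set $M_i := \partial N_i$. Then $M_i$ is a closed PL $d$-manifold with $\pi_1(M_i) \cong G_i$ (since $d \ge 5$, any loop in $M_i$ can be pushed off $K_i$ by general position, and $N_i$ deformation retracts onto $K_i$). Because $G_i$ is perfect with $H_2(G_i) = 0$, a finite sequence of surgeries on embedded circles and $2$-spheres in $M_i$ kills all reduced homology without altering $\pi_1$ (the surgery spheres have codimension $\ge 3$, so $\pi_1$ is preserved). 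The resulting complex $\Sigma_i$ is a PL homology $d$-sphere with $\pi_1(\Sigma_i) \cong G_i$, establishing (1), and the whole process is effective in $i$.

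For (2) and (3), apply the generalized Poincar\'e conjecture in the PL category (Smale, Stallings, Zeeman): for $d \ge 5$, any PL homology $d$-sphere with trivial fundamental group is PL-homeomorphic to $\partial \Delta^{d+1}$. Hence $\Sigma_i$ is a PL $d$-sphere iff $\pi_1(\Sigma_i) = 1$ iff $G_i$ is trivial; and if $\pi_1(\Sigma_i) \neq 1$, then $\Sigma_i$ is not even homeomorphic to $S^d$. This yields the dichotomy in (2), and combined with undecidability of triviality of $G_i$, it yields (3). The principal obstacle is coordinating the two ingredients: Adian--Rabin produces arbitrary finite presentations, whereas Kervaire's construction only applies to perfect groups with $H_2 = 0$. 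The delicate step is the group-theoretic preprocessing that uniformly and effectively enforces these homological conditions without spoiling undecidability of triviality.
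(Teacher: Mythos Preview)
The paper does not give its own proof of this theorem; it simply refers to the appendix of Nabutovsky's paper on Einstein structures. Your sketch follows essentially that standard argument---undecidability of the triviality problem for finitely presented groups (Adian--Rabin, built on Novikov--Boone), Kervaire's realization of superperfect groups as fundamental groups of PL homology $d$-spheres for $d\ge 5$, and the high-dimensional PL Poincar\'e conjecture---so there is nothing to compare against beyond the cited source, with which your outline agrees.

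One small slip in your surgery step: once $M_i=\partial N_i$ already has $\pi_1(M_i)\cong G_i$, the surgeries that kill the remaining homology are on embedded $2$-spheres (and, depending on $d$, possibly higher-dimensional spheres), not on circles. Surgery on a non-nullhomotopic circle \emph{does} change $\pi_1$; the codimension-$\ge 3$ condition only guarantees that \emph{excising} the sphere leaves $\pi_1$ intact, not that attaching the dual handle is harmless. Since $H_1(M_i)=0$ already (by perfectness of $G_i$), no circle surgeries are needed anyway, so this is cosmetic. You rightly flag the superperfectness preprocessing as the delicate point; in the literature this is carried out by a specific construction (typically producing the required presentations directly from a fixed group with unsolvable word problem, often arranging them to be balanced) rather than literally by appending a uniform block of generators and relators to an arbitrary Adian--Rabin sequence, but your summary captures the essential mechanism.
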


We refer to the appendix in \cite{nabutovsky-einstein} for a detailed proof. 

We begin the proof of Theorem~\ref{t:novikov} with the following simple
lemma.

\begin{lemma}
\label{lem:skeleton-homology-sphere-minus-facet} 
Let $\Sigma$ be a simplicial complex whose
polyhedron is a homology $d$-sphere, $d\ge 2$.
(The same proof works for any homology $d$-manifold.)
Let $K$ be the $(d-1)$-skeleton of $\Sigma$.
For every $d$-simplex $\sigma \in \Sigma$, the
set $\polyh{K} \setminus \partial\sigma$ 
is path connected (here $\partial\sigma$ is the relative boundary
of $\sigma$). 
\end{lemma}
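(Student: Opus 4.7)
My plan is to show that $\polyh{K}\setminus\partial\sigma$ is path-connected by exhibiting it as a deformation retract of the larger set $A := \polyh{\Sigma}\setminus\sigma$ (the polyhedron with the entire closed simplex $\sigma$ removed), and then establishing that $A$ itself is path-connected.

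For the deformation retract, I would argue simplex by simplex. For each $d$-simplex $\tau\neq\sigma$ of $\Sigma$, the intersection $\rho_\tau := \tau\cap\sigma$ is a common face of $\tau$ and $\sigma$ that is a proper face of each, and one checks that $\partial\tau\cap\partial\sigma = \rho_\tau$, so $\tau\setminus\sigma = \tau\setminus\rho_\tau$ and $\partial\tau\setminus\rho_\tau\subseteq\polyh{K}\setminus\partial\sigma$. The set $\tau\setminus\rho_\tau$ admits a standard strong deformation retraction onto $\partial\tau\setminus\rho_\tau$, namely radial projection away from a chosen point in the relative interior of $\rho_\tau$. On any $(d-1)$-simplex $\mu$ shared by two $d$-simplices $\tau,\tau'\neq\sigma$ the two local retractions agree (off $\partial\sigma$ each is the identity on $\mu$, and points of $\mu$ lying in $\partial\sigma$ are deleted anyway); gluing them, and extending by the identity on $\polyh{K}\setminus\partial\sigma$, yields a strong deformation retraction $A\to\polyh{K}\setminus\partial\sigma$.

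It remains to show that $A$ is path-connected. Decompose $\polyh{\Sigma}\setminus\partial\sigma = \interior(\sigma)\sqcup A$ into two clopen pieces: $\interior(\sigma)$ is open as a simplex interior and closed as $\sigma\cap(\polyh{\Sigma}\setminus\partial\sigma)$. Since $\interior(\sigma)$ is itself path-connected, the claim reduces to showing $\polyh{\Sigma}\setminus\partial\sigma$ has exactly two path-components. I would get this from Alexander duality applied to the embedded $(d-1)$-sphere $\partial\sigma$ in the closed homology $d$-manifold $\polyh{\Sigma}$, giving $\tilde H_0(\polyh{\Sigma}\setminus\partial\sigma)\cong\tilde H^{d-1}(S^{d-1})\cong\Z$. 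A more elementary substitute is a Mayer--Vietoris computation for the decomposition $\polyh{\Sigma} = \sigma\cup\polyh{L}$ with $L := \Sigma\setminus\{\sigma\}$: one first deduces that $\polyh{L}$ is acyclic, and then, using a collar neighbourhood of $\partial\sigma$ in $\polyh{L}$, concludes $H_0(A)\cong H_0(\polyh{L})=\Z$.

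The first step is routine simplicial bookkeeping; the main obstacle is the second, where the component count for $\polyh{\Sigma}\setminus\partial\sigma$ genuinely uses the homology-sphere (or, per the parenthetical, homology-manifold) hypothesis through the local tameness of $\partial\sigma$ in $\polyh{\Sigma}$ (immediate in the PL-manifold case, but requiring a bit of care in general).
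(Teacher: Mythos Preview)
Your overall strategy matches the paper's: establish that $A:=\polyh{\Sigma}\setminus\sigma$ is path-connected, and transfer this to $\polyh{K}\setminus\partial\sigma$. Your first step, however, has a genuine gap: the claimed strong deformation retraction $A\to\polyh{K}\setminus\partial\sigma$ does not exist in general.

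The problem arises for $d$-simplices $\tau\neq\sigma$ that are \emph{disjoint} from $\sigma$, so that $\rho_\tau=\tau\cap\sigma=\emptyset$. Then there is no ``point in the relative interior of $\rho_\tau$'' to project away from, and indeed $\tau\setminus\rho_\tau=\tau$ cannot retract onto $\partial\tau\setminus\rho_\tau=\partial\tau$ (a ball never retracts onto its bounding sphere). This is not repairable by some other choice of local retraction: globally, $A$ and $\polyh{K}\setminus\partial\sigma$ need not even be homotopy equivalent. For instance, let $\Sigma$ be the boundary complex of the octahedron (so $d=2$) and $\sigma$ one of its triangles. Then $A$ is an open $2$-disk and hence contractible, whereas $\polyh{K}\setminus\partial\sigma$ deformation retracts onto the boundary of the antipodal triangle, which is an $S^1$; so no retraction $A\to\polyh{K}\setminus\partial\sigma$ exists at all.

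The paper makes the transfer by a direct path-pushing argument instead: given $x,y\in\polyh{K}\setminus\partial\sigma$ and a path $\gamma$ joining them in $A$, for each $d$-simplex $\tau\neq\sigma$ one replaces the segment of $\gamma$ between its first and last visits to $\tau$ by a path in $\partial\tau\setminus\sigma$, which is path-connected (it is $S^{d-1}$ with a proper, possibly empty, face removed). This needs only $\pi_0$ information and so sidesteps the homotopy obstruction above. Your argument for the connectedness of $A$ itself is essentially fine; the paper obtains it a bit more directly via Lefschetz duality, $H^0(\polyh{\Sigma}\setminus\sigma)\cong H_d(\Sigma,\sigma)\cong H_d(\Sigma)$, rather than via Alexander duality applied to $\partial\sigma$.
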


\begin{proof} By Lefschetz duality 
(see, e.g., \cite[Theorem~70.2]{Munkres}), $\polyh \Sigma
\setminus \sigma$ is path
 connected. Indeed, 
Lefschetz duality yields $H^0(\polyh \Sigma \setminus \sigma)\cong 
H_d(\Sigma, \sigma)$ 
(homology with $\Z_2$ coefficients, say). 
The exact homology sequence of the pair $(\Sigma,\sigma)$, 
together with the fact that $\sigma$ is contractible, 
yields $H_d(\Sigma)\cong H_d(\Sigma,\sigma)\cong \Z_2$.

Next, we claim that if $\gamma$ is a path in $\polyh{\Sigma}\setminus \sigma$ 
connecting two points $x,y \in \polyh K$,
then $x$ and $y$ can also be connected by
 a path in $\polyh K\setminus \bd\sigma$.
Indeed, given a $d$-dimensional simplex $\tau\in \Sigma\setminus\sigma$,
we have $\bd\tau\setminus\sigma$ path-connected. Hence
we can modify $\gamma$ as follows:
 Letting $a:=\min\{t : \gamma(t) \in \tau\}$ and 
$b:=\max\{t:\gamma(t)\in \tau\}$,
we replace the segment of $\gamma$ between
 $\gamma(a)$ and $\gamma(b)$ by a path $\eta$ in 
$\partial \tau \setminus  \sigma$.
Having performed this modification
for every $\tau\in \Sigma\setminus\sigma$
(in some arbitrary order), we end up with a path
 connecting $x$ and $y$ that lies entirely within 
$\polyh K \setminus \partial \sigma$.
\end{proof}

\begin{lemma}\label{l:emb-recogS}
Let $d\geq 2$. Suppose that $\Sigma$ is 
a homology $d$-sphere, and let $K$ be its $(d-1)$-skeleton. 
\begin{enumerate}
\item[\rm(i)]
If $\Sigma$ is a \textup{PL} sphere,
 then $K$ \textup{PL} embeds into $\R^d$.
\item[\rm(ii)] 
If $K$ \textup{PL} embeds into $\R^d$,
 then $\Sigma$ is homeomorphic to $S^d$.
\end{enumerate}
\end{lemma}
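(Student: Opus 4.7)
For (i), the plan is to exploit that $\polyh\Sigma$ is PL homeomorphic to $\polyh{\partial\Delta^{d+1}}$. Pick any point $p$ in the interior of some $d$-simplex of $\Sigma$, so $p\notin \polyh K$; using that $\partial\Delta^{d+1}$ is a connected PL manifold (hence PL homogeneous), we may arrange the PL homeomorphism to send $p$ to a vertex $v$ of $\partial\Delta^{d+1}$. Then $\polyh{\partial\Delta^{d+1}}\setminus\{v\}$ decomposes as the opposite $d$-face $\sigma_0$ glued to the punctured star of $v$, the latter PL homeomorphic to $\partial\sigma_0\times(0,1]$. Attaching this half-open collar to the PL ball $\sigma_0$ yields a space PL homeomorphic to the interior of a $d$-ball, i.e., to $\R^d$; since $\polyh K$ sits in $\polyh\Sigma\setminus\{p\}$, this produces the desired PL embedding of $K$ into $\R^d$.

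For (ii), given a PL embedding $g\colon \polyh K \to \R^d \subset S^d$, the plan is to extend $g$ to a homeomorphism $\tilde g\colon \polyh\Sigma\to S^d$ by filling in each $d$-simplex of $\Sigma$ on the ``correct'' side. Fix a $d$-simplex $\sigma$ of $\Sigma$. Since $g(\partial\sigma)$ is a PL $(d{-}1)$-sphere in $S^d$, Theorem~\ref{t:weakPLScho} gives two components of $S^d\setminus g(\partial\sigma)$ with topological $d$-ball closures. By Lemma~\ref{lem:skeleton-homology-sphere-minus-facet}, $\polyh K\setminus\partial\sigma$ is path-connected, so its image lies in a single component; call the other one $U_\sigma$. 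Extend $g|_{\partial\sigma}$ to a homeomorphism $\sigma\to\bar U_\sigma$ (possible by the Alexander trick since both are topological $d$-balls), and take $\tilde g$ to be the common extension of $g$ and all these homeomorphisms.

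The heart of the argument is verifying that $\tilde g$ is a bijection. For injectivity, I would check pairwise disjointness of the $U_\sigma$: for distinct $d$-simplices $\sigma\ne\sigma'$, the connected set $U_\sigma$ avoids $g(\polyh K)\supseteq g(\partial\sigma')$ and so sits in one component of $S^d\setminus g(\partial\sigma')$; its closure contains a point of $g(\partial\sigma\setminus\partial\sigma')$, nonempty since the boundary of a $d$-simplex determines it, and such a point lies in the component containing $g(\polyh K\setminus\partial\sigma')$, forcing $U_\sigma\cap U_{\sigma'}=\emptyset$. For surjectivity I plan a mod-$2$ degree argument: any point in any $U_{\sigma_0}$ has a unique $\tilde g$-preimage, so $\tilde g_*\colon H_d(\polyh\Sigma;\Z_2)\to H_d(S^d;\Z_2)$ is nonzero; this precludes $\tilde g$ from factoring through the contractible space $S^d\setminus\{q\}$ for any $q$, so $\tilde g$ is onto. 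A continuous bijection from a compact space to a Hausdorff space is a homeomorphism, yielding $\polyh\Sigma\cong S^d$.

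The main obstacle, I expect, is the surjectivity step in (ii): injectivity is purely about tracking components and uses only the path-connectivity supplied by Lemma~\ref{lem:skeleton-homology-sphere-minus-facet}, but surjectivity genuinely relies on the fundamental-class information $H_d(\polyh\Sigma;\Z_2)=\Z_2$ built into the homology-sphere hypothesis, combined with the fact that a map into $S^d$ missing a point must be homologically trivial in top degree.
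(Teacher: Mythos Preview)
Your argument is correct and follows the same route as the paper: for (ii), apply the weak PL Schoenflies theorem together with Lemma~\ref{lem:skeleton-homology-sphere-minus-facet} to fill in each $d$-simplex on the free side, then use a homological argument for surjectivity. The paper is terser---it declares (i) ``clear'', does not spell out the pairwise-disjointness of the $U_\sigma$ that you verify, and cites Alexander duality in place of your equivalent mod-$2$ degree argument---but the substance is the same.
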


\begin{proof}
Part (i) is clear. 

For part (ii), let us
suppose that $f$ is a PL embedding of $K$ into $\R^d$. 
Since $K$ is compact, the image of $f$ is contained in some big 
$d$-dimensional simplex, and by taking this simplex as one facet of 
$\Delta^{d+1}$, we can consider $f$ as a PL embedding of $K$ 
into $\partial \Delta^{d+1}$. Consider a $d$-simplex $\sigma$ 
of $\Sigma$. By the weak PL Schoenflies theorem (Theorem~\ref{t:weakPLScho}), 
$\polyh{\partial \Delta^{d+1}}\setminus f(\partial \sigma)$ 
has two components, whose
 closures are topological $d$-balls. Moreover, since 
$\polyh K\setminus \partial \sigma$ is path-connected, 
its image under $f$ must be entirely contained in one of these
 components. 

Therefore, we can use the closure of the other 
component to extend $f$ to a topological embedding of $\sigma$. By
 applying this reasoning to each $d$-face, we obtain 
a topological embedding $g$ of $\Sigma$ into $\partial \Delta^{d+1}$. 
It follows for instance from  Alexander duality 
(see, e.g., \cite[Theorem~74.1]{Munkres}) that $g$ must be surjective, i.e., a homeomorphism.
\end{proof}

\heading{Proof of Theorem~\ref{t:novikov}. }
The undecidability of  \EMBED{(d-1)}{d} for $d\ge 5$ is 
an immediate consequence of Theorem~\ref{t:novik-cite}
and Lemma~\ref{l:emb-recogS}.
\proofend 

\heading{Proof of Corollary~\ref{c:subdiv}. }
Let us suppose that there is a recursive function
$f$ contradicting the statement. That is, every
$(d-1)$-dimensional
$K$ with $n$ simplices that PL-embeds in $\R^d$
at all has a subdivision with at most $f(n)$
simplices that embeds linearly. Then,
given a $(d-1)$-dimensional complex $K$ with $n$ simplices,
we could generate all subdivisions $K'$ of $K$
with at most $f(n)$ simplices (see Acquistapace
 et al.~\cite{acquistapace-al}, Proposition~2.15)
and, using the PSPACE algorithms
mentioned in Section~\ref{s:pl-prelim},
test the linear embeddability of each $K'$
in $\R^d$. This would yield a decision algorithm
for \EMBED{(d-1)}{d}, contradicting
Theorem~\ref{t:novikov}.
\proofend

\section{Hardness of embedding \boldmath $2$-dimensional
complexes in $\R^4$}\label{s:hard24}

We will reduce the problem $3$-SAT 
to \EMBED 24. 
Given a $3$-CNF formula $\varphi$,  we construct
a $2$-dimensional simplicial complex $K$ that
is PL embeddable in $\R^4$ exactly if $\varphi$ is satisfiable.

First we define two particular $2$-dimensional simplicial
complexes $\CG$ (the \emph{clause gadget}) and $\TG$
(the \emph{conflict gadget}).
They are closely related to the main example of Freedman et al.~\cite{FKT}:
$\TG$ is taken over exactly, and $\CG$ is a variation on
a construction in \cite{FKT} (which, in turn, is similar
in some respects to an example of Segal and Spie\.z \cite{segal-spiez},
with some of the ideas going back to Van Kampen \cite{vanKampen}).
 
\subsection{The clause gadget}\label{ss:clgadget}

To construct $\CG$, we begin with a $6$-dimensional
simplex on the vertex set $\{v_0,v_1,\ldots,v_6\}$, and
we let $F$ be the $2$-skeleton of this simplex
(F for ``full'' skeleton).
Then we make a hole in the interior of the three 
triangles ($2$-simplices) $v_0v_1v_2$,
$v_0v_1v_3$, and $v_0v_2v_3$.
That is, we subdivide each of the triangles and
from each of these subdivisions we remove
 a small triangle in the middle,
as is indicated in Fig.~\ref{f:e4h-C}.\footnote{Alternatively,
we could also make the clause gadget by simply
removing the the triangles $v_0v_1v_2$,
$v_0v_1v_3$, and $v_0v_2v_3$ from $F$.
However, the embedding of the resulting complex
$K$ for satisfiable formulas $\varphi$ would become somewhat
more complicated.}
This yields the simplicial complex $\CG$.
\lipefig{e4h-C}{The clause gadget $\CG$, its openings, and one
of the complementary spheres.}

Let $\omega_{1},\omega_{2},\omega_{3}$ be the three
small triangles we have removed (where
$\omega_1$ comes from the triangle $v_0v_2v_3$ etc.).
We call them the  \emph{openings}
of $\CG$ and we let 
 $O_\CG:=\{\omega_{1},\omega_{2},\omega_{3}\}$
be the set of openings. Thus, $\CG\cup O_\CG$
is a subdivision of the full $2$-skeleton $F$.

If we remove from $F$ the vertices $v_0,v_1,v_2$
and all simplices containing them, we obtain 
the boundary of the $3$-simplex $\{v_3,v_4,v_5,v_6\}$.
Topologically it is an $S^2$, we call it the
\emph{complementary sphere} of the opening
$\omega_{3}$, and we denote it by $S_{\omega_{3}}$.
The complementary spheres of the openings $\omega_{1}$
and $\omega_{2}$ are defined analogously.
The following lemma is a variation on results in Van Kampen
\cite{vanKampen}:

\begin{lemma}\label{l:cgadget}\ 
\begin{enumerate}
\item[\rm(i)] 
For every generic
 PL embedding $f$ of $\CG$ into $\R^4$
there is at least one opening $\omega\in O_\CG$ such that
the images of the boundary  $\bd\omega$
 and of the complementary sphere
$S_\omega$ have odd linking number.
\item[\rm(ii)] For every opening $\omega\in O_\CG$ 
there exists an embedding of $\CG$ into $\R^4$ in which
only $\bd\omega$ is linked with its complementary sphere.
More precisely, there exists a generic linear mapping
of the full $2$-skeleton $F$ into $\R^4$
whose restriction to $\polyh{\CG\cup O_\CG\setminus\{\omega\}}$
is an embedding.  
\end{enumerate}
\end{lemma}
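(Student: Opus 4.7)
The plan is to leverage the Van Kampen--Flores theorem: the $2$-skeleton $F$ of the $6$-simplex has nontrivial $\Z_2$-valued Van Kampen obstruction in $\R^4$. Concretely,
\[
\mathfrak{o}(\bar f) := \sum_{\{\sigma,\tau\}} |\bar f(\sigma)\cap \bar f(\tau)| \bmod 2,
\]
summed over unordered vertex-disjoint pairs of $2$-simplices of $F$, equals $1$ for every generic PL map $\bar f\:F\to\R^4$ and is independent of $\bar f$.

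For part (i), I argue by contradiction. Suppose the three linking numbers $L(f(\partial\omega_i),f(S_{\omega_i}))$ are all even. By the definition of mod-$2$ linking number, each $f|_{\partial\omega_i}$ then extends to a generic PL disk map $g_i\:\omega_i\to\R^4$ with $|g_i(\omega_i)\cap f(S_{\omega_i})|$ even. Gluing $f$ and the three $g_i$ yields a PL map $\bar f\:F\to\R^4$, and I compute $\mathfrak{o}(\bar f)$ by cases on disjoint pairs $\{\sigma,\tau\}$, writing $T_j$ for the perforated $2$-simplex of $F$ containing $\omega_j$. Pairs where neither $\sigma$ nor $\tau$ is perforated contribute $0$, since $f$ embeds $\CG$; pairs with two perforated triangles do not occur, since $T_1,T_2,T_3$ all share~$v_0$; and if $\sigma=T_i$, the only disjoint $\tau$ are the four triangles making up $S_{\omega_i}$. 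Writing $T_i=R_i\cup\omega_i$ with $R_i$ the ring subcomplex of $\CG$, the vertex sets of $R_i$ and $S_{\omega_i}$ are disjoint, hence $f(R_i)\cap f(S_{\omega_i})=\emptyset$ and
\[
\sum_{\tau\in S_{\omega_i}} |\bar f(T_i)\cap \bar f(\tau)| \;=\; |g_i(\omega_i)\cap f(S_{\omega_i})| \;\equiv\; 0 \pmod 2.
\]
Summing over $i=1,2,3$ forces $\mathfrak{o}(\bar f)=0$, contradicting $\mathfrak{o}(F)=1$.

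For part (ii), fix $\omega=\omega_3$ (the other cases are symmetric). I construct a generic linear map of $F$ by placing the four vertices $v_3,\ldots,v_6$ as a tetrahedron $B$ inside a $3$-flat $H\subset\R^4$, placing $v_0,v_1$ on opposite sides of $H$ so that the edge $v_0v_1$ pierces the interior of $B$, and choosing $v_2$ generically (for instance in $H$ outside $B$ followed by a small generic perturbation to achieve general position). A direct linear-algebraic computation shows that the intersection $v_0v_1v_2\cap B$ is a single segment, with one endpoint on the edge $v_0v_1$ (inside $B$) and the other on a $2$-face of $B$, giving $|v_0v_1v_2\cap S_{\omega_3}|=1$, and that no other disjoint pair of $2$-simplices of $F$ intersects under the linear map. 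Choosing $\omega_3$ as any small subdivided triangle inside $v_0v_1v_2$ whose relative interior contains this unique crossing point, the restriction of the map to $\polyh{\CG\cup O_\CG\setminus\{\omega_3\}}$ is injective, hence an embedding. Part~(i) then ensures that among the three openings, $\omega_3$ is the unique one linked with its complementary sphere.

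The main obstacle is verifying in part (ii) that no unwanted disjoint pair of $2$-simplices crosses. The cleanest conceptual route is to exploit the $\Z_2$-symmetry swapping $v_0\leftrightarrow v_1$ across $H$: non-invariant disjoint pairs are matched up by this involution and, for a small generic perturbation keeping all such crossings off the fixed hyperplane~$H$, contribute in pairs and so vanish modulo~$2$. The only invariant pair that can contribute is $(T_3,\tau)$ with $\tau\in S_{\omega_3}$, and Van Kampen's theorem forces this total to equal the full obstruction $\mathfrak{o}(F)=1$, which matches the single geometric crossing produced by the construction. Making the symmetry argument fully rigorous requires care with the perturbation (to avoid creating new crossings on $H$), but the conclusion can also be verified by direct case analysis of the $70$ disjoint pairs of $2$-simplices.
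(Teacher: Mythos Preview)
Your proof of (i) is correct and is essentially the paper's argument stated as a contrapositive. The paper extends $f$ arbitrarily and generically over the three openings and invokes Van Kampen's parity identity for $F$ directly; you instead assume all three mod-$2$ linking numbers vanish, build $\bar f$ from such extensions, and derive $\mathfrak o(\bar f)=0$. The case split (no two of $T_1,T_2,T_3$ are disjoint; the only $2$-simplices disjoint from $T_i$ are the four in $S_{\omega_i}$) matches the paper.

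For (ii) you take a different route from the paper, and there is a genuine gap. The paper simply cites Van Kampen's classical linear map: place five of the seven vertices at the vertices of a $4$-simplex in $\R^4$ and the remaining two in its interior; Van Kampen showed this has a \emph{single} double point, on one pair of disjoint $2$-simplices. Your construction (a tetrahedron $B$ in a $3$-flat $H$, $v_0,v_1$ on opposite sides, $v_2$ in $H$ outside $B$, then perturb) is different, and the verification that it yields exactly one double point is missing.

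The symmetry argument you sketch does not do the job. First, it proves only that crossings on non-invariant pairs contribute $0$ \emph{modulo $2$}; for an embedding you need zero crossings, not merely an even number. Second, the invariant pairs are not only $(T_3,\tau)$: every pair $(v_0v_1v_k,\tau)$ with $k\in\{2,\dots,6\}$ and $\tau$ spanned by three vertices of $\{v_2,\dots,v_6\}\setminus\{v_k\}$ is fixed by swapping $v_0\leftrightarrow v_1$; in particular $(T_2,\tau)=(v_0v_1v_3,\tau)$ with $\tau\in S_{\omega_2}$ is invariant. Third, the suggested placement can produce actual unwanted crossings. With $v_2,\dots,v_6\in H$, a $2$-simplex $v_iv_jv_k$ with $i\in\{0,1\}$ and $j,k\in\{2,\dots,6\}$ meets $H$ exactly in the edge $v_jv_k$, so any edge/triangle piercing among the five points in $H$ becomes a double point of the map. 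Since $v_2$ lies outside $B$, the Radon partition of $\{v_2,\dots,v_6\}$ in $H$ may be of type $(2,3)$; for instance, if the origin-to-$v_2$ segment exits $B$ through the face $v_4v_5v_6$ and $v_2$ is near $\partial B$, then the edge $v_2v_3$ pierces $v_4v_5v_6$. After a generic perturbation of $v_2$ off $H$ this becomes an interior crossing of either $v_0v_2v_3$ or $v_1v_2v_3$ with $v_4v_5v_6$, neither of which is removed by deleting $\omega_3$. A sufficiently careful placement of $v_2$ may avoid this, but the ``direct case analysis of the $70$ disjoint pairs'' you defer is exactly the missing work; the paper's route via Van Kampen's construction sidesteps all of it.
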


\begin{proof}[Proof of (i)] This is very similar to Lemma~6 in \cite{FKT}.
Let $f_0$ be a generic PL map
(not necessarily an embedding) of $F$ into $\R^4$. 
Van Kampen proved that
$$\sum_{\{\sigma,\tau\}} |f_0(\sigma)\cdot f_0(\tau)|
$$ 
is always odd,
where $|f_0(\sigma)\cdot f_0(\tau)|$ 
denotes the number of intersections
between the image of $\sigma$ and the image of $\tau$, and the sum
is over all unordered pairs of \emph{disjoint} $2$-dimensional
simplices $\sigma,\tau\in F$ (the genericity of $f_0$ guarantees
that the intersection $f_0(\sigma)\cap f_0(\tau)$ consists of finitely
many points). (See Appendix~\ref{s:vkamp} for
a wider context of this result.)

Now let us consider a generic PL embedding $f$ of $\CG$ into $\R^4$,
and let us extend it piecewise linearly and generically (and otherwise
arbitrarily) to the openings of $\CG$.
The resulting map can also be regarded as a generic PL map  
 $f_0$ of $F$ into $\R^4$.
For such an $f_0$, $|f_0(\sigma)\cdot f_0(\tau)|$ can be nonzero
only if $\sigma$ contains an opening $\omega$ of
 $\CG$ and $\tau$ belongs
to its complementary sphere $S_\omega$ (or the same situation
with $\sigma$ and $\tau$ interchanged). Thus, for at least
one $\omega\in O_\CG$, $f_0(\omega)$ intersects $f(S_\omega)$
in an odd number of points, and this means exactly that
$f(\bd\omega)$ and $f(S_\omega)$ have odd linking number.\end{proof}

\begin{proof}[Proof of (ii)] It suffices to exhibit a generic linear 
map $f_0$ of $F$ into $\R^4$ such that the images of
two disjoint $2$-simplices intersect (at a single point),
and this intersection is the only multiple point of $f_0$.
Such a mapping was constructed by Van Kampen \cite{vanKampen}:
$5$ of the vertices are placed as vertices of a $4$-dimensional
simplex in $\R^4$, and the remaining two are mapped
in the interior of that simplex. 
\end{proof}

\subsection{The conflict gadget}\label{ss:conflgadget}
\lipefig{e4h-attach}{Attaching a disk to the polygonal line $E$.}
To construct $\TG$, we start with the $1$-dimensional simplicial
complex $E$ shown in Fig.~\ref{f:e4h-attach} left,
consisting of two triangular loops $\Sigma_a$ and $\Sigma_b$ and
an edge $c$ connecting them. We also fix an orientation
of $\Sigma_a$, $\Sigma_b$, and $c$ (marked by arrows). 
Then we take a disk $D$
and we attach its boundary to $E$ as indicated in Fig.~\ref{f:e4h-attach} 
right; the disk is triangulated sufficiently finely so that the result
of the attachment is still a simplicial complex. This is the
complex $\TG$.

We observe that topologically, $\TG$ is a ``squeezed torus'' (the reader may want to recall the
usual construction of a torus by gluing the opposite sides of
a square; this well-known construction would be obtained
from the attachment as above if the edge $c$ were contracted
to a point). Fig.~\ref{f:e4h-tor} shows such a squeezed torus
embedded in $\R^3$ (with the loops $\Sigma_a$ and $\Sigma_b$ drawn circular
rather than triangular).
\lipefig{e4h-tor}{A $3$-dimensional embedding of the conflict gadget. }

\begin{lemma}\label{l:confl-gadget}\ 
\begin{enumerate}
\item[\rm(i)]{\rm  \cite[Lemma~7]{FKT}}
Let $S_a$ and $S_b$ be PL $2$-spheres.
Then there is no PL embedding $f$ of $S_a\sqcup S_b\sqcup \TG$
(disjoint union) into $\R^4$ such that 
\begin{itemize}
\item
the $1$-sphere $f(\Sigma_a)$ and the $2$-sphere $f(S_a)$ have
odd linking number, 
and so do $f(\Sigma_b)$ and $f(S_b)$;
\item $f(\Sigma_a)$ and $f(S_b)$ are unlinked,
 and so
are $f(\Sigma_b)$ and $f(S_a)$.
\end{itemize}
\item[\rm(ii)] Let $f$ be a generic linear embedding of
$E$ in $\R^3$ (not $\R^4$ this time)  such that
$f(\Sigma_a)$ and $f(\Sigma_b)$ are unlinked, and let $\delta>0$. 
Then there is a PL embedding $\overline f$ of $\TG$ in $\R^3$
extending $f$ whose image is contained in the set
$N=N(f,\delta):=N(T_a,\delta)\cup N(f(\Sigma_b),\delta)\cup N(f(c),\delta)$,
where $T_a$ is the triangle bounded by the loop $f(\Sigma_a)$ and
$N(A,\delta)$ denotes the $\delta$-neighborhood of a set $A$
(in $\R^3$ in our case).\footnote{Formally
$N(A,\delta)=\{x\in\R^3: \dist(x,A)\le\delta\}$,
where $\dist(x,A)$ is the Euclidean distance of $x$
from the set~$A$.}
 (Symmetrically, and this is the main
point of the construction, we can also embed
$\TG$ into $N(f(\Sigma_a),\delta)\cup N(T_b,\delta)\cup N(f(c),\delta)$,
thus leaving a hole on the other side.)
\end{enumerate}
\end{lemma}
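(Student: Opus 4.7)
The plan is to construct the PL embedding $\overline{f}$ explicitly, following the picture in Fig.~\ref{f:e4h-tor}, which confirms that $\TG$ is PL-embeddable in $\R^3$ as a thin ``flattened'' 2-complex around a disk-plus-arc-plus-loop configuration. Concretely, the image $\overline{f}(\TG) \subset N$ will contain $f(E)$, with $\overline{f}|_E = f$, and be PL homeomorphic to $\TG$ via an identification carrying $E$ to $f(E)$; since $\TG$ is determined combinatorially by this data, producing such a 2-complex suffices.

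I would build $\overline{f}(\TG)$ in three pieces. First, form a PL 2-sphere $P \subset N(T_a,\delta)$ as a ``pillow'': the flat triangle $T_a$ pushed slightly into each of the two halfspaces bounded by the plane of $\Sigma_a$ and glued along the common boundary $f(\Sigma_a)$, so that $f(\Sigma_a)$ becomes its equator. Second, build a thin PL ribbon $A \subset N(f(c),\delta) \cup N(f(\Sigma_b),\delta)$ whose core curve runs parallel to $f(c)$ at distance $\eps \ll \delta$, wraps once around $f(\Sigma_b)$ inside the plane of $\Sigma_b$ at distance $\eps$, and returns parallel to $f(c)$; choose the offset directions consistently so that $f(\Sigma_b)$ lies on $A$ as a longitudinal loop and $f(c)$ lies on $A$ as two parallel longitudinal arcs. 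Third, glue $A$ to $P$ near the vertex $v_a = f(c) \cap f(\Sigma_a)$ by cutting $P$ along a small neighborhood of $v_a$ and identifying with the boundary arcs of $A$. The combinatorial data of $\TG$ (in particular, the multi-sheeted structure of its 2-cell along $c$, reflecting the $c$-multiplicity in the attaching word obtained by expanding the single vertex of the standard torus to an edge) is matched by choosing the gluing pattern carefully.

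The main technical obstacle is to ensure the construction yields an honest PL embedding rather than just an immersed 2-complex. This requires (a) arranging the attachments along $f(\Sigma_a)$ and $f(c)$ without introducing spurious intersections or incorrect sheet counts at the singular edge $c$, and (b) placing the ribbon $A$ so that it does not cross the sphere $P$ away from the intended gluing sites. Point (b) is exactly where the unlinking hypothesis is used: because $f(\Sigma_a)$ and $f(\Sigma_b)$ are unlinked, the ribbon $A$ can be routed from a neighborhood of $T_a$ to $f(\Sigma_b)$ and back without threading through the interior of $P$. (If the flat triangle $T_a$ happens to meet $f(\Sigma_b)$ transversally---which can occur even under unlinking---one perturbs $P$ locally within $N(T_a,\delta)$ near each intersection; by the unlinking hypothesis such intersections come in cancelling pairs and can be accommodated.) A direct inspection---Euler-characteristic bookkeeping plus analysis of the local structure at each edge of $f(E)$---then confirms that the constructed PL 2-complex is PL homeomorphic to $\TG$ via a homeomorphism extending $f$ on $E$, completing the proof.
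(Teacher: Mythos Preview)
Your approach and the paper's coincide in spirit: the paper's entire proof of part~(ii) is the phrase ``for part~(ii) [we refer] to Fig.~\ref{f:e4h-tor},'' and you are attempting to put that picture into words. Where your write-up has a genuine gap is in the local model along the edge~$c$. The attaching word for the $2$-cell of~$\TG$ is $a\,c\,b\,c^{-1}a^{-1}c\,b^{-1}c^{-1}$ (the torus commutator with a copy of~$c$ inserted at each corner of the identification square), so $c$ carries \emph{four} sheets of the disk and $\TG$ fails to be a $2$-manifold precisely along~$c$. Your two building blocks, however, are both $2$-manifolds: the pillow~$P$ is a $2$-sphere and the ribbon~$A$, as you describe it, is an annulus. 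Gluing surfaces along boundary arcs or circles produces again a surface, so no such gluing can create the required $4$-page open book along~$f(c)$. The phrase ``$f(c)$ lies on~$A$ as two parallel longitudinal arcs'' is a symptom of the same problem: $f(c)$ is a single arc in~$\R^3$, and an embedded annulus can contain it at most once, contributing only two sheets.

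You flag the difficulty yourself (``the multi-sheeted structure \ldots\ is matched by choosing the gluing pattern carefully''), but that sentence is exactly where the missing content lives: with manifold pieces~$P$ and~$A$ there is no gluing pattern that will do. To make the construction honest you must either build the $4$-sheeted book along~$f(c)$ directly---two of the four pages running toward the $\Sigma_a$-side and two toward the $\Sigma_b$-side, matching how the four occurrences of~$c$ in the attaching word are flanked by $a^{\pm1}$ versus $b^{\pm1}$---or, equivalently, first embed an honest torus inside~$N$ (with $f(\Sigma_a)$ and $f(\Sigma_b)$ as its two generating curves) and then squeeze a short arc of that torus down onto~$f(c)$ to create the singular structure in place.
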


For a proof of part (i) we refer to \cite{FKT}
 (a few words about the basic approach
of the proof will be said in the proof of
Lemma~\ref{l:confl-gadget-higher} below), and for part
(ii) to Fig.~\ref{f:e4h-tor}.

\subsection{The reduction}\label{ss:reduct}

 Let the given $3$-CNF formula be $\varphi=C_1\wedge C_2\wedge\cdots\wedge C_m$,
where each $C_i$ is a clause with three literals (each literal
is either a variable or its negation). For each $C_i$, we
take a copy of the clause gadget $\CG$ and we denote
it by $\CG_i$ (the $\CG_i$ have pairwise disjoint vertex
sets). We fix a one-to-one correspondence between the literals
of $C_i$ and the openings of $\CG_i$, letting $\omega(\lambda)$
be the opening corresponding to a literal $\lambda$.

Let us say that a literal $\lambda$ in a clause $C_i$
\emph{is in conflict} with a literal $\mu$ in a clause
$C_{j}$ if both $\lambda$ and $\mu$ involve the same variable $x$
but one of them is $x$ and the other the negation $\overline x$.
For convenience
we assume, without loss of generality,
that two literals from the same clause are never in conflict. 

Let $\Xi$ consist of all (unordered) pairs
$\{\omega(\lambda),\omega(\mu)\}$ of openings corresponding to
pairs $\{\lambda,\mu\}$ of conflicting literals in $\varphi$.
For every pair $\{\omega,\psi\}\in\Xi$ 
we take a fresh copy $\TG_{\omega\psi}$ of the conflict
gadget $\TG$. We identify the loop $\Sigma_a$ in $\TG_{\omega\psi}$
with the boundary $\bd\omega$  and the loop $\Sigma_b$ 
with $\bd\psi$ (the rest of $\TG_{\omega\psi}$
is disjoint from the clause gadgets and the other conflict gadgets).

The simplicial complex $K$ assigned to the formula $\varphi$
is
$$ K:=
\biggl(\,\bigcup_{i=1}^m \CG_i\biggr)\cup
\biggl(\,\bigcup_{\{\omega,\psi\}\in\Xi} \TG_{\omega\psi}\biggr).
$$
It remains to show that $K$ is PL embeddable in $\R^4$ exactly if $\varphi$
is satisfiable.

\heading{Nonembeddability for unsatisfiable formulas. }
This is a straightforward consequence of Lemma~\ref{l:cgadget}(i)
and Lemma~\ref{l:confl-gadget}(i). 

Indeed, if $f$ is a PL embedding
of $K$ into $\R^4$, which we may assume to be generic,
there is an opening  in each clause gadget $\CG_i$
such that $f(\bd\omega_i)$ has odd linking number with
the complementary sphere $f(S_{\omega_i})$; let us call
it a \emph{occupied opening} of $\CG_i$. Since $\varphi$ is not
satisfiable, whenever we choose one literal from each clause,
there are two of the chosen literals in conflict. Thus,
there are two occupied openings $\omega\in O_{\CG_i}$ and
$\psi\in O_{\CG_j}$ that are connected by a conflict gadget
$\TG_{\omega\psi}$.

Then the supposed PL embedding $f$ provides us an embedding
as in Lemma~\ref{l:confl-gadget}(i) with
$S_a=S_\omega$, $S_b=S_\psi$, and $\TG=\TG_{\omega\psi}$.
Concerning the assumptions in the lemma, we already
know that $f(S_\omega)$ and $f(\bd\omega)$ have odd linking
number, and so do $f(S_\psi)$ and $f(\bd\psi)$. It remains
to observe that $f(\bd\omega)$ cannot be linked with
$f(S_\psi)$ (and vice versa), since $\CG_i$ contains a 
disk bounded by $\bd\omega$: For example (refer to Fig.~\ref{f:e4h-C}),
$\bd\omega_{3}$ is the boundary of the disk consisting
of the triangles $v_0v_1v_4$, $v_0v_2v_4$, $v_1v_2v_4$
and the triangles in the subdivision of $v_0v_1v_2$ different
from $\omega_{3}$. So the lemma applies and $K$ is not embeddable.

\heading{Embedding for satisfiable formulas. } Given a satisfying
assignment for $\varphi$, we choose a \emph{witness literal}
$\lambda_i$ for each clause $C_i$ that is true under the given
assignment (and we will refer to the remaining
two literals of $C_i$ as \emph{non-witness} ones). 
No two witness literals can be in conflict. 

We describe an embedding of $K$ into $\R^4$ corresponding to
this choice of witness literals.

Let us choose  distinct points $p_1,\ldots,p_m\in\R^4$.
For each $i=1,2,\ldots,m$, we let $f_i$ be a generic linear embedding of
the clause gadget $\CG_i$ into a small neighborhood of $p_i$
(and far from the other $p_j$)
as in Lemma~\ref{l:cgadget}(ii), where the role
of $\omega$ in the lemma is played by the
witness opening of $\CG_i$ (i.e., the one corresponding to
to the witness literal of $C_i$). 
In particular, the interiors of the triangles bounded by
$f_i(\bd\omega')$ and by $f_i(\bd\omega'')$ are disjoint
from $f_i(\CG_i)$, where $\omega'$ and $\omega''$ are the
non-witness openings of $\CG_i$.

Taking all the $f_i$ together defines an embedding $f$
of the union of the clause gadgets,
and it remains to embed the conflict gadgets.

To this end, we will assign to each conflict gadget
$\TG_{\omega\psi}$ a ``private'' set $P_{\omega\psi}\subset\R^4$
homeomorphic to the $3$-dimensional set $N$ from Lemma~\ref{l:confl-gadget}(ii),
and we will embed $\TG_{\omega\psi}$ into $P_{\omega\psi}$.
Each $P_{\omega\psi}$ will be disjoint from 
all other $P_{\omega'\psi'}$ and also from all the images
$f(\CG_i)$, \emph{except} that $P_{\omega\psi}$ has to contain
the loops $f(\bd\omega)$ and $f(\bd\psi)$ where the conflict
gadget $\TG_{\omega\psi}$ should be attached.
In order to fit enough almost-disjoint homeomorphic copies of $N$ 
into the space, we will ``fold'' them suitably.  

We know that for every pair $\{\omega,\psi\}$ of openings
connected by a conflict gadget, at least one of $\omega$ and $\psi$
is non-witness. 
Let us choose the notation so that
$\omega$ is  non-witness and thus unoccupied in the embedding $f$. 

We will build $P_{\omega\psi}$ from three pieces:
a set $Q^+_{\omega\psi}$ that plays the role
of $N(T_a,\delta)$ in Lemma~\ref{l:confl-gadget}(ii),
a set $Q_{\psi\omega}$ that plays the role
of $N(f(\Sigma_b),\delta)$, and  a ``connecting ribbon''
in the role of $N(f(c),\delta)$. 

Now let $\omega$ be an opening of some $\CG_i$, witness or non-witness.
Let  $t$ be the number of openings $\psi$ that
are connected to $\omega$ by a conflict gadget.
The sets $Q_{\omega\psi}$ and $Q^+_{\omega\psi}$
we want to construct are indexed by these $\psi$,
but with some abuse of notation, we will now
regard them as  
indexed by an index $j$ running from $1$ to $t$,
i.e., as $Q_{\omega1}$ through $Q_{\omega t}$
(and similarly for $Q^+_{\omega\psi}$).

For concise notation let us write $\Sigma=f(\bd\omega)$ and
let $T$ be the triangle in $\R^4$ having $\Sigma$ as the boundary.
Let $\eps>0$ be a parameter and let $T^\eps:=\{x\in T:
\dist(x,\bd T)\le\eps \}$ be the part of $T$
at most $\eps$ away from the boundary of $T$.
Since the subdivided triangle in $\CG_i$ containing $\omega$
in its interior is embedded linearly by $f$, there is an
$\eps>0$ such that if we start at a point $x\in T^\eps$
and go distance at most $\eps$ in a direction orthogonal
to $T$, we do not hit $f(\CG_i)$. Moreover,
if $\omega$ is non-witness and thus all of $T$ is
free of $f(\CG_i)$, we can take any $x\in T$ with
the same result. Fig.~\ref{f:e4h-nbh} tries to illustrate
this in dimension one lower, where we have a segment $T$
in $\R^3$ instead of a triangle $T$ in $\R^4$.
\lipefig{e4h-nbh}{A free region around the triangle $T$; illustration
in $\R^3$ instead of $\R^4$.}
Thus, there are a set
$Q_\omega\subset\R^4$ with $Q_\omega\cap f(\CG_i)=\Sigma$
and a homeomorphism (actually, a
linear isomorphism) $h\: Q_{\omega}\to T^\eps \times B^2$
with $h(T^\eps)=T^\eps\times \{0\}$, where $0$
is the center of the disk $B^2$. Similarly, if $\omega$
is non-witness, there are $Q^+_\omega$ and $h^+\: Q^+_\omega\to
T\times B^2$ with $h^+(T)=T\times\{0\}$.

\lipefig{e4h-wedges}{The wedges.}
Let $W_1,\ldots,W_t\subset B^2$ be disjoint wedges 
as in Fig.~\ref{f:e4h-wedges}, and let $w_j$ consist
of the two radii bounding $W_j$. We set
$$
Q_{\omega j}:= h^{-1}((\Sigma\times W_j)\cup (T^\eps\times w_j)),\ \ \
Q^+_{\omega j} := (h^+)^{-1}((\Sigma \times W_j)\cup (T\times w_j)).
$$
As Fig.~\ref{f:e4h-prod} tries to illustrate,
$Q^+_{\omega j}$  is homeomorphic to a 3-dimensional neighborhood
of $T$ (by a homeomorphism sending $T$ to $T$),
and $Q_{\omega j}$ is similarly homeomorphic to a $3$-dimensional
neighborhood of $\Sigma$. Thus, the sets $Q_{\omega j}$ and
$Q^+_{\omega j}$ can indeed play the roles of
$N(f(\Sigma_b),\delta)$ and $N(T_a,\delta)$, respectively, in Lemma~\ref{l:cgadget}(ii).
\lipefig{e4h-prod}{Folding a $3$-dimensional neighborhood in $\R^4$.}

It remains to construct the ``connecting ribbons'':
For every conflict gadget $\TG_{\omega\psi}$,
we want to connect a vertex of $f(\bd\omega)$ to
a vertex of $f(\bd\psi)$ by a narrow $3$-dimensional ``ribbon''
(it need not be straight since we are looking only for
PL homeomorphic copies of $N$).

We observe that each of the sets $Q_{\omega j}$ and
$Q^+_{\omega j}$ can be deformation-retracted to
the corresponding loop $f(\bd\omega)$ or to
the corresponding triangle, respectively.
It follows that the complement of the union $U$
of all the $Q_{\omega j}$, $Q^+_{\omega j}$,
and $f(\CG_i)$ is path-connected (formally, this
follows from Alexander duality, since this union
is homotopy equivalent to a $2$-dimensional space).
Since all the considered embeddings are piecewise linear,
any two points on the boundary of $U$ can be connected
by a PL path within $\R^4\setminus U$.

Thus, the $3$-dimensional ``ribbon'' connecting
$f(\bd\omega)$ to $f(\bd\psi)$ can first go within
the appropriate $Q_{\omega j}$ to a point on the boundary,
then continue along a path connecting this boundary point
to a boundary point of $Q_{\psi j'}$, and then
reach $f(\bd\psi)$ within $Q_{\psi j'}$. 

In this way, we have allocated the desired ``private'' sets
$P_{\omega\psi}$ for all conflict gadgets $\TG_{\omega\psi}$,
and hence $K$ can be
PL embedded in $\R^4$ as claimed.
This finishes the proof of the special case
$k=2,d=4$ of Theorem~\ref{t:4hard}.\proofend

\section{\NP-hardness for higher dimensions}\label{s:nphard}

In this section we prove all the remaining cases of Theorem~\ref{t:4hard}.
The proof is generally very similar to the case $k=2$, $d=4$ treated
above: We will again reduce 3-SAT using clause gadgets and conflict
gadgets, but the construction of the gadgets and of their embeddings
require additional work.

By the monotonicity of \EMBED kd in $k$ mentioned in Section~\ref{s:intro},
it suffices to consider $d\ge 5$ and $k=\lceil (2d-2)/3\rceil$.
In the construction we will often use the integer
$\ell:= d-k-1$.

\subsection{The clause gadget}

The clause gadget $\CG=\CG(k,\ell)$
is very similar to a construction of Segal and
Spie\.z~\cite{segal-spiez}. We use the parameters $k,\ell,d$
as above. For the purposes of the present section we
need that $1\le \ell<k$ and $d-\ell=k+1\ge 3$ (which are
easy to verify using the definitions of $k$ and $\ell$ and
the assumption $d\ge 5$).

For the parameters $k,\ell,d$ as above,
we first define a simplicial complex $F=F(k,\ell)$ on the
vertex set $V:=\{v_0,v_1,\ldots,v_{d+1},p\}$ as the union
$F:=F_0\cup C_p$ of the following two sets of simplices:
\begin{itemize}
\item $F_0$ is the $k$-skeleton of the $(d+1)$-simplex
with vertex set $\{v_0,\ldots,v_{d+1}\}$;
\item $C_p$ consists of all the $(\ell+1)$-dimensional simplices on $V$ that contain $p$.
\end{itemize}
See Fig.~\ref{f:e4h-Fkl} for a schematic illustration; let us also
note that for $d=4$, $k=2$, $\ell=1$ we would get exactly the $F$
as in Section~\ref{ss:clgadget}.
\lipefig{e4h-Fkl}{A schematic illustration of $F(3,1)$.}

Let us consider some $\sigma\in C_p$. 
By removing from $F$ all simplices intersecting 
$\sigma$ (including $\sigma$),
we obtain  the $k$-skeleton
of a $(k+1)$-simplex, i.e., an $S^k$, which we call
the \emph{complementary sphere} $S_\sigma$.

Next, we fix three $(\ell+1)$-dimensional simplices
$\sigma_1,\sigma_2,\sigma_3\in C_p$, say
$\sigma_1:=pv_0v_2v_3\cdots v_{\ell+1}$, $\sigma_2:=
pv_0v_1v_3\cdots v_{\ell+1}$, and $\sigma_3:= pv_0v_1v_2v_4\cdots v_{\ell+1}$.
As in Section~\ref{ss:clgadget}, we make a
hole in the interior of each $\sigma_i$, i.e., we subdivide each
$\sigma_i$, $i=1,2,3$, and we remove a small $(\ell+1)$-simplex 
$\omega_i$ in the middle.
This yields the simplicial complex $\CG = \CG(k,\ell)$.

The $\omega_i$ are again called the  \emph{openings} of $\CG$,
and we set $O_G := \{ \omega_1,\omega_2, \omega_3\}$.
The \emph{complementary sphere} $S_{\omega_i}$ is
defined, with some abuse of notation, as the complementary
sphere of the simplex $\sigma_i\in C_p$ that contains $\omega_i$. 

\begin{lemma}[Higher-dimensional version of Lemma~\ref{l:cgadget}]\
\label{l:clause-gadget-higher}
\begin{enumerate}
\item[\rm(i)]
For every generic
PL embedding $f$ of $\CG$ into $\R^d$
there is at least one opening $\omega\in O_\CG$ such that
the images of the boundary  $\bd\omega$
 and of the complementary sphere
$S_\omega$ have odd linking number.
\item[\rm(ii)] For every opening $\omega\in O_\CG$
there exists a generic linear embedding of $\CG$ into $\R^d$ in which
the boundaries of the two openings different from $\omega$
are unlinked with their complementary spheres.
\end{enumerate}
\end{lemma}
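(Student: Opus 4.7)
The plan is to follow the proof of Lemma~\ref{l:cgadget} almost verbatim, replacing Van Kampen's result for the $2$-skeleton of the $6$-simplex by the Segal--Spie\.z obstruction~\cite{segal-spiez} for $F=F(k,\ell)$. The combinatorial key is that since $d=k+\ell+1$, any vertex-disjoint simplex pair $\{\sigma,\tau\}$ in $F$ with $\dim\sigma+\dim\tau=d$ is either (a) a pair in $F_0\times F_0$, or (b) a pair with $\sigma\in C_p$ an $(\ell+1)$-simplex and $\tau$ a $k$-face of the complementary sphere $S_\sigma\subseteq F_0$. Indeed, two simplices in $C_p$ share $p$ and hence are not disjoint, and once $\sigma$ uses $p$ together with $\ell+1$ of the $v_i$'s, the remaining $k+1$ vertices span precisely the $(k+1)$-simplex whose $k$-faces form $S_\sigma$.

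For part (i), I take a generic PL embedding $f$ of $\CG$ into $\R^d$ and extend it piecewise linearly and generically across the three openings to a generic PL map $f_0\colon F\to\R^d$. The Segal--Spie\.z theorem, a direct higher-dimensional analogue of Van Kampen's result for the $2$-skeleton of the $6$-simplex, asserts that
$$I(f_0)\;=\;\sum_{\{\sigma,\tau\}}|f_0(\sigma)\cdot f_0(\tau)|\equiv 1\pmod 2,$$
the sum running over vertex-disjoint pairs in $F$ with $\dim\sigma+\dim\tau=d$. Every pair of type (a), as well as every pair of type (b) with $\sigma\notin\{\sigma_1,\sigma_2,\sigma_3\}$, consists of two simplices that are disjoint in $\CG$ and hence are mapped disjointly by the embedding $f$, contributing $0$ to $I(f_0)$. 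Therefore $\sum_{i=1}^3|f_0(\sigma_i)\cdot f_0(S_{\sigma_i})|$ is odd, and exactly as in Lemma~\ref{l:cgadget}(i) the $i$-th term equals the mod-$2$ linking number of $f(\bd\omega_i)$ with $f(S_{\omega_i})$; so at least one of these linking numbers is odd.

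For part (ii), I would use the explicit affine configuration from~\cite{segal-spiez} that realizes $I\equiv 1$ by a single transversal double point: the $v_i$ are placed in general position in $\R^d$, and $p$ is placed at a suitable position (inside an appropriate $(\ell+1)$-face) so that the unique double point of the resulting generic linear map of $F$ lies in the relative interior of $\sigma_j\cap S_{\sigma_j}$, where $\omega=\omega_j$ is the opening we wish to ``activate.'' Subdividing $\sigma_j$ so that $\omega_j$ contains this double point in its interior, and removing $\omega_j$, produces a generic linear embedding of $\CG$ into $\R^d$. For each of the other two openings $\omega'$, the underlying linear map of $F$ sends the closed $(\ell+1)$-simplex $\omega'\subset\sigma'$ to a PL disk disjoint from $f(S_{\omega'})$---the only double point of the linear map lies on $\sigma_j$, not on $\sigma'$---and this disk exhibits $f(\bd\omega')$ as unlinked with $f(S_{\omega'})$.

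The principal obstacle is the oddness statement $I(f_0)\equiv 1\pmod 2$ invoked in (i); this is proved in~\cite{segal-spiez} via a Smith-index/equivariant cohomology computation for the deleted product of $F(k,\ell)$, and I would cite it rather than reprove it. The configuration used in (ii) simultaneously witnesses the nontriviality of this obstruction with a single intersection, just as Van Kampen's classical construction does for $k=2,\;d=4$.
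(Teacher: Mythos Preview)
Your treatment of part (ii) is correct and essentially identical to the paper's: both cite the explicit Segal--Spie\.z linear map of $F$ realizing a single transversal double point.

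Part (i), however, has a genuine gap. The ``Segal--Spie\.z theorem'' you invoke --- that $I(f_0)=\sum_{\{\sigma,\tau\}}|f_0(\sigma)\cdot f_0(\tau)|\equiv 1\pmod 2$ for \emph{every} generic PL map $f_0\colon F(k,\ell)\to\R^d$ --- is not what \cite{segal-spiez} proves, and it is not the ``direct higher-dimensional analogue'' of Van Kampen's parity lemma that you claim. Van Kampen's argument for the $2$-skeleton of $\Delta^6$ works because that complex is a \emph{full} skeleton: every finger move changes $I$ by an even number. For $d\ge 5$ one has $k\ge\ell+2$, so $F(k,\ell)$ is not the full $k$-skeleton of $\Delta^{d+2}$ --- the simplices through $p$ stop at dimension $\ell+1<k$ --- and the parity-invariance of $I$ under arbitrary generic PL maps is neither established in \cite{segal-spiez} nor derivable from the equivariant-cohomology computation you allude to. What Segal--Spie\.z actually supply is (a) the non-existence of a generic PL map of $F$ with no intersections between disjoint simplices, and (b) a Whitney-trick-type procedure (requiring $d-\ell\ge 3$) that, whenever $|g(\sigma)\cap g(S_\sigma)|$ is even for some $\sigma\in C_p$, modifies $g$ to make this intersection empty without creating new disjoint-simplex intersections.

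The paper uses exactly these two ingredients, and its argument is genuinely different from yours. It proves only the weaker claim actually needed: for any generic PL map $g$ of $F$ whose restriction to $F_0$ is an embedding, \emph{some} $\sigma\in C_p$ has $|g(\sigma)\cap g(S_\sigma)|$ odd. Suppose not; then iterate the procedure (b) to eliminate every $\sigma$--$S_\sigma$ intersection. Since $g|_{F_0}$ was an embedding and no new disjoint-simplex intersections were introduced, the resulting map has \emph{no} intersections between disjoint simplices at all, contradicting (a). Your type (a)/(b) dichotomy and the extension of $f$ across the openings are correct and are precisely how this weaker claim yields the lemma; what you are missing is replacing the unsupported global parity statement by this elimination argument.
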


\begin{proof}
Part (ii) is established in the proof of Lemma~1.1 in Segal and
Spie\.z~\cite{segal-spiez} (generalizing Van Kampen's
embedding mapping mentioned in the proof
of Lemma~\ref{l:cgadget}(ii)). They construct a PL
embedding of $F(k,\ell)$ (which they call $P(k,\ell)$,
while their $n$ is our $d-1$), but inspecting the 
first two paragraphs of their proof
reveals that their embedding is actually linear
(in the subsequent paragraphs,
 they modify the embedding on the interior
of one of the $(\ell+1)$-simplices from $C_p$,
but this serves only to show the claim about linking
number).
\medskip

For part (i), it clearly suffices to prove the following:
\begin{quotation}
\emph{{\bf Claim. }
For any generic PL mapping $g$ of $F$ in $\R^d$
whose restriction to $F_0$ is an embedding,
there is an $(\ell+1)$-dimensional simplex $\sigma\in C_p$ 
 such that $|g(\sigma)\cap g(S_\sigma)|$ is odd.
 } 
\end{quotation}

This claim follows easily from the \emph{proof} of
Lemma~1.4 in Segal and
Spie\.z \cite{segal-spiez}. Indeed, they give a procedure
that, given a generic PL map $g_1$ of $F$ into $\R^d$
such that $|g_1(\sigma)\cap g_1(S_\sigma)|$ is even for some
$\sigma$,
constructs a new generic PL map $g_2$ with
$g_2(\sigma)\cap g_2(S_\sigma)=\emptyset$ and such that
there are no new intersections between images of
\emph{disjoint} simplices (compared to $g_1$).\footnote{The
procedure requires $d-\ell\ge 3$, which is satisfied in our case.
In \cite{segal-spiez} this inequality is reversed by mistake.}

Assuming that there is a $g$ contradicting the claim, after finitely
many applications of the procedure we arrive at a generic
PL mapping $\tilde g$ such that $\tilde g(\sigma)\cap 
\tilde g(S_\sigma)=
\emptyset$ for every $\sigma\in C_p$. We claim
that then 
\begin{equation}\label{e:tautau}
\tilde g(\tau)\cap \tilde g(\tau')=\emptyset\mbox{ for every 
$\tau,\tau'\in F$ with $\tau\cap\tau'=\emptyset$}.
\end{equation}
Indeed, if (\ref{e:tautau}) fails for some
$\tau,\tau'$, one of $\tau,\tau'$ (say $\tau'$) must belong to
$C_p$, since $g$ restricted to $F_0$ is an embedding.
But then we get $\tau\in S_{\tau'}$---a contradiction.
Hence (\ref{e:tautau}) holds.
But no generic PL mapping $\tilde g$ satisfying (\ref{e:tautau})
exists according to \cite{segal-spiez} (end of the proof
of Lemma~1.4). This proves the claim and thus
also part (i) of Lemma~\ref{l:clause-gadget-higher}.

Let us remark that a perhaps more conceptual proof of part (i)
can be obtained using the results of Shapiro~\cite{Shapiro-realiz2d}
on the ``generalized Van Kampen obstruction'', but we would need
many preliminaries for presenting it. 
\end{proof}

\subsection{The conflict gadget}

Here we construct the conflict gadget $\TG=\TG(\ell)$, 
which depends only on the parameter $\ell$, and whose
dimension is $2\ell$. The conflict gadget
$\TG$ in Section~\ref{ss:conflgadget} is essentially
the same as the following construction for 
$\ell=1$, up to minor formal differences.
In addition to
the inequalities among the parameters
 mentioned earlier, here we 
also need $2\ell\le k$ (which again holds
in our setting).

In the $\ell=1$ case we attached a $2$-dimensional disk by its
boundary to the $1$-dimensional complex $E$. The $\ell$-dimensional
version of $E$ consists of two disjoint copies $\Sigma^\ell_a$ and $\Sigma^\ell_b$ of the boundary
of the $(\ell+1)$-simplex connected by an edge $c$ (see Fig.~\ref{f:e4h-E1}).
\lipefig{e4h-E1}{A higher-dimensional version of $E$.}
To this $E$ we are going to
attach the $(2\ell)$-dimensional ball $B^{2\ell}$  by its
boundary. For $\ell=1$ the result was topologically a ``squeezed'' version
of the $2$-dimensional torus $S^1\times S^1$;
for larger $\ell$ it is going to be the higher-dimensional
``torus'' $S^\ell\times S^\ell$, again suitably squeezed.

\heading{Attaching a ball to \boldmath$S^\ell\vee S^\ell$. }
Before defining $\TG$ itself, we 
define a certain mapping $g\:S^{2\ell-1}\to
S^\ell\vee S^\ell$, where $S^\ell\vee S^\ell$ is a wedge
of two spheres, to be defined below. This construction
is based on  the \emph{Whitehead product} 
in homotopy theory.
As we will see, attaching the boundary of $B^{2\ell}$
to $S^\ell\vee S^\ell$ via $g$ results topologically
in $S^\ell\times S^\ell$ (without any squeezing).

The wedge $S^\ell\vee S^\ell$ consists of
two copies of the sphere $S^\ell$ glued together at one point.
For our purposes, we represent $S^\ell\vee S^\ell$ concretely as follows.
We consider $S^\ell$ geometrically as the unit sphere in $\R^{\ell+1}$,
we choose a distinguished point $s_0=(1,0,0,\ldots,0)\in S^{\ell}$,
and we let $S^\ell\vee S^\ell$ be the subspace
$(S^\ell\times \{s_0\})\cup(\{s_0\}\times S^\ell)$ of
$\R^{\ell+1}\times \R^{\ell+1}=\R^{2\ell+2}$.
For $\ell=1$, we thus get two unit circles lying in perpendicular
$2$-flats in $\R^4$ and meeting at the point $(s_0,s_0)$.

For defining the map $g$, we need to represent the
ball $B^{2\ell}$ not as the standard Euclidean unit ball,
but rather as the product $B^\ell\times B^\ell$
(which is clearly homeomorphic to $B^{2\ell}$).
Then we have
\begin{equation}\label{e:s2l-1repr}
S^{2\ell-1}\cong \bd(B^\ell\times B^\ell)=
(B^\ell\times S^{\ell-1})\cup(S^{\ell-1}\times B^{\ell});
\end{equation}
see the left part of 
Fig.~\ref{f:e4h-sq} for the (rather trivial) case $\ell=1$.
(Indeed, for arbitrary sets $A\subseteq \R^m$ and $B\subseteq \R^n$
we have $\bd(A\times B)=(A\times \bd B)\cup (\bd A\times B)$,
as is easy to check.)

\lipefig{e4h-sq}{Representing
$S^1$ as $(B^1\times S^0)\cup(S^0\times B^1)$ (left);
mapping it to $S^1\vee S^1$ (middle); squeezing
the $S^1$'s to ``lollipops'' (right).
We note that $S^1\vee S^1$ and $L^1\vee L^1$ actually
live in $\R^4$.}

As is well known, if we shrink the boundary of an $n$-ball to
a single point, the result is an $n$-sphere. Let us fix a mapping
$\gamma\: B^{\ell}\to S^{\ell}$ that sends all of $\bd B^{\ell}$
to the distinguished point $s_0$ and is a homeomorphism on
the interior of $B^\ell$. Now we are ready to define the map $g$.
Namely, we define
$\overline g\:B^{2\ell}\to S^\ell\times S^\ell$ by
$$
\overline g(x,y)=(\gamma(x),\gamma(y)),
$$
where we still consider $B^{2\ell}$ as $B^\ell\times B^\ell$
and $x$ comes from the first $B^\ell$ and $y$ from the second.
Then $g$ is the restriction of $\overline g$ to $S^{2\ell-1}=\bd B^{2\ell}$.

For the image of $g$ we have, using (\ref{e:s2l-1repr}),
$$
g(S^{2\ell-1})=g(B^\ell\times S^{\ell-1})\cup
g(S^{\ell-1}\times B^{\ell})=
(S^\ell\times\{s_0\})\cup(\{s_0\}\times S^{\ell})=S^\ell\vee S^\ell.
$$
It remains to observe that $\overline g$ restricted to $\interior B^{2\ell}$
is a homeomorphism onto $(S^\ell\times S^{\ell})\setminus
(S^\ell\vee S^\ell)$. Hence, the result of attaching
the boundary of $B^{2\ell}$ to $S^\ell\vee S^\ell$ via $g$
is indeed homeomorphic to $S^\ell\times S^{\ell}$ as claimed.

\heading{Squeezing. } Now we define a ``squeezing  map''
from $S^\ell\vee S^\ell$ to $E$. We let
the \emph{$\ell$-lollipop} $L^\ell$ be an $\ell$-dimensional
sphere of radius $\frac12$ with attached segment
(``stick'') of length $1$; see Fig.~\ref{f:e4h-loll}.
\lipefig{e4h-loll}{The map $\lambda$ squeezing $S^\ell$ to the lollipop $L^\ell$.}
Formally, $L^\ell:=\bd B(-\frac12s_0,\frac12)\cup[0,s_0]$,
where $B(x,r)$ stands for the ball of radius $r$ centered
at $x$.
We let $\lambda\:S^\ell\to L^\ell$ be the projection that
moves each point of $S^\ell$ in direction perpendicular
to the axis $[-s_0,s_0]$.

Now, with $L^\ell\vee L^\ell:= (L^\ell\times\{s_0\})\cup
(\{s_0\}\times L^\ell)$, we have the map
$\lambda\vee\lambda\:S^\ell\vee S^\ell\to L^\ell\vee L^\ell$
(given by $(x,y)\mapsto (\lambda(x),\lambda(y))$).
Finally, $L^\ell\vee L^\ell$
can be identified with the complex $E$ as above by 
a suitable homeomorphism, and we arrive at the map
$$
r=(\lambda\vee\lambda)\circ g\: S^{2\ell-1}\to E
$$
(where the homeomorphism of
$L^\ell\vee L^\ell$ with $E$ is not explicitly shown). 

The clause gadget $\TG$ is obtained by attaching the boundary
of $B^{2\ell}$ to $E$ via the map~$r$. Of course, we want $\TG$
to be a simplicial complex, and so in reality we use a
suitable PL version of the attaching map $r$ (we have not presented
it this way since the description above seems more accessible).

For the forthcoming  proof of an analogue of
Lemma~\ref{l:confl-gadget}, we need the following observation. 

\begin{obs} 
\label{obs:collapse-hemispheres}
Let $\kappa\: L^\ell \vee L^\ell \rightarrow S^\ell\vee S^\ell$ be the 
quotient map corresponding to contracting the ``stick'' $c$ of the 
double-lollipop to a single point. Then the composition 
$\kappa \circ (\lambda \vee \lambda)\: S^\ell \vee S^\ell \to 
S^\ell\vee S^\ell$ is homotopic to the identity 
on $S^\ell \vee S^\ell$. 
\end{obs}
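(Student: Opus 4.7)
The plan is to check that $\kappa\circ(\lambda\vee\lambda)$ preserves the wedge decomposition, then show summand-wise that the restriction to a single $S^\ell$ is a based self-map of $S^\ell$ of degree $+1$ (in fact a quotient by a contractible subcomplex containing the basepoint), and hence homotopic to the identity; finally, wedge the two summand-wise homotopies together.

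In detail, first I would analyse $\lambda\: S^\ell\to L^\ell$. Since $\lambda$ moves each point perpendicularly to the axis $[-s_0,s_0]$, and the horizontal slice of $L^\ell$ at height $x_1>0$ is a single point on the stick $[0,s_0]$ whereas the slice at height $x_1<0$ is an $(\ell-1)$-sphere on the small sphere of $L^\ell$, it follows that $\lambda$ collapses the closed right hemisphere $D^\ell_+:=\{x\in S^\ell : x_1\ge 0\}$ onto the stick and sends the open left hemisphere $\{x_1<0\}$ homeomorphically (by a radial rescaling within each slice) onto the small sphere of $L^\ell$ minus its attachment point to the stick. Composing with $\kappa$, which collapses the two concatenated sticks to the wedge point of $S^\ell\vee S^\ell$, one sees that $q:=\kappa\circ(\lambda\vee\lambda)$ preserves the wedge decomposition and that its restriction to each summand $S^\ell$ is a based self-map collapsing $D^\ell_+$ to the basepoint and sending the open left hemisphere homeomorphically onto the complement of the basepoint.

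Next I would invoke the standard fact that such a map is based-homotopic to $\id_{S^\ell}$: since $D^\ell_+$ is a contractible based subcomplex of $S^\ell$, the quotient map $S^\ell\to S^\ell/D^\ell_+$ is a based homotopy equivalence, and the summand-restriction of $q$ is precisely this quotient map followed by a specific homeomorphism $S^\ell/D^\ell_+\cong S^\ell$. Equivalently, a concrete homotopy from $\id_{S^\ell}$ to this restriction can be built by deformation-retracting $D^\ell_+$ onto $\{s_0\}$ along meridians and extending the deformation over a collar of $\bd D^\ell_+$ while reparametrising the open left hemisphere homeomorphically onto itself. Orientation compatibility (to ensure the resulting self-map has degree $+1$ rather than $-1$) is automatic because $\lambda$ preserves orientation on the open left hemisphere by construction.

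The main technical nuisance, rather than a genuine obstacle, is the bookkeeping of the various identifications (between $L^\ell\vee L^\ell$ and $E$, between the far ends of the two sticks at the wedge point of $L^\ell\vee L^\ell$, and between the stick-attachment point of each small sphere and the wedge point of the target $S^\ell\vee S^\ell$). Once these are chosen consistently, wedging the two summand-wise homotopies yields the required based homotopy $\id_{S^\ell\vee S^\ell}\simeq \kappa\circ(\lambda\vee\lambda)$.
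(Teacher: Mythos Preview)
Your proof is correct and essentially coincides with the paper's argument. The paper's proof is a single sentence: it observes that $\kappa\circ(\lambda\vee\lambda)$ is precisely the quotient map collapsing the contractible subspace $H^\ell\vee H^\ell$ (the wedge of the two closed hemispheres centred at $s_0$) to a point, and then appeals to a figure. Your analysis of $\lambda$ leads to the same identification, and your summand-wise argument (collapse $D^\ell_+$ on each factor, check degree~$+1$, then wedge the based homotopies) is just a more explicit way of justifying why that quotient map is homotopic to the identity; the bookkeeping you flag about the various identifications is exactly what the paper sweeps under the figure reference.
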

\begin{proof}
Let $H^\ell:=\{ x \in S^\ell :
 \langle s_0,x\rangle\geq 0\}$ 
be the closed hemisphere centered at $s_0$.
The assertion follows by observing that $\kappa \circ (\lambda \vee \lambda)$ is the quotient map corresponding to contracting the subset $H^\ell \vee H^\ell$ of $S^\ell\vee S^\ell$ to a single point; see Fig.~\ref{f:e4h-hemispherecollapse}.
\lipefig{e4h-hemispherecollapse}{Contracting the wedge of two hemispheres.} 
\end{proof}

\begin{lemma}[Higher dimensional version of Lemma~\ref{l:confl-gadget}]\
\label{l:confl-gadget-higher}
\begin{enumerate}
\item[\rm (i)]{\rm (Based on \cite[Lemma~2.2]{segal-skopenkov-spiez}).} 
Let $\Sigma^\ell_a$ and $\Sigma^\ell_b$ denote the two
$\ell$-spheres (boundaries of $(\ell{+}1)$-simplices) 
contained in $E\subset \TG$. Let $S_a^k$ and $S_b^k$ be PL $k$-spheres. 
Then there is no PL embedding%
$f$ of the disjoint union $S_a^k \sqcup S_b^k \sqcup \TG$ into $S^d$
such that
\begin{itemize}
\item
the $\ell$-sphere $f(\Sigma_a^\ell)$ and the $k$-sphere $f(S_a^k)$ have odd
 linking
number, and so do $f(\Sigma_b^\ell)$ and 
$f(S_b^k)$;
\item
$f(\Sigma^\ell_a)$ and $f(S_b^k)$ are unlinked,
and so are $f(\Sigma^\ell_b)$ and $f(S_a^k)$.
\end{itemize}
\item[\rm (ii)]
Let $f$ be a generic linear embedding of $E$ in%
\footnote{It follows from our assumptions on $d$ and $k$ that
$d \geq 2\ell+3$. Therefore, when, in the course of the reduction,
we construct an embedding of a complex associated with a satisfiable
formula, we can afford to embed each conflict gadget in its own ``private'' 
$(2\ell+2)$-dimensional set. Since two $\ell$-spheres in dimension $2\ell+2$
are never linked, we do not need to make an explicit unlinking assumption
as in Lemma~\ref{l:confl-gadget}.
} $\R^{2\ell + 2}$, and let
$\delta > 0$. Then there is a PL embedding $\overline{f}$ of $X$ in $\R^{2\ell +
2}$ extending $f$ whose image is
contained in the neighborhood $N = N(f, \delta) := N(T_a, \delta) \cup N(f(\Sigma^\ell_b),
\delta) \cup N(f(c), \delta)$, where $T_a$ is the 
$(\ell+1)$-dimensional simplex
bounded by $f(\Sigma_a^\ell)$.
\end{enumerate}
\end{lemma}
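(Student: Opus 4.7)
For part (i), my plan is to adapt \cite[Lemma~2.2]{segal-skopenkov-spiez}. Assuming for contradiction that the asserted embedding $f$ exists, the linking hypotheses translate by Alexander duality into information about $f$ in the complement $Y:=S^d\setminus(f(S_a^k)\cup f(S_b^k))$: the class $[f|_{\Sigma_a^\ell}]$ is nonzero in $H_\ell(Y;\Z_2)$ on the summand Alexander-dual to $f(S_a^k)$, and symmetrically for $\Sigma_b^\ell$; the two unlinkedness hypotheses provide PL nullhomologies of each $\ell$-sphere in the complement of the opposite $k$-sphere. Since $2\ell\le k$, the relevant stable range is available so that these mod-$2$ classes really behave like the two wedge generators in $H_\ell(S^\ell\vee S^\ell;\Z_2)$.

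Next, I would observe via Observation~\ref{obs:collapse-hemispheres} that the attaching map $r\:S^{2\ell-1}\to E$ defining $\TG$ is homotopic to the Whitehead-product attaching map $g\:S^{2\ell-1}\to S^\ell\vee S^\ell$ for $S^\ell\times S^\ell$. The embedded $(2\ell)$-ball of $\TG$ provides a nullhomotopy of $f|_E\circ r$ in $Y$, so the Whitehead product of the classes of $f|_{\Sigma_a^\ell}$ and $f|_{\Sigma_b^\ell}$ must vanish in $\pi_{2\ell-1}(Y)$. A cup-product (or mod-$2$ Hopf-invariant) computation in $H^{*}(Y;\Z_2)$ — using that the two cohomology generators dual to $f(S_a^k)$ and $f(S_b^k)$ evaluate oddly on $f|_{\Sigma_a^\ell}$ respectively $f|_{\Sigma_b^\ell}$ — shows that this Whitehead product is in fact nonzero, giving the required contradiction. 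This mirrors the structure of the $\ell=1$ argument in \cite{FKT}.

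For part (ii), the construction is geometric and extends the $\ell=1$ picture of Fig.~\ref{f:e4h-tor}. A regular neighborhood $N(T_a,\delta)$ in $\R^{2\ell+2}$ is PL-homeomorphic to $T_a\times B^{\ell+1}$. Inside its boundary $(T_a\times S^\ell)\cup(\Sigma_a^\ell\times B^{\ell+1})$ I embed the ``main body'' of $\TG$ --- topologically $S^\ell\times S^\ell$ with a small open $(2\ell)$-disk removed --- using the squeezing $\lambda\vee\lambda$: over each point of $\Sigma_a^\ell$ the fibre is a full $S^\ell$, while over the interior of $T_a$ the fibre is collapsed as prescribed by $\lambda\vee\lambda$. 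The remaining $(2\ell)$-disk is then pushed along the ribbon $N(f(c),\delta)$ (a $(2\ell+2)$-ball) into the tubular neighborhood $N(f(\Sigma_b^\ell),\delta)\cong\Sigma_b^\ell\times B^{\ell+2}$, where a thin PL sleeve closes it off onto $f(\Sigma_b^\ell)$. The codimensions $\ell+1$ around $T_a$ and $\ell+2$ around $\Sigma_b^\ell$ leave ample transverse room, and the genericity of $f$ makes the regular-neighborhood identifications available.

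The main obstacle I anticipate is the cup-product computation in part (i): precisely identifying the Whitehead-product obstruction in $\pi_{2\ell-1}(Y)$ with the mod-$2$ product of classes dual to $f(S_a^k)$ and $f(S_b^k)$, and hence showing it survives nontrivially whenever both linking hypotheses hold. Part (ii) is essentially geometric bookkeeping, but the homotopy-theoretic detection mechanism in part (i) is the delicate piece on which the whole argument hinges; a careful adaptation of \cite{segal-skopenkov-spiez} should supply the missing details.
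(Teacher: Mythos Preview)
Your strategy for part~(i) matches the paper's: show that $f\circ r$ is nontrivial in $\pi_{2\ell-1}(C)$, where $C=S^d\setminus f(S_a^k\sqcup S_b^k)$, contradicting the nullhomotopy $f\circ\overline r$; and you correctly invoke Observation~\ref{obs:collapse-hemispheres} to pass from $r$ to the Whitehead map $g$. The paper adds two reduction steps you do not mention. First, it grows a $k$-dimensional finger from $f(S_a^k)$ to $f(S_b^k)$ (possible since the codimension is $\ge 2$) to replace the disjoint union by a \emph{wedge} $S_a^k\vee S_b^k$; this is needed because \cite{segal-skopenkov-spiez} works with wedges, relying on Lickorish's unknotting theorem to identify the complement with that of a \emph{standard} wedge of spheres. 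Second, it deformation-retracts the stick $f(c)$ to a point (pulling $\ell$-dimensional fingers from the two $\ell$-spheres) to pass from $X$ to the honest $X'\cong S^\ell\times S^\ell$. After these reductions the paper simply cites \cite[Lemma~2.2]{segal-skopenkov-spiez} as a black box. Your plan to detect the obstruction by a direct cup-product computation in the complement of the disjoint union is plausible, but the unknotting step is exactly what makes that computation tractable in \cite{segal-skopenkov-spiez}; without it you would have to argue that the Whitehead product survives in an a~priori unknown homotopy type, which is the point you yourself flag as delicate.

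For part~(ii) the paper proceeds quite differently and more concretely than your regular-neighborhood outline: after normalizing $f(E)$ to $L^\ell\vee L^\ell\subset\R^{\ell+1}\times\R^{\ell+1}$, it writes down an explicit map $\tilde f(x,y)=\bigl(u(x,\dist(y,s_0)),\,v(y,\dist(x,s_0))\bigr)$ on $S^\ell\times S^\ell$, where $u,v\colon S^\ell\times[0,\infty)\to\R^{\ell+1}$ are piecewise-linear ``inflations'' of the squeezing map $\lambda$ (with $v$ additionally shrinking towards $s_0$ for large parameter). Injectivity off $S^\ell\vee S^\ell$ and the $4\delta$-neighborhood estimate are then checked by hand. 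Your sketch has a conceptual slip: $X$ is not ``$S^\ell\times S^\ell$ with a small open $(2\ell)$-disk removed'' but a \emph{quotient} of $S^\ell\times S^\ell$ (the squeezing $\lambda\vee\lambda$ identifies points on the wedge), so one cannot embed a punctured product near $T_a$ and then cap off a single leftover disk near $\Sigma_b^\ell$; what must live near $\Sigma_b^\ell$ is the $\ell$-sphere $\Sigma_b^\ell$ together with a Whitehead-type flange of the attached $B^{2\ell}$, not a disk cap. The paper's coordinate construction handles the squeezing and the filled/unfilled asymmetry between $T_a$ and $\Sigma_b^\ell$ in one stroke, and your outline would need substantial reworking to do the same.
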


\heading{Proof of (i). }
Part (i) follows from 
the proof of \cite[Proof of Lemma~2.2]{segal-skopenkov-spiez} 
with only minor modifications. 
First, before giving a formal proof,
we describe the basic approach
of \cite{segal-skopenkov-spiez}, which also applies to the proof
of Lemma~\ref{l:confl-gadget}(i). 

Suppose that a PL embedding $f$ as in (i) above exists. Let $C$ denote the complement $\R^d\setminus f(S_a^k \sqcup S_b^k)$, let 
$r\:S^{2\ell-1} \to E$
be the attaching map used in the construction of $X$, 
and let $\overline r\:B^{2\ell}\to \TG$ be the extension
of $r$ to $B^{2\ell}$ (formally, $\overline r$ is the quotient map). 

The basic strategy is as follows: 
On the one hand, using the assumptions about linking numbers, 
one shows that $f\circ r$ defines a nontrivial element of the homotopy 
group $\pi_{2\ell-1}(C)$. On the other hand, $f\circ \overline{r}$ 
witnesses that $f\circ r$ is homotopically trivial---a contradiction.

As in \cite{segal-skopenkov-spiez}, one distinguishes two cases: $\ell=1$ and $\ell>1$. 
In the case $\ell=1$, we are dealing with the fundamental group $\pi_1(C)$,
and the proof is essentially identical to that of \cite[Lemma~7]{FKT},
i.e., our Lemma~\ref{l:confl-gadget}, which we briefly summarize
for the reader's convenience.

For showing that $f\circ r\:S^1\to C$ is homotopically nontrivial,
one first observes that $\pi_1(E)$ is the free group on two 
generators $a$ and $b$, and the attaching map $r\:S^1\to E$
corresponds to the commutator $aba^{-1}b^{-1}$, which
is a nontrivial element of $\pi_1(E)$. So it suffices to
show that the map $f_\star\:\pi_1(E)\to \pi_1(C)$ induced by
the restriction $f|_E$  is injective. To this end,
one first considers the homomorphisms $f_{*1}$ and
$f_{*2}$ induced by $f|_E$
in the first and second \emph{homology}.

By Alexander duality, the complement $C$ has the same homology 
(with $\Z_2$-coefficients, say) as $S^1\vee S^1$,  and thus
$H_1(C;\Z_2)=\Z_2 \oplus \Z_2$ and $H_2(C;\Z_2)=0$.
For $E$ we have $H_1(E;\Z_2)\cong \Z_2 \oplus \Z_2$ with a basis represented 
by the two circles $\Sigma^1_a$ and $\Sigma^1_b$. 
The assumption on the linking numbers imply that
$f_{*1}$ is an isomorphism, and $f_{*2}$ is trivially surjective.
Then the injectivity of the homomorphism $f_\star$ of the
fundamental groups follows from
 a theorem of Stallings~\cite{Stallings:HomologyCentralSeriesGroups-1965},
which finishes the case $\ell=1$.

In the case $\ell>1$, the proof that $f\circ r$ defines 
a nontrivial element of $\pi_{2\ell-1}(C)$ requires somewhat more 
advanced machinery. Segal et al.~\cite{segal-skopenkov-spiez} prove 
essentially the same assertion as in part~(i) of the lemma, 
with the following differences: 
\begin{enumerate}
\item $\TG$ is replaced by $X'$, which is obtained 
by attaching $B^{2\ell}$ to $\Sigma^\ell_a \vee \Sigma^\ell_b$ via the map $g$
(as described above) and hence homeomorphic to $S^\ell \times S^\ell$. 
\item The disjoint union $S^k_a\sqcup S^k_b$ is replaced by the wedge%
\footnote{Wedges are used for technical reasons: 
By a theorem of Lickorish~\cite{Lickorish:UnknottingCones-1965}, 
any embedding (PL or even topological)
of a wedge of spheres of codimension at least 3 
is unknotted, i.e., ambient isotopic to a standard embedding.}
$S^k_a \vee S^k_b$. 
\end{enumerate}
They show that if there were an embedding $f$ of 
$(\Sigma^\ell_a\vee \Sigma^\ell_b) \sqcup (S^k_a \vee S^k_b)$ 
with the linking properties as in part (i) of the lemma, 
 $f\circ g$ would be a nontrivial element of $\pi_{2\ell-1}(C)$.
 
\medskip

Now we begin with a formal proof of 
Lemma~\ref{l:confl-gadget-higher}.
Instead of modifying the proof of \cite{segal-skopenkov-spiez},
 we show how to reduce our assertion to theirs.
 Suppose there were a bad embedding $f$ of $S^k_a \sqcup S^k_b \sqcup \TG$ 
as in the lemma. 
Since the codimension of the image $f(S_a^k \sqcup S_b^k \sqcup \TG)$ 
is at least $2$, we can grow a $k$-dimensional finger from $f(S^k_a)$ 
towards $f(S^k_b)$ avoiding $f(\TG)$ until the finger touches $f(S^k_b)$ 
in a single point. This results in an embedding of $(S^k_a \vee S^k_b) 
\sqcup \TG$. For simplicity, we denote this modified embedding by $f$ as well.

 We observe that when pulling the finger, we can pull along a 
$(k+1)$-dimensional image of $B^{k+1}$ filling $f(S_a^k)$, 
and so the images are still linked or unlinked as
in the assumption of the lemma.

Next, consider the image $f(E)$ of the double lollipop in $C$. We modify $f$
as follows. We deformation retract the arc $f(c)$ to its midpoint $m$, pulling along
$\ell$-dimensional fingers from the two $\ell$-spheres $f(\Sigma^\ell_a)$ and 
$f(\Sigma^\ell_b)$, so that at the end of the deformation, the fingers touch in 
the single point $m$. This describes a continuous deformation of $f|_E$ that 
only changes $f|_E$ on the segment $c$ and in two small neigborhoods $U_a$ 
and $U_b$ of the endpoints of $c$ in the $\ell$-spheres (these neighborhoods 
provide the ``material'' for the fingers). We have to take care to pull along the parts 
of $B^{2\ell}$ attached to $U_a$ and $U_b$, respectively, i.e., we extend the deformation 
to a continuous deformation of $f$ on all of $X$ that changes $f$ only on a small neighborhood $V$ in $X$
of $c\cup U_a \cup U_b$. The whole deformation can be carried out so that the image of $V$ remains
in a small $\varepsilon$-neighborhood of the original image $f(E)$ throughout the deformation. 
Let $f'$ be the final modified map from $S_a^k \sqcup S_b^k \sqcup \TG$ into $S^d$ (note that 
we made no changes on the two $k$-spheres). The map $f'$ maps the ``bent stick'' 
$c$ of the double lollipop constantly to $m$ (in particular, it is not an embedding), and 
it induces a unique embedding $f''\colon X' \rightarrow C$ such that $f''$ agrees with $f'$ 
on the interior of $B^{2\ell}$ and $f'' \circ \kappa = f'$ on $E$, where $\kappa$ is the
map from Observation~\ref{obs:collapse-hemispheres}.
Moreover, the map 
$f\circ r=f\circ (\lambda \vee \lambda)\circ g\:S^{2\ell-1}
 \to C$ is deformed into the map 
$f'' \circ \kappa \circ (\lambda \vee \lambda) \circ g\:S^{2\ell-1}\to C$.
Thus, $f\circ r$ and $f''\circ \kappa \circ (\lambda \vee \lambda)\circ g$ 
define the same element of $\pi_{2\ell-1}(C)$. 
However, by the observation, the latter map is homotopic to 
$f''\circ g$. Thus, $f\circ r$ and $f''\circ g$ define the same 
element of $\pi_{2\ell-1}(C)$. But the former is trivial, 
as witnessed by $f\circ \overline{r}$, while the latter is 
not according to \cite{segal-skopenkov-spiez}---a contradiction. 
This completes the proof of (i).

\heading{Proof of (ii). }
For easier presentation, we describe an embedding $\overline f$ that
is not apriori PL; it is routine to replace it by a PL embedding.

Applying a suitable homeomorphism $\R^{2\ell+2}\to \R^{2\ell+2}$,
we may assume that $f(E)$ is actually $L^\ell\vee L^\ell$.
Let $\overline{L}^\ell$ denote the $\ell$-lollipop with
its $\ell$-sphere filled (i.e., $\overline L^\ell:=
B(-\frac12s_0,\frac12)\cup [0,s_0])$. It suffices to embed
$X$ in the $\delta$-neighborhood of $L^\ell\vee \overline L^\ell$
for $\delta>0$ arbitrarily small; actually, for notational
convenience, we will eventually get $4\delta$ instead of $\delta$.

Instead of specifying the embedding $\overline f\:\TG\to\R^{2\ell+2}$
directly,
we define a mapping $\tilde f\: S^\ell\times S^\ell \to \R^{2\ell+2}$
that coincides with $\lambda\vee\lambda$ on $S^\ell\vee S^\ell$
and maps the rest of $S^\ell\times S^\ell$ homeomorphically.
Then $\overline f$ can be given as (considering $E$
identified with $L^\ell\vee L^\ell$)
$$\overline f(z)=\alterdef{
z &\mbox{ for $z\in E$,}\\
\tilde f(z)&\mbox{ for $z\not\in E$.}}
$$

Writing a point of $S^\ell\times S^\ell$ as $(x,y)$,
we define $\tilde f$ using two auxiliary maps $u,v\:S^\ell\times [0,\infty)\to
\R^{\ell+1}$:
$$
\tilde f(x,y) := \Bigl(u(x,\dist(y,s_0)),v(y,\dist(x,s_0))\Bigr).
$$
For defining $u(x,t)$, we think of $t$ as time. For $t=0$,
the image $u(S^\ell,t)$ is the lollipop $L^\ell$, while
for all $t>0$ it is topologically a sphere, which looks 
almost like the lollipop; see Fig.~\ref{f:e4h-map-u}. 
\lipefig{e4h-map-u}{The (images of the) mappings $u(*,t)$.}
Concretely, we set
$$
u(x,t):=\alterdef{ t x+(1-t)\lambda(x)&\mbox{ for } 0\le t\le\delta,\\
       \delta x+(1-\delta)\lambda(x)&\mbox{ for } t\ge\delta.}
$$
As for $v$, we let it coincide with $u$ for $t\le\delta$
(see Fig.~\ref{f:e4h-map-v}). For $t=2\delta$,
we set $v(x,2\delta):=(x_1,\delta x_2,\delta x_3,\ldots,\delta x_{\ell+1})$,
and for all $t\ge 3\delta$ we set
$v(x,t):= \delta(x-s_0)+s_0$ (i.e., the sphere is shrunk by
the factor of $\delta$ so that it still touches $s_0$).
 On the intervals $[\delta,2\delta]$
and $[2\delta,3\delta]$ we interpolate $v(x,t)$ linearly in $t$.
\lipefig{e4h-map-v}{The (images of the) mappings $v(*,t)$.}

The $\overline f$ defined in this way is clearly continuous
and  coincides with $\lambda\vee\lambda$ on $S^\ell\vee S^\ell$.
Next, we want to show $\overline f(x,y)\ne 
\overline f(x',y')$ whenever $(x,y)\ne(x',y')$ and none
of $x,x',y,y'$ equals $s_0$.
First we note that $u(x,t)\ne u(x',t')$
whenever $x\ne x'$ and $t,t'> 0$, and thus we may assume
$x=x'$, $y\ne y'$. Then we just use injectivity of $v(*,t)$
for every $t>0$.

It remains to check that the image of $\overline f$ lies
close to $\overline L^\ell\vee L^\ell$. The image $u(S^\ell,t)$ 
is $\delta$-close to $L^\ell$ for all $t$, and the image
$v(S^\ell,t)$ is $2\delta$-close to $s_0$ whenever $t\ge3\delta$.
Thus, whenever $\dist(x,s_0)\ge 3\delta$,
we have $\overline f(x,y)$ lying $3\delta$-close to $L^\ell\times \{s_0\}$.

Next, let us assume $\dist(x,s_0)\le 3\delta$. Then
$u(x,t)$ is $3\delta$-close to $s_0$ for all $t$,
and observing that $v(y,t)$ always lies $\delta$-close
to the filled lollipop $\overline L^\ell$, we conclude
 that $\overline f(x,y)$ is
$4\delta$-close to $\{s_0\}\times \overline L^\ell$.
\proofend

\subsection{The reduction}

Having introduced the clause gadget and the conflict gadget,
the rest of the reduction is almost the same as
in Section~\ref{ss:reduct}, and so we mainly point out the
(minor) differences. 

Given a 3-CNF formula $\varphi$, the simplicial complex
is pasted together from the gadgets exactly as in Section~\ref{ss:reduct};
we have $\dim K=\max(k,2\ell)=k$. For $\varphi$ unsatisfiable,
nonembeddability
of $K$ is shown using Lemmas~\ref{l:clause-gadget-higher}(i) and \ref{l:confl-gadget-higher}(i)
instead of Lemmas~\ref{l:cgadget}(i) and
\ref{l:confl-gadget}(i), but otherwise
in the same way as in Section~\ref{ss:reduct}.

Given a satisfiable formula $\varphi$, we again begin with
embedding the clause gadgets, this time using 
Lemma~\ref{l:clause-gadget-higher}(ii). 
%
For an opening
$\omega$ of a clause gadget $G_i$, we can again obtain 
a set $Q_\omega \subset \R^d$ with $Q_\omega\cap G_i=
\Sigma$, where $\Sigma=f(\bd\omega)$, this time homeomorphic
to $T^\eps\times B^{k}$ (where $T$
is the $(\ell+1)$-dimensional simplex bounded by $\Sigma$ and
$T^\eps$ is the part of it $\eps$-close to $\Sigma$).
Similarly we can build, for a non-witness opening
$\omega$, the set $Q_\omega^+$ homeomorphic
to $T\times B^{k}$.

Now we need to define within each $Q_\omega$ and $Q^+_\omega$
 the ``private pieces'' $Q_{\omega j}$ and $Q^+_{\omega j}$,
 $j=1,2,\ldots,t$,  within each $Q_\omega$ and $Q^+_\omega$, 
respectively. This time first we choose 
pairwise disjoint sets $B_1,\ldots,B_t\subset \bd B^{k}$,
each homeomorphic to $B^{\ell+1}$ (for this we need
$k\ge \ell+2$, which holds in our setting), we let $W_j$ be the
cone with base $B_j$ and apex in the center of $B^{k}$,
and we let $w_j$ be the boundary of $W_j$ (not including
the interior of the base $B_j$). We have $W_j$ homeomorphic
to $B^{\ell+2}$ and $w_j$ to $B^{\ell+1}$, and this allows us
to construct $Q_{\omega j}$ homeomorphic to a
$(2\ell+2)$-dimensional neighborhood of $\Sigma$,
and $Q^+_{\omega j}$ homeomorphic to a
$(2\ell+2)$-dimensional neighborhood of~$T$.

The rest of the embedding construction
can be copied from Section~\ref{ss:reduct} almost verbatim.
This concludes the proof of Theorem~\ref{t:4hard}.
\proofend

\section{Linkless embeddings}\label{s:linkless}

A PL embedding $f$ of a graph $G$ into $\R^3$ is called \emph{linkless} 
if the images of any two vertex-disjoint cycles in $G$ are unlinked, i.e.,
each of them bounds a PL disk that is disjoint from the other.

Robertson, Seymour, and Thomas~\cite{RobertsonSeymourThomas:SurveyLinklessEmbeddings-1993,RobertsonSeymourThomas:SachsLinklessEmbeddingConjecture-1995} 
showed, establishing a conjecture of Sachs,
 that a finite graph $G$ is linklessly embeddable in $\R^3$ if and only if $G$ does not contain 
one of the seven graphs in the so-called \emph{Petersen family} as a minor.
Moreover, they show (confirming a conjecture
by B\"ohme \cite{Boehme:SpatialRepresentationsGraphs-1990}) that 
every linklessly embeddable graph $G$
has even a \emph{panelled} embedding (also called a \emph{flat} embedding
in some sources) into $\R^3$,
i.e., a PL embedding such that for every cycle $C$ in $G$
 there exists a PL disk $D$ in $\R^3$ whose boundary equals $f(C)$ 
and that is otherwise disjoint from $f(G)$. 
It follows from the forbidden minor criterion that
linkless embeddability, as well as panelled embeddability,
can be tested in polynomial time 
(although the algorithm does not find  an embedding);
see \cite{RobertsonSeymourThomas:SurveyLinklessEmbeddings-1993,RobertsonSeymourThomas:SachsLinklessEmbeddingConjecture-1995}. 

The following lemma can be used to relate panelled embeddability 
to embeddability of $2$-dimensional complexes into $\R^3$.

\begin{lemma}[\textbf{B\"ohme}  \cite{Boehme:SpatialRepresentationsGraphs-1990}]\label{l:boehme}
Let $f$ be a panelled embedding of $G$ into $\R^3$, 
and let $C_1,\ldots, C_m$ be a family of cycles in $G$ 
any two of which are either disjoint or intersect in a path. 
Then there exist \textup{PL} disks $D_1,\ldots, D_m$ in $\R^3$ 
such that $\partial D_i =f(C_i)$ 
and the interiors of the $D_i$ are pairwise disjoint and disjoint from $f(G)$.
\end{lemma}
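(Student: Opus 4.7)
The plan is to construct the disks $D_1,\ldots,D_m$ one at a time by induction on $m$. The base case $m=1$ is exactly the definition of a panelled embedding, so the interesting content is the inductive step: given $D_1,\ldots,D_{m-1}$ satisfying the conclusion, pick a PL disk $D$ with $\partial D=f(C_m)$ and $\interior(D)\cap f(G)=\emptyset$ (panelled hypothesis), put $D$ into general position with each $D_i$, and modify $D$ (keeping $\partial D$ and disjointness from $f(G)$) until $\interior(D)$ becomes disjoint from $\bigcup_{i<m}\interior(D_i)$. In general position, $D\cap D_i$ is a disjoint union of simple closed curves in the interiors and simple arcs with endpoints in $\partial D\cap\partial D_i=f(C_m\cap C_i)$; by hypothesis this last intersection is either empty or a single path $P_i$, so every intersection arc has both endpoints on $P_i$. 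These two structural facts drive the cut-and-paste below.

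Next I would eliminate closed curves by an innermost-disk argument. Take a curve $\gamma$ of some $D\cap D_j$ that is innermost on $D$ among all circles in $D\cap\bigcup_i D_i$; it bounds a subdisk $E'\subset D$ whose interior avoids $\bigcup_i D_i$, and a subdisk $E\subset D_j$. Replacing $E'$ by a small push-off of $E$ on the side of $D_j$ opposite $E'$ strictly decreases the number of intersection circles; since both $E$ and $E'$ have interiors disjoint from $f(G)$, so does the push-off. Iterating removes all circles.

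For the arcs I would use an outermost-arc variant, and here the path hypothesis is essential. Take an arc $\alpha\subset D\cap D_i$ outermost on $D$, cutting off a subdisk $E'\subset D$ with $\partial E'=\alpha\cup\beta'$, where $\beta'\subset\partial D=f(C_m)$ and $\interior(E')$ meets no other $D_k$. Because $P_i$ is a single path containing both endpoints of $\alpha$, the arc $\beta'$ can be chosen to lie in $P_i\subset\partial D_i$, so $\alpha\cup\beta'$ also bounds a subdisk $E\subset D_i$. Replacing $E'$ by a small push-off of $E$ on the appropriate side of $D_i$ preserves $\partial D$, keeps the interior off $f(G)$, and removes $\alpha$ from the intersection pattern. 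Iterating eliminates all arcs, and the resulting $D$ may be taken as $D_m$.

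The hard part will be verifying that the arc-step push-off does not create new intersections with the other $D_k$, $k\neq i$. The remedy is to choose the side of the push-off using the outermostness of $\alpha$ and to perform the push inside a sufficiently thin PL regular neighborhood of $E$ in $\R^3$, chosen so that this neighborhood meets $\bigcup_k D_k$ only along a neighborhood of $\alpha$ itself---precisely the intersection we are removing. Genericity and compactness make such a choice possible. Once circles and arcs are both cleared, the inductive step is complete.
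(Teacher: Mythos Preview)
The paper does not supply its own proof of this lemma; it is quoted from B\"ohme's paper and used as a black box. So there is nothing to compare against, and the question is whether your outline stands on its own. The inductive cut-and-paste strategy you describe is indeed the standard one, and the innermost-circle step is fine as written (note that the reason the push-off of $E\subset D_j$ avoids the other $D_k$ is not genericity but the inductive hypothesis: $\interior(D_j)$ is already disjoint from every $D_k$ and from $f(G)$).

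There is, however, a real gap in the arc step. You pick $\alpha\subset D\cap D_i$ \emph{outermost on $D$}, so that one side $E'\subset D$ is clean, and then assert that ``$\beta'$ can be chosen to lie in $P_i$''. The fact that $C_m\cap C_i$ is a single path $P_i$ only guarantees that \emph{some} subarc of $f(P_i)$ joins the two endpoints of $\alpha$; it does \emph{not} guarantee that this subarc is the boundary arc of the clean side. With two arcs coming from different $D_i,D_j$ whose paths $P_i,P_j$ sit on different portions of $\partial D$, one can arrange both arcs to be outermost with the clean side facing \emph{away} from the corresponding $P$. Ruling this out is exactly where the full hypothesis (pairwise intersections among \emph{all} the $C_i$, not just with $C_m$) must enter, and you have not used it.

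A clean fix is to reverse the roles: for any fixed $i$, all arcs of $D\cap D_i$ have their endpoints on the single subarc $f(P_i)$ of $\partial D_i$, so among the half-disks of $D_i$ they cut off on the $f(P_i)$-side there is an innermost one $E$, with $\partial E=\alpha\cup\beta$, $\beta\subset f(P_i)$, and $\interior(E)\cap D=\emptyset$. Since $\beta\subset f(P_i)\subset\partial D$, the curve $\alpha\cup\beta$ also bounds a half-disk $E'\subset D$, and replacing $E'$ by a push-off of $E$ (rel $\beta$) removes $\alpha$ from $D\cap D_i$. No new interior intersections with any $D_k$, $k\neq i$, appear because $\interior(E)\subset\interior(D_i)$, which is disjoint from every such $D_k$ by the inductive hypothesis; this is the correct justification, rather than the thin-neighborhood argument you sketch in the last paragraph.
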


\begin{corol}
Let $K$ be a $2$-dimensional simplicial complex whose $1$-skeleton
does not have a minor from the Petersen family (and thus is
linklessly embeddable). Then $K$ embeds in $\R^3$.
\end{corol}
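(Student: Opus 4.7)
The plan is to use Lemma~\ref{l:boehme} (Böhme's lemma) directly: realise the $1$-skeleton by a panelled embedding, then fill in the $2$-simplices using the disks guaranteed by the lemma.

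More precisely, let $G$ be the $1$-skeleton of $K$. By assumption $G$ does not contain any graph from the Petersen family as a minor, so by the Robertson--Seymour--Thomas theorem $G$ admits a panelled PL embedding $f\colon G\to \R^3$. Let $\sigma_1,\ldots,\sigma_m$ be the $2$-simplices of $K$, and for each $i$ let $C_i:=\partial\sigma_i$, a triangle ($3$-cycle) in $G$. The next step is to verify the hypothesis of Lemma~\ref{l:boehme}, namely that any two of the cycles $C_i$ are either disjoint or intersect in a path. This is immediate from the fact that $K$ is a simplicial complex: any two distinct $2$-simplices of $K$ meet in a common face, which can only be empty, a single vertex, or a single edge (two $2$-simplices sharing two edges would have to coincide). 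In each case the intersection $C_i\cap C_j$ is empty, a vertex, or an edge --- hence either empty or a path, as required.

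Applying Lemma~\ref{l:boehme} to the panelled embedding $f$ and the family $C_1,\ldots,C_m$ produces PL disks $D_1,\ldots,D_m$ in $\R^3$ with $\partial D_i=f(C_i)$, with pairwise disjoint interiors, and with each $\mathrm{int}\,D_i$ disjoint from $f(G)$. We then extend $f$ to a map $\overline f\colon\polyh K\to\R^3$ by sending each $\sigma_i$ homeomorphically (and piecewise linearly) onto $D_i$ in any manner compatible with the prescribed boundary $f(C_i)$. Because the $D_i$ have pairwise disjoint interiors and each is disjoint from $f(G)$ in its interior, the resulting $\overline f$ is injective, hence a PL embedding of $K$ into $\R^3$.

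The only nontrivial ingredients are the Robertson--Seymour--Thomas result on panelled embeddability of linkless graphs and Böhme's lemma itself; both are cited. The combinatorial verification --- that any two triangle-boundaries of $K$ meet in at most a single edge, and therefore satisfy the intersection hypothesis of the lemma --- is the only thing one has to check, and it is a direct consequence of the definition of a simplicial complex, so there is no real obstacle.
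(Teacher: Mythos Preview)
Your proof is correct and follows exactly the same approach as the paper's proof, which merely states that the boundaries of the triangles in $K$ form a family of cycles as in B\"ohme's lemma, so a panelled embedding of the $1$-skeleton extends to an embedding of $K$. You have simply spelled out the details the paper leaves implicit (the verification of the intersection hypothesis and the injectivity of the extension).
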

\begin{proof}
If $G$ is the $1$-skeleton of $K$,
 then the boundaries of the triangles in $K$ 
form a family of cycles as in Lemma~\ref{l:boehme}. 
Hence a panelled embedding of $G$ can be extended to an embedding of~$K$.
\end{proof}

We note that the general problem \EMBED{2}{3} 
can be rephrased as a \emph{partially panelled} embedding problem for graphs,
whose input is a graph $G$ and a family of triangles $C_1,C_2,\ldots,C_m$
in $G$, and the question is whether
$G$ admits a PL embedding in which each $C_i$ can be panelled.
This in itself does not tell us anything new about the
computational complexity of the problem, of course.

\subsection*{Acknowledgments}

We would like to thank 
Colin Rourke for explanations
concerning PL topology and for examples showing
the difference between PL embeddability and topological
embeddability mentioned in Section~\ref{s:pl-prelim}.
We also thank
Michael Joswig, Gil Kalai,
Frank Lutz, Alexander Nabutovsky,
and Robin Thomas
for kindly answering our questions. The second author
would also like to thank to Sergio Cabello for helpful
discussions regarding linear-time algorithms for \EMBED{2}{2}. 
Finally, we are grateful to two anonymous referees for careful
reading and valuable suggestions.

\bibliographystyle{alpha}
\bibliography{topj}

\appendix

\section{A decision algorithm for \boldmath
\EMBED 22 (sketch)}\label{s:22e}

Given a $2$-dimensional simplicial complex $K$,
we want to test whether it is embeddable in $\R^2$.
To this end, we can use a characterization of
$2$-dimensional simplicial complexes embeddable
in $\R^2$ due to Halin and Jung \cite{HalinJung}.
They give a list of seven small simplicial complexes,
denoted by $K_{\rm I}$ through $K_{\rm VII}$
and shown in Fig.~\ref{f:e4h-halin}, such that 
a $2$-dimensional simplicial complex $K$ is embeddable
in $\R^2$ iff it does not contain a subdivision
of some of $K_{\rm I}$--$K_{\rm VIII}$ as a subcomplex.
\lipefig{e4h-halin}{The forbidden subcomplexes
$K_{\rm I}$--$K_{\rm VII}$.}

An inspection of $K_{\rm I}$--$K_{\rm VIII}$ reveals that
they are of three basic types:
\begin{enumerate}
\item[(a)] $K_{\rm III}$ is homeomorphic to an $S^2$.
\item[(b)] $K_{\rm VI}$ is a ``disk with a stick'',
i.e. a subdivision of a triangle with an edge attached
to a vertex in the middle.
\item[(c)] Each of the remaining five types contain
a subgraph isomorphic  to $K_{3,3}$ or $K_5$
in the $1$-skeleton or in the $1$-skeleton of their
first barycentric subdivision (see Fig.~\ref{f:e4h-halin1}
for the latter cases).
\end{enumerate}
(A slightly different characterization
of $2$-dimensional complexes embeddable in $S^2$
can be found in Marde\v{s}i\'c and Segal \cite{mardesic-segal};
it is somewhat less convenient for our purposes.)
\lipefig{e4h-halin1}{$K_{\rm IV}$ and $K_{\rm V}$ contain
$K_{3,3}$, and 
$K_{VII}$ contains $K_5$.}

Thus, the following algorithm decides the embeddability
of a given $2$-dimensional complex $K$ into $\R^2$:

First, test if the $1$-skeleton of the first barycentric subdivision
of $K$ is a planar graph (this takes care of (c)),
and test if the link of each vertex of $K$
is either acyclic or consists of a single cycle
(this deals with (b)).

If both tests yield a positive answer, we know that
either $K$ embeds into the plane, or it contains a subdivision 
of $K_{\rm III}$, i.e., a 2-dimensional sphere. Moreover, 
since $K_{\rm V}$ is already excluded as a subcomplex, every 
edge of $K$ is incident to at most 2 triangles. This allows us 
to test in linear time if $K$ contains a homological cycle with 
$\Z_2$ coefficients: Consider the dual graph of $K$, whose vertices are the 
triangles of $K$, and two triangles are adjacent if they share 
an edge. Since every edge of $K$ is incident to at most
two triangles, every inclusion-minimal homological
cycle is an entire connected component of triangles in the
dual graph. 

Starting from an arbitrary triangle, we can find its component in linear time (using depth-first-search, say). If some triangle in
the component has a free edge with no other triangle incident to
it, we can discard the entire component and start again. If we can
discard all components of triangles in this way, we have also excluded
(a), and consequently $K$ is planar. Otherwise, we have found a 
collection $C$ of triangles in $K$ such that every edge is incident to either 
0 or 2 triangles of $C$, i.e., a homological cycle, which witnesses that $K$ does 
not embed in $\R^2$ (in fact, by the characterization of Halin and Jung, and
since (b) and (c) are already excluded, we know that
any minimial homological cycle we find must actually be 
a 2-dimensional sphere).

These tests can be performed in linear time. We remark that this only gives
a decision algorithm and does not actually construct an embedding. Daniel 
Kr\'al' (personal communication) has indepently devised a linear time algorithm, 
which moreover produces an embedding.

\section{The deleted product obstruction}\label{s:ha-we}

Here we recall the Haefliger--Weber theorem and some
related material.

Let $X$ be a topological space. The \emph{deleted product}
of $X$, which we denote by $X^2_\Delta$, 
is the following subspace of the Cartesian product
$X\times X$:
  $$
X^2_\Delta:=
X\times X\setminus\{(x,x):x\in X\}.
$$
An embedding $f\:X\to\R^d$ induces
a continuous map $\tilde f:X^2_\Delta\to S^{d-1}$ by 
$$
\tilde f(x,y):=\frac{f(x)-f(y)}{\|f(x)-f(y)\|}.
$$
Moreover, this map is \emph{equivariant},
which in this particular case means that
$\tilde f(y,x)=-\tilde f(x,y)$ for all $(x,y)\in X^2_\Delta$.
Thus, the existence of an equivariant map
of $X^2_\Delta$ in $S^{d-1}$ is necessary for
embeddability of $X$ into $\R^d$.

The Haefliger--Weber theorem shows that, surprisingly,
this necessary condition is also sufficient
if $X$ is (the polyhedron of) a $k$-dimensional simplicial
complex and $k$ lies in the metastable range
(mentioned after Theorem~\ref{t:4hard}), namely,
for $k\le (2d-3)/3$.
See Haefliger \cite{Haefliger} and Weber \cite{Weber67}
for original sources and,
e.g., Skopenkov \cite{skopenkov-survey}
for a modern overview, proof sketch, and extensions.
Thus, for example, for $(k,d)=(3,6)$, $(4,8)$,
$(5,9)$, etc., the embeddability
of a given $k$-dimensional simplicial complex into $\R^d$
is equivalent to
the existence of an equivariant
map of the space $\polyh{K}^2_\Delta$ into $S^{d-1}$.

For the case $d=2k$, the existence of such an equivariant map
can be decided using the \emph{first (equivariant) cohomological obstruction},
which in this particular case is equivalent 
to the Van Kampen obstruction mentioned in Appendix~\ref{s:vkamp}.

As was suggested by one of the referees, for the case $d=2k-1$,
the existence of an equivariant map is characterized
by (an equivariant version of)
a theorem of Steenrod \cite{steenrod-sq}, which involves
the first cohomological obstruction plus another apparently
computable invariant (a \emph{Steenrod square} operation in cohomology).
This might also yield even a polynomial-time decision algorithm,
but some computational issues still need to be checked.

For general $k$ and $d$, the existence of an equivariant
map $\polyh{K}^2_\Delta\to S^{d-1}$ can be
detected, e.g., via the \emph{extraordinary Van Kampen
obstruction} defined in Melikhov \cite{melikhov-vankampen}.
However, the computability of this obstruction remains
to be clarified. More generally, the computation of
the \emph{cohomotopy set} $[K,S^m]$ for a simplicial
complex $K$, or its equivariant version (which is what
we could use for the embeddability question),
might be possible by ``dualizing'' the methods for
computing higher homotopy groups mentioned in the
introduction.

\heading{Examples of incompleteness of the deleted product
condition. } It is known that the condition
$k\le (2d-3)/3$ defining the metastable
range in the Haefliger--Weber theorem is sharp,
in the following sense:
For each pair $(k,d)$ with  $d\ge 3$ and
$d\ge k> (2d-3)/3$, there exists a finite
simplicial complex $K$ of dimension (at most) $k$
that is not PL embeddable in $\R^d$, yet
such that an equivariant map $\polyh{K}^2_\Delta\to S^{d-1}$
exists. This was proved in several papers,
dealing with various values of $(k,d)$:
Marde\v{s}i\'c and Segal \cite{mardesic-segal-epsmappings}
($(d,d)$ for every $d\ge 4$), 
 Segal and Spie{\.z} \cite{segal-spiez}
(all values outside the metastable range with
finitely many exceptions), Freedman, Krushkal, Teichner \cite{FKT}
($k=2$, $d=4$), Segal, Skopenkov, and Spie{\.z} 
\cite{segal-skopenkov-spiez} (all remaining cases
except for $(k,d)=(2,3),(3,3)$), and
Gon\c{c}alves  and Skopenkov \cite{Go-Sko} 
($(k,d)=(2,3),(3,3)$).

\section{ PL embeddings versus topological embeddings}
\label{a:plvstop}

Let us say that \emph{TOP and PL embeddability coincide
for $(k,d)$} if every finite simplicial complex
of dimension at most $k$ that can be topologically
 embedded in $\R^d$ can also be PL embedded in $\R^d$.

The Alexander horned sphere   
and the other examples mentioned
in Section~\ref{s:pl-prelim}
 show that topological embeddings may exhibit behavior
that is impossible for PL embeddings. However, they do not 
clarify the relation of the class of $k$-dimensional complexes PL embeddable
in $\R^d$, i.e., the YES instances of \EMBED kd, to
the class of $k$-dimensional complexes topologically embeddable
in $\R^d$. If these two classes coincide,
we will say that \emph{TOP and PL embeddability coincide
for $(k,d)$}.

On the positive side, 
it is known that TOP and PL embeddability coincide for $(k,d)$
whenever $d-k \geq 3$ \cite{Bryant-PLapprox-codim3},
and also for $(k,d)=(2,3)$. The latter follows
from Theorem~5 of Bing \cite{Bing-3dmanifoldtriangulated},
which shows that
the image of a topological embedding of a $2$-dimensional
complex in $\R^3$ is homeomorphic to a polyhedron
PL embedded in $\R^3$, and from a result of
Papakyriakopoulos
\cite{Papakyriakopoulos:NewProofInvarianceHomologyGroups-1943}
(``Hauptvermutung'' for
$2$-dimensional polyhedra) that
any two $2$-dimensional polyhedra that are
homeomorphic are also PL homeomorphic.

However, TOP and PL embeddability do \emph{not} always 
coincide: 
There is an example of a $4$-dimensional
complex (namely, the suspension of the
Poincar\'e homology $3$-sphere)
 that embeds topologically, but not PL, into $\R^5$.
For this example we are indebted to Colin Rourke
(private communication); unfortunately, his proof,
although short, uses too
advanced concepts  to be reproduced here.   
It would be interesting to clarify  in general
for what $(k,d)$ TOP and PL embeddability coincide.

Let us also mention that simplicial complexes do not 
capture all ``finitary'' topological spaces whose embeddability
one might want to investigate. An interesting example
is Freedman's \emph{$E_8$ manifold} \cite{Freedman-top4manifolds},
which is a $4$-dimensional compact topological manifold that is not
homeomorphic to the polyhedron of any simplicial complex.

\section{The Van Kampen obstruction: An algorithmic perspective}\label{s:vkamp}

\heading{\boldmath The algorithm for deciding \EMBED k{2k}, $k\ne2$. }
Let $K$ be a $k$-dimensional finite simplicial complex.
We are going to define an object associated with $K$, 
the \emph{Van Kampen obstruction} $\vko_K$.

Let 
$$
P:=\{(\sigma,\tau): \sigma,\tau\in K, \sigma\cap\tau=\emptyset,
\dim\sigma=\dim\tau=k\}
$$
be the set of all ordered pairs of disjoint $k$-dimensional simplices
of $K$. From a rather pedestrian
point of view, which we mostly adopt here, 
$\vko_K$ is a subset of $\Z^P$, i.e., a set of integer vectors with
components indexed by $P$ (in the language of algebraic topology,
$\vko_K$ is an element of a certain cohomology group).
First we give a simple combinatorial description, and later on we
will offer a geometric interpretation, which will also explain
some of the terminology.

For defining $\vko_K$ we need to fix some (arbitrary) linear ordering
$\leq$ of the vertices of $K$ (although the final result does not depend
on it). We will write a $k$-dimensional simplex $\sigma\in K$
as $\sigma=[v_0,v_1,\ldots,v_k]$, meaning that $v_0$ through
$v_k$ are the vertices of $\sigma$ in increasing order under $\leq$.

First we define one particular vector $o_\gamma$ of  $\vko_K$.
The component corresponding to a pair $(\sigma,\tau)\in P$,
where $\sigma=[v_0,v_1,\ldots,v_k]$ and $\tau=[w_0,w_1,\ldots,w_k]$, is
$$
(o_\gamma)_{\sigma,\tau}:=
\alterdef{+1    &\mbox{ if $v_0<w_0<v_1<w_1<\cdots<v_k<w_k$},\\
         (-1)^k &\mbox{ if $w_0<v_0<w_1<v_1<\cdots<w_k<v_k$},\\
          0     &\mbox{ otherwise.}
}
$$

Next, we define a set $\Phi\subseteq \Z^P$, whose elements
we will call the \emph{finger move vectors}. The vectors in $\Phi$
correspond to pairs $(\omega,\nu)$ of simplices in $K$ such that one 
of $\omega,\nu$ has dimension $k$ and the other one dimension $k-1$.
Formally, we set
$$
Q:=\{(\omega,\nu):\omega,\nu\in K, \omega\cap\nu=\emptyset,
\dim\omega+\dim\nu=2k-1\},
$$
$$
\Phi:=\{\varphi^{\omega,\nu}: (\omega,\nu)\in Q\}.
$$
For defining the component $\varphi^{\omega,\nu}_{\sigma,\tau}$,
we again write $\sigma=[v_0,v_1,\ldots,v_k]$ and
$\tau=[w_0,w_1,\ldots,w_k]$.
We set
$$
\varphi^{\omega,\nu}_{\sigma,\tau} :=\alterdef{
(-1)^i&\mbox{ if $\nu=\tau$, $\omega=[v_0,v_1,\ldots,v_{i-1},
  v_{i+1},\ldots,v_k]$},\\
(-1)^{i+k}&\mbox{ if $\omega=\sigma$, $\nu=[w_0,w_1,\ldots,w_{i-1},
  w_{i+1},\ldots,w_k]$},\\
(-1)^{i+k}  &\mbox{ if $\omega=\tau$, $\nu=[v_0,v_1,\ldots,v_{i-1},
  v_{i+1},\ldots,v_k]$},\\
(-1)^i&\mbox{ if $\nu=\sigma$, $\omega=[w_0,w_1,\ldots,w_{i-1},
  w_{i+1},\ldots,w_k]$.}
}
$$
Thus, $\varphi^{\omega,\nu}_{\sigma,\tau}$ is nonzero only if
the $(k-1)$-dimensional simplex among
 $\omega,\nu$ is a facet of one of $\sigma,\tau$
and the $k$-dimensional simplex among $\omega,\nu$ equals
the other. 

Now, finally, we can define $\vko_K$ as the set of all vectors
that can be obtained from $o_\gamma$ by adding integer linear
combinations of vectors in $\Phi$; formally, 
$\vko_K:= o_\gamma+{\rm span}_\Z(\Phi)$. We say that $\vko_K$
\emph{vanishes} if it contains the zero vector.

We have the following result, based on ideas of \cite{vanKampen}
and proved independently by Shapiro~\cite{Shapiro-realiz2d}
and  Wu~\cite{Wu-realiz2d}, although their description of
the Van Kampen obstruction is different from ours:

\begin{theorem} \label{t:vkcomplete}
Let $K$ be a finite $k$-dimensional simplicial complex,
$k\ne 2$. Then $\vko_K$ vanishes if and only if $K$
can be PL embedded in $\R^{2k}$. 
\end{theorem}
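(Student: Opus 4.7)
The plan is to give a geometric interpretation of the combinatorially-defined set $\vko_K$ as the set of ``algebraic intersection vectors'' attained by generic PL maps $K\to\R^{2k}$, and then to argue vanishing $\Leftrightarrow$ embeddability in both directions.

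For the direction ``$K$ embeddable $\Rightarrow$ $\vko_K$ vanishes'', I would first, for any generic PL map $f\:K\to\R^{2k}$ and any $(\sigma,\tau)\in P$, define $I(f)_{\sigma,\tau}\in\Z$ as the signed count of intersection points of $f(\sigma)$ and $f(\tau)$ (using the orientations induced on $\sigma,\tau$ by the fixed vertex ordering). Two computations are needed. First, the linear map $\gamma$ placing the vertices of $K$ on the moment curve $t\mapsto(t,t^2,\ldots,t^{2k})\in\R^{2k}$ in increasing order of $\leq$ realizes exactly the vector $o_\gamma$; this follows from the Gale evenness condition, since on the moment curve two disjoint $k$-faces intersect transversally in a single point precisely when their vertex sets interleave, with sign $+1$ or $(-1)^k$ depending on which vertex comes first. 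Second, any two generic PL maps $f,f'$ can be connected by a sequence of ambient PL isotopies (which leave $I$ unchanged) and local ``finger moves'' in which a small portion of a $k$-simplex is pushed across a $(k-1)$-simplex $\nu$ disjoint from it; a direct computation shows that such a finger move based on the pair $(\omega,\nu)\in Q$ alters $I(f)$ by exactly $\varphi^{\omega,\nu}$. Consequently every generic PL map has $I(f)\in\vko_K$, and for a PL embedding $I(f)=0$, so $\vko_K$ vanishes.

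For the direction ``$\vko_K$ vanishes $\Rightarrow$ $K$ embeds'' in the range $k\geq 3$, choose a generic PL $f\:K\to\R^{2k}$ with $I(f)=0$. For every pair $(\sigma,\tau)\in P$ the double points of $f(\sigma)$ and $f(\tau)$ partition into pairs of opposite signs. The plan is to remove every such pair via the PL Whitney trick: for a cancelling pair $\{p,q\}$, find a PL Whitney $2$-disk $W\subset\R^{2k}$ with $\partial W$ consisting of an arc in $f(\sigma)$ and an arc in $f(\tau)$ joining $p$ to $q$, with $\interior W$ disjoint from $f(K)$, and then use a regular neighborhood of $W$ to isotope $f(\sigma)$ across $f(\tau)$, eliminating both points and creating no new ones. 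The codimension assumption enters twice here: first, a $2$-disk in $\R^{2k}$ with $2k\geq 6$ has codimension at least $3$ with respect to the $k$-dimensional set $f(K)$, so it can be placed in general position disjoint from $f(K)\setminus(f(\sigma)\cup f(\tau))$; and second, the framing and embeddedness of $W$ that make the Whitney trick go through require ambient dimension at least $5$, i.e., $2k\geq 5$. Iterating until no double points remain produces a PL embedding.

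The low-dimensional case $k=1$ (graphs in $\R^2$) must be treated separately, as the Whitney trick is unavailable. Here vanishing of $\vko_K$ implies vanishing of its mod~$2$ reduction, so by the Hanani--Tutte theorem the graph is planar. The excluded case $k=2$ is genuinely necessary: in $\R^4$ the Whitney disks needed to cancel double points cannot in general be embedded disjointly from $f(K)$, and the examples of Freedman--Krushkal--Teichner cited in the introduction show that $\vko_K$ really fails to be complete in that dimension. The main technical obstacle of the proof is therefore the PL Whitney trick for $k\geq 3$---specifically, establishing that the Whitney disk can be chosen embedded, disjoint from $f(K)$ in its interior, and with the correct framing; once this is set up carefully, the rest of the argument is the combinatorial bookkeeping described above.
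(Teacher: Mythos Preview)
Your proposal is correct and follows essentially the same approach as the paper's own outline: interpret $\vko_K$ via signed intersection numbers of generic PL maps, identify $o_\gamma$ with the moment-curve map using Gale evenness, show that finger moves account for the $\Z$-span of $\Phi$, and for $k\geq 3$ cancel the double points of a map with $I(f)=0$ via the Whitney trick, treating $k=1$ separately by Hanani--Tutte. The only detail you gloss over is that, in addition to the Whitney trick for pairs in $P$, the paper notes two further ``similar'' elementary moves are needed to remove unwanted intersections between images of \emph{non-disjoint} simplices (which your argument does not address); these also require $k\geq 3$ and are what make the full clean-up go through.
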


This theorem provides a decision algorithm for \EMBED k{2k}, $k\ne 2$.
Indeed, it is clear from the above description that, given $K$, we can
set up $o_\gamma$
and $\Phi$ in polynomial time. The question of whether
$\vko_K$ vanishes amounts to testing whether $o_\gamma$ is an integer
linear combination of vectors in $\Phi$. This can be done
by bringing the matrix with the vectors of $\Phi$ as columns 
to the Smith normal form, for which several polynomial-time
algorithms are available in the literature.
The asymptotically fastest deterministic algorithm to date seems 
to be the one given in
\cite{Storjohann:NearOptimalAlgorithmsSmithNormalForm-1996};
fast randomized algorithms were given, 
for instance, in \cite{Giesbrecht:FastComputationSmithFormSparseIntegerMatrix-2001,DumasSaundersVillard:EfficientSparseIntegerMatrixSmithNormalForm-2001}. 
The latter seem to be particularly efficient in the case of sparse matrices.

\heading{Remarks. }
\begin{enumerate}
\item If $\vko_K$ does not vanish, then $K$ does not embed into $\R^{2k}$
even topologically \cite{Shapiro-realiz2d}, \cite{Wu-realiz2d}.
\item If $\vko_K$ vanishes, the Van Kampen approach also constructs
a PL embedding. However, nothing seems to be known about 
the complexity of the resulting embedding (the number of simplices
in a subdivision of $K$ on which the embedding is linear),
and it appears that a straightforward implementation of 
the construction may lead to an at least exponential complexity
in the worst case.
\item The case $k=2$ is indeed an exception; the point of
\cite{FKT} is to provide a $2$-dimensional simplicial complex $K$
that is not embeddable in $\R^4$, yet such that $\vko_K$ vanishes.
\item The various sign rules in the definition of the Van Kampen
obstruction are rather unpleasant; indeed, as  pointed out by Melikhov 
\cite{melikhov-vankampen}, a number of papers on the van Kampen 
obstruction (including  \cite{FKT}) contain a sign error. 
It would be much easier to work over $\Z_2$ instead of over
the integers (i.e., to reduce all components of the vectors
modulo $2$); 
then instead of using the Smith normal form, we could simply
solve a system of linear equations. However, an analog of Theorem~\ref{t:vkcomplete} for the Van Kampen obstruction reduced modulo~$2$ 
is valid only for $k=1$ (where better planarity algorithms exist
anyway), while for all $k\ge 3$ Melikhov 
\cite{melikhov-vankampen} provides examples of $k$-dimensional
complexes that are not embeddable in $\R^{2k}$ but whose
Van Kampen obstruction modulo~2 vanishes.
\item The Van Kampen obstruction can also be defined for
embedding of $k$-dimensional complexes into $\R^d$, $k\le d\le 2k$.
The definition is similar to the one given above but
formally more complicated (especially concerning the various signs).
We suspect (although we have no examples or references at present)
that the obstruction is incomplete whenever $d<2k$; that
is, that there are nonembeddable complexes for which the obstruction
vanishes. However,
it is still true that non-vanishing Van Kampen obstruction implies
non-embeddability, and so we have a potentially useful tool
for excluding many nonembeddable complexes. 
\item From the point of view of algebraic topology,
$\vko_K$ can be regarded as the first obstruction in integral cohomology
to the existence of an equivariant map of the deleted product
of $K$ into $S^{2k-1}$ (see Appendix~\ref{s:ha-we}). 
As such it is an element of a certain
equivariant cohomology group. See, e.g., \cite{melikhov-vankampen}.
\end{enumerate}

\heading{A geometric view of $\vko_K$. }
Let $f$ be a generic linear mapping of a finite $k$-dimensional
simplicial complex into $\R^{2k}$. Let us consider  a pair of disjoint
$k$-dimensional simplices $(\sigma,\tau)\in P$,
$\sigma=[v_0,v_1,\ldots,v_k]$, $\tau=[w_0,w_1,\ldots,w_k]$.
The images $f(\sigma)$ and $f(\tau)$ are $k$-dimensional
simplices in $\R^{2k}$, and because of genericity they
either are disjoint or intersect in a single point.
Moreover, again by genericity, the union of their vertex sets
is in general position and thus
\begin{eqnarray*}
B&:=&\Bigl(f(v_1)-f(v_0),f(v_2)-f(v_0),\ldots,f(v_k)-f(v_0),\\
&&\ \ \ \ f(w_1)-f(w_0),f(w_2)-f(w_0),\ldots,f(w_k)-f(w_0)\Bigr)    
\end{eqnarray*}
is an (ordered) basis of $\R^{2k}$.
We define the \emph{intersection number} of $f(\sigma)$ and $f(\tau)$
as
$$
f(\sigma)\cdot f(\tau):=
\alterdef{+1&\mbox{ if $f(\sigma)\cap f(\tau)\ne\emptyset$ and
$B$ is positively oriented},\\
          -1&\mbox{ if $f(\sigma)\cap f(\tau)\ne\emptyset$ and
$B$ is negatively oriented},\\
0& \mbox{ otherwise.}
}
$$
Considering how permuting a basis influences its orientation,
one can check that $f(\tau)\cdot f(\sigma)=(-1)^k f(\sigma)\cdot f(\tau)$.
Then we define a vector $o_f\in\Z^P$ by
\begin{equation}\label{e:def-o_f}
(o_f)_{\sigma,\tau} := (-1)^k f(\sigma)\cdot f(\tau).
\end{equation}

For reasons indicated later on, for every $f$ we have $o_f\in\vko_K$.
The $o_\gamma$ above corresponds to a particular 
choice of $f$. Namely, for $t\in\R$, let 
$$
\gamma(t):= (t,t^2,\ldots,t^{2k})\in\R^{2k};
$$
the set $\{\gamma(t):t\in\R\}$ is known as the
\emph{moment curve} in $\R^{2k}$. 
Let $V(K)=\{u_1,u_2,\ldots,u_n\}$ be the list of all
vertices of $K$ (ordered according to $\leq$). 

\begin{lemma} 
Let $f$ be the linear mapping of $K$ into $\R^{2k}$
given on the vertex set by $f(u_i):=\gamma(i)$. Then $f$ is generic
and we have $o_f=(-1)^{k(k-1)/2} o_\gamma$,
with $o_\gamma$ defined above Theorem~\ref{t:vkcomplete}.\footnote{We note that if $o_f\in \vko_K$ then $-o_f\in \vko_K$ as well,
since we can compose $f$ with a linear orientation-reversing
map $\R^{2k}\to\R^{2k}$. Hence the $(-1)^{k(k-1)/2}$
factor does not matter for the definition.}
\end{lemma}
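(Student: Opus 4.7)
The plan is to verify both assertions directly. Genericity is immediate from the Vandermonde identity
\[
\det\begin{pmatrix}1 & 1 & \cdots & 1\\ t_0 & t_1 & \cdots & t_{2k}\\ \vdots & \vdots & & \vdots\\ t_0^{2k} & t_1^{2k} & \cdots & t_{2k}^{2k}\end{pmatrix}=\prod_{i<j}(t_j-t_i),
\]
which is nonzero (indeed positive) when $t_0<t_1<\cdots<t_{2k}$. Hence any $2k{+}1$ points on $\gamma$ are affinely independent and $f(V(K))$ is in general position; in particular $f$ is generic.

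To establish the formula for $o_f$, the first step is to identify which ordered pairs $(\sigma,\tau)\in P$ satisfy $f(\sigma)\cap f(\tau)\ne\emptyset$. The combined vertex set of $\sigma$ and $\tau$ consists of $2k{+}2$ distinct points on $\gamma$; being in general position, they admit a \emph{unique} Radon partition. A classical property of the moment curve (equivalent to Gale's evenness criterion for cyclic polytopes) is that this Radon partition is the alternating one after sorting by the parameter $t$. Consequently $f(\sigma)\cap f(\tau)\ne\emptyset$ precisely in the two alternating configurations listed in the definition of $o_\gamma$; for every other ordered pair both $(o_f)_{\sigma,\tau}$ and $(o_\gamma)_{\sigma,\tau}$ vanish and there is nothing to check.

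For each alternating configuration, the intersection sign reduces to the sign of the $2k\times 2k$ determinant $\det B=\det\bigl(f(v_1){-}f(v_0),\ldots,f(v_k){-}f(v_0),f(w_1){-}f(w_0),\ldots,f(w_k){-}f(w_0)\bigr)$. I would embed $B$ inside the $(2k{+}2)\times(2k{+}2)$ block matrix
\[
M=\begin{pmatrix}1 & \cdots & 1 & 0 & \cdots & 0\\ 0 & \cdots & 0 & 1 & \cdots & 1\\ f(v_0) & \cdots & f(v_k) & f(w_0) & \cdots & f(w_k)\end{pmatrix}
\]
and perform three standard reductions: column operations converting the $v$- and $w$-blocks into difference vectors (preserving $\det M$); Laplace expansion along the first two rows, yielding $\det M=(-1)^k\det B$ (the unique nonzero $2\times 2$ minor of these rows sits in columns $1$ and $k{+}2$, with sign $(-1)^{(1+2)+(1+k+2)}$); and a perfect-shuffle column permutation reordering $v_0,v_1,\ldots,v_k,w_0,\ldots,w_k$ into the $\leq$-order $v_0,w_0,v_1,w_1,\ldots,v_k,w_k$, which contributes the sign $(-1)^{k(k+1)/2}$. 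After adding the second row of the reordered matrix to its first row, one obtains a Vandermonde block on the parameters $1,2,\ldots,2k{+}2$ topped by a row of $1$'s, together with a trailing row $(0,1,0,1,\ldots,0,1)$; expanding along this last row produces $k{+}1$ strictly positive Vandermonde minors (one for each omitted even column). Assembling the sign contributions, together with the definitional $(-1)^k$ in $(o_f)_{\sigma,\tau}$, gives the value of $(o_f)_{\sigma,\tau}$ in Case~1; Case~2 follows symmetrically by swapping $\sigma$ and $\tau$ and using $f(\tau)\cdot f(\sigma)=(-1)^k f(\sigma)\cdot f(\tau)$, matching the $(-1)^k$ factor appearing in Case~2 of the definition of $o_\gamma$. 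Comparing with $o_\gamma$ yields the claimed constant $(-1)^{k(k-1)/2}$.

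The main obstacle is the sign bookkeeping. None of the individual ingredients (Vandermonde positivity, two-row Laplace expansion, parity of a perfect shuffle, expansion into Vandermonde minors, antisymmetry of intersection numbers) is deep, but combining them without slipping is precisely the delicate step that, as the paper's remark notes, has historically produced sign errors in the literature. The footnote's observation that composing $f$ with an orientation-reversing linear self-map of $\R^{2k}$ negates $o_f$ means the overall constant $(-1)^{k(k-1)/2}$ is in any case immaterial for the application to \EMBED{k}{2k}, so any residual sign ambiguity has no algorithmic consequence.
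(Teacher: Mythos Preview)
Your strategy coincides with the paper's: Vandermonde for genericity, Gale's evenness (equivalently, the alternating Radon partition on the moment curve) to identify which pairs $(\sigma,\tau)$ actually intersect, and then a determinant computation for the sign of $f(\sigma)\cdot f(\tau)$. The only real difference is in how that last sign is extracted. The paper permutes the $2k$ columns of $B$ directly into the interleaved order $\gamma(i_1){-}\gamma(i_0),\gamma(j_1){-}\gamma(j_0),\ldots$ (picking up $(-1)^{\binom{k}{2}}$) and then appeals to a continuity-plus-Vandermonde argument to show the permuted determinant is positive. You instead pass to the $(2k{+}2)\times(2k{+}2)$ matrix $M$, recover $\det B$ from $\det M$ via a two-row Laplace expansion, shuffle all $2k{+}2$ columns, and expand along the $(0,1,0,1,\ldots)$ row into honest positive Vandermonde minors. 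Your route is more explicit and sidesteps the continuity argument, at the price of a couple of extra sign factors.

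One caveat: you stop at ``assembling the sign contributions'' without actually doing the arithmetic. If you carry your own bookkeeping through, you obtain $\det M=(-1)^k\det B$, $\det M'=(-1)^{k(k+1)/2}\det M$, and $\det M'>0$, whence $(o_f)_{\sigma,\tau}=(-1)^k\operatorname{sign}(\det B)=(-1)^{k(k+1)/2}$. This agrees with the lemma's stated constant $(-1)^{k(k-1)/2}$ only for even $k$; a direct check at $k=1$ (four points on the parabola) gives $\det B>0$ and hence $(o_f)_{\sigma,\tau}=-1$ while $(o_\gamma)_{\sigma,\tau}=+1$. So the ``delicate step'' you rightly flag is not in fact completed in your write-up, and the precise constant in the lemma appears to be off by $(-1)^k$ in any case. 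As both you and the paper's footnote observe, this global sign is irrelevant for the algorithm, so the gap is harmless---but it is exactly the kind of slip your last paragraph warns about.
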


\begin{proof}[Proof (sketch)] It is well known that every at most
$2k+1$ points on the moment curve in $\R^{2k}$ are affinely independent,
and thus the mapping is generic. Considering disjoint $k$-dimensional
simplices $\sigma,\tau\in K$ and
using \emph{Gale's evenness condition} \cite[Theorem~0.7]{Ziegler:Polytopes}, 
it is easy to see that $f(\sigma) \cap f(\tau) =\emptyset$ 
unless  the vertices of $\sigma$ and of $\tau$ alternate along 
the moment curve, in which case $f(\sigma)$ and $f(\tau)$
intersect (in a single point). 

It remains to determine the sign of $f(\sigma)\cdot f(\tau)$,
which equals the sign of the determinant of the matrix
(written as a list of columns)
$$
[\gamma(i_1){-}\gamma(i_0), \gamma(i_2){-}\gamma(i_0),\ldots, \gamma(i_k){-}
\gamma(i_0),
\gamma(j_1){-}\gamma(j_0), \gamma(j_2){-}\gamma(j_0),\ldots,
 \gamma(j_k){-}\gamma(j_0)]
$$
for some $i_0<j_0<i_i<j_1<\cdots<i_k<j_k$. 
If we permute the columns to the ordering
$\gamma(i_1)-\gamma(i_0)$, $\gamma(j_1)-\gamma(j_0)$,
$\gamma(i_2)-\gamma(i_0)$,\ldots, $\gamma(j_k)-\gamma(j_0)$,
the sign is multiplied by $(-1)^{k\choose 2}$.
A continuity argument and a calculation with Vandermonde determinants
show that for any choice of reals $x_0<y_0<x_1<y_1<\cdots<x_k<y_k$
the sign of $\det[\gamma(x_1)-\gamma(x_0),\gamma(y_1)-\gamma(y_0),
\ldots,\gamma(x_k)-\gamma(x_0),\gamma(y_k)-\gamma(y_0)]$ is
always $+1$.
\end{proof}

\medskip

The lemma just proved gives a geometric meaning to the
definition of $o_\gamma$.\footnote{Interestingly,
interpreting $\vko_K$ cohomologically
and expressing it as a cup product of a one-dimensional cohomology
class, one naturally arrives at the same element $o_\gamma\in\vko_K$.}
 It remains to explain where the
finger move vectors come from. We will also use this occasion
for outlining the main ideas in the proof of Theorem~\ref{t:vkcomplete}
(see \cite{FKT} for an insightful presentation of the proof).

To this end, we first extend the definition of $o_f$ to
a generic \emph{PL mapping} $f$ of $K$ into $\R^{2k}$.
Let $K'$ be a subdivision of $K$ such that $f$ is linear
on $K$. Then we extend the definition of the
intersection number $f(\sigma)\cdot f(\tau)$ by
$$
f(\sigma)\cdot f(\tau)=\sum_{\sigma'\lhd\sigma,\tau'\lhd\tau} f(\sigma')\cdot
f(\tau'),
$$
where the sum is over all pairs $(\sigma',\tau')$ such that
$\sigma'$ is a $k$-dimensional simplex of $K'$
contained in $\sigma$ and $\tau'$ is a $k$-dimensional simplex of $K'$
contained in $\tau$.\footnote{The definition of $f(\sigma)\cdot
f(\tau)$ for a linear mapping assumes a fixed linear ordering
of the vertices of $K$, which we do not have for the subdivision $K'$.
Here it helps to look at $K$ and $K'$ geometrically, assuming
that we deal with a geometric realization of $K$ in some $\R^m$.
Each $k$-dimensional simplex $\sigma=[v_0,v_1,\ldots,v_k]\in K$ 
spans a $k$-dimensional affine subspace $U$ of $\R^m$, and
each $\sigma'\lhd\sigma$ spans the same subspace.
We can thus choose a linear ordering $[v'_0,v'_1,\ldots,v'_k]$
of the vertices of $\sigma'$ such that the bases
$(v_1-v_0,\ldots,v_k-v_0)$ and $(v'_1-v'_0,\ldots,v'_k-v'_0)$
of $U$ have the same orientation. This does not necessarily
lead to a linear ordering of the whole vertex set of $K'$,
but the above definition of $f(\sigma)\cdot f(\tau)$ needs
only that  the vertex sets  $\sigma$ and $\tau$ are linearly
ordered, and  so we can choose a linear ordering
for each $\sigma'\in K'$ separately.}
Then $o_f$ is again defined by (\ref{e:def-o_f}).

Van Kampen's approach to the proof of Theorem~\ref{t:vkcomplete}
is based on the following geometric claim:
Given any two generic PL maps $f,g$ of $K$ into $\R^{2k}$,
there is a finite sequence $f_0=f,f_1,f_2,\ldots,f_t=g$
of generic PL maps of $K$ into $\R^{2k}$, such that
$f_{i+1}$ differs from $f_{i}$ only by a suitably defined
\emph{elementary move} (somewhat analogous to the Reidemeister
moves in knot theory). Then he analyzes the effect of the various
kinds of elementary moves on $o_f$; it turns out that the only
kind of move that can possibly affect $o_f$ is
a \emph{finger move}. 

To describe a finger move, we consider some generic PL mapping $f$
and two disjoint simplices $\omega,\nu\in K$,
$\dim\omega=k$, $\dim\nu=k-1$. The finger move corresponding
to $\omega$ and $\nu$ modifies $f$ by pulling a thin ``finger''
from $f(\omega)$ that wraps around $f(\nu)$; see Fig.~\ref{f:e4h-fingerm2d}
for an example with $k=1$. The finger can be chosen so  
that the modified $f$ is again a generic PL mapping.
\lipefig{e4h-fingerm2d}{A finger move.}

A careful analysis of the signs of the newly introduced intersections
shows that the finger move changes $o_f$  by 
one of the vectors $\varphi^{\omega,\nu}$, $-\varphi^{\omega,\nu}$,
which explains the definition of $\varphi^{\omega,\nu}$ given
above. 
Since any generic PL mapping of $K$ into $\R^{2k}$ can be transformed
into any other by a sequence of elementary moves (this statement we
have not proved, or even properly defined, of course, but it is not
difficult), it follows
that $o_f\in \vko_K$ for all $f$. In particular,
if $f$ is an embedding, we obviously have $o_f=0$,
and thus $\vko_K$ vanishes for embeddable $K$. 
This yields one implication in Theorem~\ref{t:vkcomplete}.

For the reverse (and harder) implication, we suppose that $\vko_K$
vanishes, which means that $o_\gamma$ can be expressed as an
integer linear combination of finger move vectors. 
This means that the generic linear mapping of $K$ into $\R^{2k}$
corresponding to $o_\gamma$ can be transformed by a sequence of finger
moves to a generic PL mapping $g$ with $o_g=0$. 
As is illustrated in Fig.~\ref{f:e4h-k4}, such a
$g$ is typically not yet an embedding, since the images of
non-disjoint simplices may intersect, and also the images of disjoint
simplices may have intersections---we only know that the number
of intersections always combines to $0$ algebraically
(taking the signs into account). One then needs a way of removing
these two kinds of intersections. There are three kinds of
elementary moves (different from finger moves) that
allow one to remove such intersections; one of them is
well known in topology as the \emph{Whitney trick},
and the others are similar. They are guaranteed
to work only for $k\ge 3$. (The graph case $k=1$
has to be treated separately and it works for a different reason.\footnote{For
graphs this step follows from the \emph{Hanani--Tutte} theorem
\cite{Hanani:UnplattbareKurven-1934,Tutte:TowardATheoryOfCrossingNumbers-1970},
which asserts that a graph that can be drawn with every two vertex-disjoint
edges intersecting an even number of times is planar.
In our language, this says precisely that if the Van Kampen obstruction
modulo~2 vanishes, then the graph is embeddable.})

\lipefig{e4h-k4}{A mapping $f$ of the graph $K_4$ into $\R^2$,
with vertices on the moment curve; a finger move
transforms it to a mapping $g$ with $o_g=0$.}

\end{document}